\providecommand{\U}[1]{\protect\rule{.1in}{.1in}}
\newtheorem{theorem}{Theorem}
\newtheorem{definition}[theorem]{Definition}
\newtheorem{example}[theorem]{Example}
\newtheorem{lemma}[theorem]{Lemma}
\newtheorem{proposition}[theorem]{Proposition}
\newtheorem{remark}[theorem]{Remark}
\newenvironment{proof}[1][Proof]{\textbf{#1.} }{\ \rule{0.5em}{0.5em}}
\begin{document}

\title{\textbf{On the geometrical interpretation of locality in anomaly cancellation}}
\date{}
\author{\textsc{Roberto Ferreiro P\'{e}rez}\\Departamento de Econom\'{\i}a Financiera y Contabilidad I\\Facultad de Ciencias Econ\'omicas y Empresariales, UCM\\Campus de Somosaguas, 28223-Pozuelo de Alarc\'on, Spain\\\emph{E-mail:} \texttt{roferreiro@ccee.ucm.es}}
\maketitle

\begin{abstract}
A notion of local section of the determinant line bundle is defined giving
necessary and suficient conditions for anomaly cancellation compatible with
locality. This definition gives an intrinsic geometrical interpretation of the
local counterterms allowed in the renormalization program of quantum field
theory. For global anomalies the conditions for anomaly cancellation are
expressed in terms of the equivariant holonomy of the Bismut-Freed connection.

\end{abstract}

\noindent\emph{Mathematics Subject Classification 2010:\/} 53C80, 58A20,
58J52, 81Q70.

\smallskip

\noindent\emph{Key words and phrases:\/ }Anomaly cancellation, equivariant
holonomy, local cohomology.

\section{Introduction}

Anomalies in quantum field theory admit a geometrical interpretation in terms
of determinant (or Pfaffian) line bundles (e.g. see \cite{ASZ}, \cite{AS},
\cite{freed}). In order to have a well defined quantum field theory the
determinant line bundle should be trivial. However, it is well known that this
condition is not sufficient for anomaly cancellation due to the locality
problem. Hence, to cancel the anomaly the determinant line bundle should admit
a special kind of section (a local section) corresponding to the local
counterterms allowed in the renormalization program of quantum field theory.
In this paper we study the geometrical interpretation of these local sections
in terms of the Bismut-Freed connection and we obtain necesary and sufficient
conditions for perturbative and global anomaly cancellation.

Let us explain in more detail the locality problem. We consider the action of
a group $\mathcal{G}$ on a bundle $E\rightarrow M$ over a compact $n$-manifold
$M$. Let $\{D_{s}:s\in\Gamma(E)\}$ be a $\mathcal{G}$-equivariant family of
elliptic operators acting on fermionic fields $\psi\in\Gamma(V)$ and
parametrized by $\Gamma(E)$. Then the Lagrangian density $\lambda_{D}%
(\psi,s)=\bar{\psi}iD_{s}\psi$ is $\mathcal{G}$-invariant, and hence the
classical action $\mathcal{A}_{\mathcal{L}}(\psi,s)=\int_{M}\bar{\psi}%
iD_{s}\psi$, is a $\mathcal{G}$-invariant function on $\Gamma(V)\times
\Gamma(E)$. However, at the quantum level, the corresponding partition
function defined by a formal fermionic path integral by $\mathcal{Z}%
(s)=\int\mathcal{D}\psi\mathcal{D}\bar{\psi}\exp\left(  -\int_{M}\bar{\psi
}iD_{s}\psi\right)  $ could fail to be $\mathcal{G}$-invariant. $\mathcal{Z}%
(s)$ can be defined in terms of regularized determinants of elliptic operators
but not in a unique way. There is an ambiguity in the definition of
$\mathcal{Z}(s)$ modulo the addition of local counterterms (e.g. see
\cite{Perrot}). Due to this ambiguity $\mathcal{Z}(s)$ could fail to be
$\mathcal{G}$-invariant. It can be seen that\ the modulus of $\mathcal{Z}(s)$
is $\mathcal{G}$-invariant. Hence we have $\mathcal{Z}(\phi\cdot
s)=\mathcal{Z}(s)\cdot\exp(2\pi i\cdot\alpha_{\phi}(s))$ where $\alpha
\colon\mathcal{G}\times\Gamma(E)\rightarrow\mathbb{R}/\mathbb{Z}$ satisfies
the cocycle condition $\alpha_{\phi_{2}\phi_{1}}(s)=\alpha_{\phi_{1}%
}(s)+\alpha_{\phi_{2}}(\phi_{1}s)$. Different definitions of $\mathcal{Z}(s)$
determine different cocycles, but they are cohomologous, in the sense that
they satisfy the condition $\alpha_{\phi}^{\prime}=\alpha_{\phi}+\phi^{\ast
}\theta-\theta$ for some $\theta\in\Omega^{0}(\Gamma(E))$. If $\alpha$ is an
exact cocycle, i.e., if there exists $\Lambda\in\Omega^{0}(\Gamma(E))$
satisfying
\begin{equation}
\alpha_{\phi}(s)=\Lambda(\phi\cdot s)-\Lambda(s) \tag{*}\label{Lambda}%
\end{equation}
we can define $\mathcal{Z}^{\prime}=\mathcal{Z}\cdot\exp(-2\pi i\Lambda)$ and
we have $\mathcal{Z}^{\prime}(\phi\cdot s)=\mathcal{Z}^{\prime}(s)$. Hence the
anomaly can be represented by a cohomology class in $H^{1}(\mathcal{G}%
,\Omega^{0}(\Gamma(E),\mathbb{R}/\mathbb{Z}))\simeq H^{1}(\mathcal{G}%
,\Omega^{0}(\Gamma(E))/\mathbb{Z})$ (e.g. see \cite{Blau}, \cite{Ekstrand},
\cite{Falqui}). For perturbative anomalies the group cohomology can be
replaced by Lie algebra cohomology. For $X\in\mathrm{Lie}\mathcal{G}$ we
define $\mathfrak{a}(X)=\left.  \frac{\delta\alpha_{\phi_{t}}}{\delta
t}\right\vert _{t=0}$ with $\phi_{t}=\exp(tX)$. If $\mathcal{G}$ is connected
condition (\ref{Lambda}) is equivalent to $\mathfrak{a}(X)=L_{X}\Lambda$.
Hence the condition for perturbative anomaly cancellation is equivalent to
$[\mathfrak{a}]=0\ $on $H^{1}(\mathrm{Lie}\mathcal{G},\Omega^{0}(\Gamma(E)))$
($\mathfrak{a}$ is closed by the Wess-Zumino consistency condition).

However, from the physical point of view that is not the end of the story.
Physics require that $\mathcal{Z}^{\prime}$ should be the fermionic path
integral of a Lagrangian density, and hence $\Lambda(s)$ should be a
\emph{local} functional, i.e., it should be of the form $\Lambda(s)=\int
_{M}\lambda(s)$, where $\lambda(s)(x)$ is a function of $s(x)$ and the
derivatives of $s$ at $x$. If that is the case, we can modify the Lagrangian
density to the effective Lagrangian $\mathcal{L}^{\prime}(s)=\bar{\psi}%
iD_{s}\psi-\lambda(s)$ and the partition function of $\mathcal{L}^{\prime}$ is
$\mathcal{Z}^{\prime}$. We say that the topological anomaly cancels if
condition (\ref{Lambda})\ is satisfied for a functional $\Lambda\in\Omega
^{0}(\Gamma(E))$, and that the physical anomaly cancels if condition
(\ref{Lambda}) is satisfied for a local functional $\Lambda\in\Omega
_{\mathrm{loc}}^{0}(\Gamma(E))$. Obviously the second condition implies the
first, but the converse is not true. Furthermore, if condition (\ref{Lambda})
is satisfied only for the connected component of the identity $\mathcal{G}%
_{0}$ on $\mathcal{G}$ we say that the perturbative\ (or local) anomaly
cancels. If it is satisfied for all the elements of $\mathcal{G}$ we say that
the global anomaly cancels. Hence the perturbative physical anomaly is
represented by a cohomology class in the local BRST cohomology $H^{1}%
(\mathrm{Lie}\mathcal{G},\Omega_{\mathrm{loc}}^{0}(\Gamma(E)))$ defined in
\cite{bonora}, and the global physical anomaly by a class in $H^{1}%
(\mathcal{G},\Omega_{\mathrm{loc}}^{0}(\Gamma(E))/\mathbb{Z})$.

The condition (\ref{Lambda}) admits the following geometrical interpretation.
The cocycle $\alpha$ determines an action on the trivial bundle $\mathcal{L}%
=\Gamma(E)\times\mathbb{C}\rightarrow\Gamma(E)$ by setting $\phi_{\mathcal{U}%
}(s,u)=(\phi(s),u\cdot\exp(2\pi i\alpha_{\phi}(s)))$ for $s\in\Gamma(E)$ and
$u\in\mathbb{C}$. If the action of $\mathcal{G}$ on $\Gamma(E)$ is free we can
consider the quotient bundle $\underline{\mathcal{L}}=(\Gamma(E)\times
\mathbb{C})/\mathcal{G}\rightarrow\Gamma(E)/\mathcal{G}$, and $\mathcal{Z}$
determines a section of $\underline{\mathcal{L}}$. Furthermore, if $\Lambda$
satisfies equation (\ref{Lambda}), then $\exp(2\pi i\Lambda)$ determines a
section of unitary norm of $\underline{\mathcal{L}}$, i.e., a section of the
principal $U(1)$-bundle $\underline{\mathcal{U}}=(\Gamma(E)\times
U(1))/\mathcal{G}\rightarrow\Gamma(E)/\mathcal{G}$. Hence topological anomaly
cancellation is equivalent to the existence of a section of $\underline
{\mathcal{U}}\rightarrow\Gamma(E)/\mathcal{G}$, and hence to the triviality of
$\underline{\mathcal{L}}$.

In \cite{AS} the bundle $\underline{\mathcal{L}}$ is identified with the
determinant line bundle of the family of operators. If $\mathcal{G}$ is
connected (i.e. for perturbative anomalies) a necessary and sufficient
condition for topological anomaly cancellation is that $c_{1}(\underline
{\mathcal{L}})=0$ on $H^{2}(\Gamma(E)/\mathcal{G})$. The advantage of this
approach to anomalies is that the Atiyah-Singer Index Theorem gives an
explicit expression for $c_{1}(\underline{\mathcal{L}})$ in terms of
characteristic classes. Furthermore, it gives the curvature $\mathrm{curv}%
(\underline{\Xi})$ of the Bismut-Freed connection $\underline{\Xi}$\ on
$\underline{\mathcal{L}}$ (e.g. see \cite{BF1}). Note the difference with the
approach based on group and Lie algebra cohomology, where $\alpha$ and
$\mathfrak{a}$ are defined only modulo exact terms and given by complicated
expressions on secondary characteristic classes. This approach also gives a
geometrical interpretation of the anomaly as a cohomology class in
$\Gamma(E)/\mathcal{G}$, and allows the use of topological tools\ in the study
of anomaly cancellation. However, due to the locality problem, the
cancellation of topological anomalies only gives necessary conditions for
physical anomaly cancellation, but they are not sufficient, i.e., anomalies in
field theory can exist even if the corresponding topological anomaly is
trivial. In order to take into account locality, it is proposed in
\cite{singer} (see also \cite{ASZ}) the problem of defining a notion of
\textquotedblleft local cohomology\textquotedblright\ giving necessary and
sufficient condition for physical anomaly cancellation. This problem was
solved in \cite{anomalies} for perturbative anomalies by the introduction of
local equivariant cohomology. In place of working with the cohomology of the
quotient $H^{2}(\Gamma(E)/\mathcal{G})$ we can also\ consider the
$\mathcal{G}$-equivariant cohomology $H_{\mathcal{G}}^{2}(\Gamma(E))$. For
free actions we have $H^{2}(\Gamma(E)/\mathcal{G})\simeq H_{\mathcal{G}}%
^{2}(\Gamma(E))$, and we can consider $\mathcal{L}=\Gamma(E)\times
\mathbb{C}\rightarrow\Gamma(E)$ as a $\mathcal{G}$-equivariant line bundle and
$\mathcal{U}=\Gamma(E)\times U(1)\rightarrow\Gamma(E)$ as a $\mathcal{G}%
$-equivariant $U(1)$-bundle. Furthermore, the $\mathcal{G}$-equivariant
curvature $\mathrm{curv}_{\mathcal{G}}(\Xi)$ of the Bismut-Freed connection
$\Xi$ on $\mathcal{L}$ is given by the equivariant Atiyah-Singer Index Theorem
(see \cite{FreedEqui}). One of the advantages of equivariant cohomology is
that it is also well defined for non-free actions. But the most important
advantage of $H_{\mathcal{G}}^{2}(\Gamma(E))$ with respect to $H^{2}%
(\Gamma(E)/\mathcal{G})$ is that $\mathrm{curv}_{\mathcal{G}}(\Xi)$ is a
\emph{local form}, whereas $\mathrm{curv}(\underline{\Xi})$ is non-local. In
\cite{anomalies} the notions of local forms $\Omega_{\mathrm{loc}}^{\bullet
}(\Gamma(E))$ and local equivariant forms $\Omega_{\mathrm{loc},\mathcal{G}%
}^{\bullet}(\Gamma(E))$ are defined in terms of the jet bundle of $E$. For
Gauge and gravitational anomalies we have $\mathrm{curv}_{\mathcal{G}}(\Xi
)\in\Omega_{\mathrm{loc}\mathbf{,}\mathcal{G}}^{2}(\Gamma(E))$. Furthermore,
the cancellation of the class of $\mathrm{curv}_{\mathcal{G}}(\Xi)$ on
$H_{\mathrm{loc}\mathbf{,}\mathcal{G}}^{2}(\Gamma(E))$ is equivalent to the
cancellation of the perturbative physical anomaly. This approach provides new
techniques for the study of anomaly cancellation as the local cohomology
$H_{\mathcal{G},\mathrm{loc}}^{2}(\Gamma(E))$ is very different to the
cohomology $H^{2}(\Gamma(E)/\mathcal{G})$ of the quotient space. It is shown
in \cite{VB} and \cite{anomalies} that $H_{\mathcal{G},\mathrm{loc}}%
^{2}(\Gamma(E))$ is related to the equivariant cohomology of jet bundles and
Gelfand-Fuks cohomology of formal vector fields.

The objective of this paper is to give a geometrical interpretation of the
preceding results and to generalize the results of \cite{anomalies} to global
anomalies. Our starting point for the study of anomaly cancellation is the
unitary determinant bundle $\mathcal{U}\rightarrow\Gamma(E)$ corresponding to
a $\mathcal{G}$-equivariant family of elliptic operators \cite{BF1}. We
consider $\mathcal{U}\rightarrow\Gamma(E)$ as a $\mathcal{G}$-equivariant
$U(1)$-bundle and the Bismut-Freed connection $\Xi$ is $\mathcal{G}%
$-invariant. We assume that $\mathcal{U}\rightarrow\Gamma(E)$ is a
topologically trivial bundle and hence admits global sections. To any section
$S$\ of $\mathcal{U}$ we associate a group cocycle $\alpha^{S}$ and a Lie
algebra cocycle $\mathfrak{a}^{S}$. In this way the different expressions of
the cocyle $\alpha$ and the integrated anomaly $\mathfrak{a}$ obtained from
perturbation theory correspond to different sections of $\mathcal{U}%
\rightarrow\Gamma(E)$. Furthermore, $S$ determines a trivialization of
$\mathcal{U}\rightarrow\Gamma(E)$, and in this trivialization any other
section is determined by a function of the form $\exp(2\pi i\Lambda)$. The
condition (\ref{Lambda}) for topological anomaly cancellation is equivalent to
the existence of a $\mathcal{G}$-equivariant section of the unitary
determinant bundle. We obtain necessary and sufficient conditions for the
existence of a $\mathcal{G}$-equivariant section in terms of the Bismut-Freed
connection $\Xi$. Perturbative anomalies are related to the $\mathcal{G}%
$-equivariant curvature of the connection. To study global anomalies we
introduce the concept of $\mathcal{G}$-equivariant holonomy of a connection
$\Xi$. We obtain necessary and sufficient conditions for global anomaly
cancellation in terms of the $\mathcal{G}$-equivariant holonomy.

To deal with the locality problem, we introduce the notion of local section of
the unitary determinant bundle. We say that a section $S\colon\Gamma
(E)\rightarrow\mathcal{U}$ is $\Xi$-local if $\rho^{S}=\frac{i}{2\pi}S^{\ast
}\Xi\in\Omega_{\mathrm{loc}}^{1}(\Gamma(E))$. We prove that\ local sections
exist and that for local sections the cocycles $\alpha^{S}$ and $\mathfrak{a}%
^{S}$ are local. Furthermore, in the trivialization determined by $S$ any
other $\Xi$-local section is given by a function of the form $\exp(2\pi
i\Lambda)$ for a local functional $\Lambda\in\Omega_{\mathrm{loc}}^{0}%
(\Gamma(E))$. The condition (\ref{Lambda}) is shown to be satisfied for a
local functional if and only if there exists a $\Xi$-local $\mathcal{G}%
$-equivariant\ section of $\mathcal{U}\rightarrow\Gamma(E)$. In this way an
intrinsic characterization of the cancellation of the physical anomaly is
obtained. Finally we obtain necessary and sufficient conditions for physical
anomaly cancellation in terms of the $\mathcal{G}$-equivariant curvature and
holonomy of the Bismut-Freed connection $\Xi$.

In \cite{LocUni} we combine our geometric\ characterization of global anomaly
cancellation with the results of \cite{VB}\ and\ \cite{WP} to analyze
gravitational anomaly cancellation (see Section \ref{CR} for more details).

\section{Topological anomalies}

In this Section we study conditions for the existence of $\mathcal{G}%
$-equivariant sections of a $\mathcal{G}$-equivariant $U(1)$-bundle
$\mathcal{U}\rightarrow N$. As commented in the Introduction, when they are
applied to the determinant line bundle, they give necessary and sufficient
conditions for topological anomaly cancellation.

\subsection{Topological anomalies and group cohomology}

We say that a $U(1)$-bundle $\mathcal{U}\rightarrow N$ is topologically
trivial if there exists an isomorphism of principal $U(1)$-bundles
$\mathcal{U}\simeq N\times U(1)$. We recall it is equivalent to give a section
or a trivialization of $\mathcal{U}$. If $S\colon N\rightarrow\mathcal{U}$ is
a section, then $\Psi_{S}\colon N\times U(1)\rightarrow\mathcal{U}$, $\Psi
_{S}(x,u)=S(x)\cdot u$ is a trivialization of $\mathcal{U}$\ that we call the
trivialization associated to $S$. Conversely, if $\Psi\colon N\times
U(1)\rightarrow\mathcal{U}$ is a trivialization of $\mathcal{U}$, then
$\Psi\circ S_{1}\circ\Psi_{N}^{-1}$ is a section of $\mathcal{U}$, where
$S_{1}$ is the section $S_{1}(x)=(x,1)$ of the trivial bundle $N\times
U(1)\rightarrow N$ and $\Psi_{N}\colon N\rightarrow N$ is the projection of
$\Psi$.

Let $\mathcal{G}$ be a group acting (on the left on $N$). A $\mathcal{G}%
$-equivariant $U(1)$-bundle is a $U(1)$-bundle in which $\mathcal{G}$ acts by
$U(1)$-automorphisms. We say that $\mathcal{U}\rightarrow N$ is a trivial
$\mathcal{G}$-equivariant $U(1)$-bundle if there exists a $\mathcal{G}%
$-equivariant isomorphism $\mathcal{U}\simeq N\times U(1)$ (where
$\mathcal{G}$ acts trivially in $U(1)$). This condition is equivalent to the
existence of a $\mathcal{G}$-equivariant section $S\colon N\rightarrow
\mathcal{U}$, i.e. such that $\phi_{\mathcal{U}}\circ S=S\circ\phi_{N}$ for
any $\phi\in\mathcal{G}$.

A necessary condition for a bundle to be a trivial $\mathcal{G}$-equivariant
$U(1)$-bundle is that it should be topologically trivial. We want to study
conditions for a topologically trivial bundle to be a trivial $\mathcal{G}%
$-equivariant bundle. Hence from now on we made the following assumption

(a1) $\mathcal{U}\rightarrow N$ is a $\mathcal{G}$-equivariant $U(1)$-bundle
that is topologically trivial, $N$ is connected and $H^{1}(N)=0$.

If $S\colon N\rightarrow\mathcal{U}$ is a section, then $\phi_{\mathcal{U}%
}^{-1}\circ S\circ\phi_{N}$ is also a section of $\mathcal{U}$ and we have
$\phi_{\mathcal{U}}^{-1}\circ S\circ\phi_{N}=S\cdot\exp(-2\pi i\alpha_{\phi
}^{S})$ for a function $\alpha_{\phi}^{S}\colon N\rightarrow\mathbb{R}%
/\mathbb{Z}$. We note that $S$ is a $\mathcal{G}$-equivariant section if and
only if $\alpha_{\phi}^{S}=0$ for any $\phi\in\mathcal{G}$.\ The function
$\alpha^{S}\colon\mathcal{G}\times N\rightarrow\mathbb{R}/\mathbb{Z}$
satisfies the following properties

\begin{lemma}
\label{cocyclo}a) \textbf{(cocycle condition)} We have $\alpha_{\phi^{\prime
}\phi}^{S}(x)=\alpha_{\phi}^{S}(x)+\alpha_{\phi^{\prime}}^{S}(\phi x)$ for any
$\phi$, $\phi^{\prime}\in\mathcal{G}$.

b) If $S^{\prime}(x)=S(x)\cdot\exp(2\pi i\Lambda(x))$ is another section of
$\mathcal{U}\rightarrow N$ we have $\alpha_{\phi}^{S^{\prime}}=\alpha_{\phi
}^{S}+\Lambda-\phi_{N}^{\ast}\Lambda$.
\end{lemma}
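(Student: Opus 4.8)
The plan is to work entirely from the defining relation $\phi_{\mathcal{U}}^{-1}\circ S\circ\phi_{N}=S\cdot\exp(-2\pi i\alpha_{\phi}^{S})$, evaluated pointwise as $\phi_{\mathcal{U}}^{-1}(S(\phi x))=S(x)\cdot\exp(-2\pi i\alpha_{\phi}^{S}(x))$, and to exploit two structural facts. First, since $\mathcal{G}$ acts by $U(1)$-automorphisms, each $\phi_{\mathcal{U}}$ (and its inverse) commutes with the principal $U(1)$-action on the fibres, so it passes through any scalar factor $\exp(2\pi i c)$. Second, a left action is functorial: $(\phi^{\prime}\phi)_{N}=\phi^{\prime}_{N}\circ\phi_{N}$ and $(\phi^{\prime}\phi)_{\mathcal{U}}=\phi^{\prime}_{\mathcal{U}}\circ\phi_{\mathcal{U}}$, whence $((\phi^{\prime}\phi)_{\mathcal{U}})^{-1}=\phi_{\mathcal{U}}^{-1}\circ(\phi^{\prime}_{\mathcal{U}})^{-1}$. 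Throughout I would keep the evaluation point explicit, since the whole content of the lemma lies in how the base point is displaced by the action.

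For part (a), I would begin from the defining relation for the product, $((\phi^{\prime}\phi)_{\mathcal{U}})^{-1}(S((\phi^{\prime}\phi)x))=S(x)\cdot\exp(-2\pi i\alpha_{\phi^{\prime}\phi}^{S}(x))$, and rewrite its left-hand side by functoriality as $\phi_{\mathcal{U}}^{-1}\bigl((\phi^{\prime}_{\mathcal{U}})^{-1}(S(\phi^{\prime}(\phi x)))\bigr)$. Applying the defining relation for $\phi^{\prime}$ at the point $\phi x$ turns the inner expression into $S(\phi x)\cdot\exp(-2\pi i\alpha_{\phi^{\prime}}^{S}(\phi x))$; pulling this scalar through $\phi_{\mathcal{U}}^{-1}$ and then applying the defining relation for $\phi$ at $x$ yields $S(x)\cdot\exp(-2\pi i(\alpha_{\phi}^{S}(x)+\alpha_{\phi^{\prime}}^{S}(\phi x)))$. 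Comparing with the right-hand side and cancelling $S(x)$ by freeness of the $U(1)$-action gives $\alpha_{\phi^{\prime}\phi}^{S}(x)=\alpha_{\phi}^{S}(x)+\alpha_{\phi^{\prime}}^{S}(\phi x)$ modulo $\mathbb{Z}$.

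For part (b), I would substitute $S^{\prime}(\phi x)=S(\phi x)\cdot\exp(2\pi i\Lambda(\phi x))$ into the defining relation for $S^{\prime}$, pull the scalar $\exp(2\pi i\Lambda(\phi x))$ through $\phi_{\mathcal{U}}^{-1}$, apply the defining relation for $S$, and finally re-express $S(x)=S^{\prime}(x)\cdot\exp(-2\pi i\Lambda(x))$ to collect everything against $S^{\prime}(x)$. Reading off the resulting exponent gives $\alpha_{\phi}^{S^{\prime}}(x)=\alpha_{\phi}^{S}(x)+\Lambda(x)-\Lambda(\phi x)$, which is the claimed identity once $\Lambda(\phi x)$ is identified with $(\phi_{N}^{\ast}\Lambda)(x)$.

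There is no deep obstacle: the lemma is a bookkeeping computation whose only genuinely delicate point is respecting the left-action convention. It is essential that in the decomposition of $\phi^{\prime}\phi$ the factor $\phi_{\mathcal{U}}^{-1}$ is applied outermost, for the opposite convention would interchange the roles of $\phi$ and $\phi^{\prime}$ and alter the point at which $\alpha_{\phi^{\prime}}^{S}$ is evaluated. I would therefore fix the convention at the start and track the evaluation points $x$ versus $\phi x$ at every step; doing so is precisely what makes the asymmetry of the cocycle identity in (a) and the appearance of $\phi_{N}^{\ast}\Lambda$ rather than $\Lambda$ in (b) emerge correctly.
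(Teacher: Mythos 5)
Your argument is correct and is essentially the paper's own proof: both parts proceed from the defining relation $\phi_{\mathcal{U}}^{-1}\circ S\circ\phi_{N}=S\cdot\exp(-2\pi i\alpha_{\phi}^{S})$, use functoriality of the left action to decompose $(\phi'\phi)_{\mathcal{U}}^{-1}$ with $\phi_{\mathcal{U}}^{-1}$ outermost, and pull scalar factors through the $U(1)$-equivariant maps; the only difference is that you write the computation pointwise while the paper writes it as an identity of maps. No gaps.
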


\begin{proof}
a) We have $(\phi^{\prime}\phi)_{\mathcal{U}}^{-1}\circ S\circ(\phi^{\prime
}\phi)_{N}=\phi_{\mathcal{U}}^{-1}\circ\left[  (\phi^{\prime})_{\mathcal{U}%
}^{-1}\circ S\circ(\phi^{\prime})_{N}\right]  \circ\phi_{N}$

$=\phi_{\mathcal{U}}^{-1}\circ\left[  S\cdot\exp(2\pi i\alpha_{\phi^{\prime}%
}^{S})\right]  \circ\phi_{N}=\left[  \phi_{\mathcal{U}}^{-1}\circ S\circ
\phi_{N}\right]  \cdot\exp(2\pi i\alpha_{\phi^{\prime}}^{S}\circ\phi_{N})$

$=S\cdot\exp(2\pi i\alpha_{\phi}^{S})\exp(2\pi i\alpha_{\phi^{\prime}}%
^{S}\circ\phi_{N})=S\cdot\exp(2\pi i(\alpha_{\phi}^{S}+\alpha_{\phi^{\prime}%
}^{S}\circ\phi_{N}))$

b) We have $\phi_{\mathcal{U}}^{-1}\circ S^{\prime}\circ\phi_{N}%
=\phi_{\mathcal{U}}^{-1}\circ\lbrack S\cdot\exp(2\pi i\Lambda)]\circ\phi
_{N}=\phi_{\mathcal{U}}^{-1}\circ S\circ\phi_{N}\cdot\exp(2\pi i(\Lambda
\circ\phi_{N}))$

$=S\cdot\exp(-2\pi i\alpha_{\phi})\cdot\exp(2\pi i\phi_{N}^{\ast}%
\Lambda)=S\cdot\exp(-2\pi i(\alpha_{\phi}-\phi_{N}^{\ast}\Lambda))$

$=S^{\prime}\cdot\exp(-2\pi i(\alpha_{\phi}-\phi_{N}^{\ast}\Lambda+\Lambda))$.
\end{proof}

\begin{remark}
\label{RemarkCocycle}\emph{In the particular case in which }$\alpha_{\phi}%
$\emph{ is constant for any} $\phi\in\mathcal{G}$ \emph{the cocycle condition
is equivalent to} $\alpha_{\phi^{\prime}\phi}=\alpha_{\phi}+\alpha
_{\phi^{\prime}}$\emph{, i.e.} $\alpha\colon\mathcal{G}\rightarrow
\mathbb{R}/\mathbb{Z}$ \emph{is a group homomorphism.}
\end{remark}

In the trivialization determined by the section $S$ the action of $\phi
\in\mathcal{G}$ on $N\times U(1)$ is given by $\phi_{\mathcal{U}}%
(x,u)=(\phi_{N}(x),u\cdot\exp(2\pi i\alpha_{\phi}^{S}(x)))$. Conversely, we
have the following result

\begin{proposition}
\label{converse}If $\alpha\colon\mathcal{G}\times N\rightarrow\mathbb{R}%
/\mathbb{Z}$ satisfies the cocycle condition, then $\phi\cdot(x,u)=(\phi
_{N}(x),\exp(2\pi i\alpha_{\phi}(x))\cdot u)$ defines a group action of
$\mathcal{G}$ on $\mathcal{U}=N\times U(1)$ and $\mathcal{U}\rightarrow N$ is
a $\mathcal{G}$-equivariant $U(1)$-bundle. For the section $S(x)=(x,1)$\ we
have $\alpha^{S}=\alpha$.
\end{proposition}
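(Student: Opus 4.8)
The plan is to verify directly the three assertions of the proposition: that the stated formula defines a group action on $N\times U(1)$, that this action is by $U(1)$-automorphisms covering $\phi_{N}$ on the base, and that the cocycle associated to the section $S(x)=(x,1)$ recovers $\alpha$.

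First I would establish the group-action axioms. The identity axiom $e\cdot(x,u)=(x,u)$ reduces to checking $\alpha_{e}=0$, which follows from the cocycle condition by taking $\phi=e$: since $e$ acts as the identity on $N$ and $\phi^{\prime}e=\phi^{\prime}$, the cocycle condition gives $\alpha_{\phi^{\prime}}(x)=\alpha_{e}(x)+\alpha_{\phi^{\prime}}(x)$, whence $\alpha_{e}(x)=0$. For compatibility with composition I would compute $\phi^{\prime}\cdot(\phi\cdot(x,u))$ componentwise: the $N$-component is $(\phi^{\prime})_{N}(\phi_{N}(x))=(\phi^{\prime}\phi)_{N}(x)$, because $\mathcal{G}$ already acts on $N$, while the $U(1)$-component is $\exp(2\pi i\alpha_{\phi^{\prime}}(\phi_{N}x))\exp(2\pi i\alpha_{\phi}(x))\cdot u$. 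Merging the exponentials and applying the cocycle condition $\alpha_{\phi^{\prime}\phi}(x)=\alpha_{\phi}(x)+\alpha_{\phi^{\prime}}(\phi_{N}x)$ collapses this to $\exp(2\pi i\alpha_{\phi^{\prime}\phi}(x))\cdot u$, so that $\phi^{\prime}\cdot(\phi\cdot(x,u))=(\phi^{\prime}\phi)\cdot(x,u)$.

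Next I would check equivariance of the bundle structure. The right $U(1)$-action on $\mathcal{U}=N\times U(1)$ is $(x,u)\cdot v=(x,uv)$, and a one-line comparison shows $\phi\cdot((x,u)\cdot v)=(\phi\cdot(x,u))\cdot v$, since both equal $(\phi_{N}x,\exp(2\pi i\alpha_{\phi}(x))uv)$. Thus each $\phi$ acts by a $U(1)$-automorphism, and by construction it covers $\phi_{N}$ on the base, which is precisely the statement that $\mathcal{U}\rightarrow N$ is a $\mathcal{G}$-equivariant $U(1)$-bundle.

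Finally, to identify $\alpha^{S}$ with $\alpha$ for $S(x)=(x,1)$, I would invert the action: from $\phi_{\mathcal{U}}(x,u)=(\phi_{N}x,\exp(2\pi i\alpha_{\phi}(x))u)$ one reads off $\phi_{\mathcal{U}}^{-1}(y,w)=(\phi_{N}^{-1}y,\exp(-2\pi i\alpha_{\phi}(\phi_{N}^{-1}y))w)$ (here $\phi_{N}$ is invertible because $\mathcal{G}$ acts on $N$). Evaluating $\phi_{\mathcal{U}}^{-1}\circ S\circ\phi_{N}$ at $x$ then yields $(x,\exp(-2\pi i\alpha_{\phi}(x)))=S(x)\cdot\exp(-2\pi i\alpha_{\phi}(x))$, and comparison with the defining relation $\phi_{\mathcal{U}}^{-1}\circ S\circ\phi_{N}=S\cdot\exp(-2\pi i\alpha_{\phi}^{S})$ gives $\alpha_{\phi}^{S}=\alpha_{\phi}$ in $\mathbb{R}/\mathbb{Z}$. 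I do not anticipate a serious obstacle here; the computation is essentially the reverse of the construction preceding Lemma \ref{cocyclo}. The only points requiring care are the derivation of $\alpha_{e}=0$ (needed for the identity axiom) and the bookkeeping of the inverse $\phi_{\mathcal{U}}^{-1}$, both of which follow mechanically from the cocycle condition.
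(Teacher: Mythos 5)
Your proposal is correct and follows essentially the same route as the paper: the key step in both is verifying the composition axiom $\phi^{\prime}\cdot(\phi\cdot(x,u))=(\phi^{\prime}\phi)\cdot(x,u)$ by merging the exponentials and invoking the cocycle condition. You additionally spell out the identity axiom via $\alpha_{e}=0$, the commutation with the right $U(1)$-action, and the computation of $\phi_{\mathcal{U}}^{-1}\circ S\circ\phi_{N}$ giving $\alpha^{S}=\alpha$, all of which the paper leaves implicit; these verifications are routine and correct.
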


\begin{proof}
It is a group action as we have%
\begin{align*}
\phi^{\prime}\cdot(\phi\cdot(x,u))  &  =\phi^{\prime}\cdot((\phi_{N}%
(x),\exp(2\pi i\alpha_{\phi}(x))\cdot u))\\
&  =(\phi_{N}^{\prime}(\phi_{N}(x)),\exp(2\pi i(\alpha_{\phi}(x)+\alpha
_{\phi^{\prime}}(\phi_{N}x))\cdot u)\\
&  =((\phi^{\prime}\cdot\phi)_{N}(x),\exp(2\pi i\alpha_{\phi^{\prime}\cdot
\phi}(x))\cdot u)=(\phi^{\prime}\cdot\phi)\cdot(x,u)).
\end{align*}

\end{proof}

If $\alpha\colon N\rightarrow\mathbb{R}/\mathbb{Z}$ is a function, we define
its differential $\delta\alpha\in\Omega^{1}(N)$ by $\delta\alpha=-\frac
{i}{2\pi}\exp(-2\pi i\alpha)d(\exp(2\pi i\alpha))$. For $\alpha\colon
\mathbb{R}\rightarrow\mathbb{R}/\mathbb{Z}$\ \ we define $\frac{\delta\alpha
}{\delta t}\in\Omega^{0}(\mathbb{R)}$ by $\frac{\delta\alpha}{\delta t}%
=-\frac{i}{2\pi}\exp(-2\pi i\alpha)\frac{d}{dt}(\exp(2\pi i\alpha(t)))$.
If$\ \alpha=A\operatorname{mod}\mathbb{Z}$ for a real function $A\in\Omega
^{0}(N)$ then $\delta\alpha=dA$. For an arbitrary $\alpha$ the form
$\delta\alpha$ is closed but not necessarily exact. But if $H^{1}(N)=0$ we
have the following result

\begin{lemma}
\label{modZ}If $N$ is connected and $H^{1}(N)=0$, then for any $\alpha\colon
N\rightarrow\mathbb{R}/\mathbb{Z}$ there exists $A\in\Omega^{0}(N)$ such that
$\alpha=A\operatorname{mod}\mathbb{Z}$. Any other function satisfying this
condition is of the form $A+n$ with $n\in\mathbb{Z}$. Hence we have
$\Omega^{0}(N,\mathbb{R}/\mathbb{Z})\simeq\Omega^{0}(N)/\mathbb{\mathbb{Z}}$
\end{lemma}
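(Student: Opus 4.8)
The plan is to reduce the lifting problem to the exactness of the closed one-form $\delta\alpha$. First I would set $f=\exp(2\pi i\alpha)\colon N\rightarrow U(1)$. From the definition $\delta\alpha=-\frac{i}{2\pi}\exp(-2\pi i\alpha)\,d(\exp(2\pi i\alpha))$ one reads off $f^{-1}df=2\pi i\,\delta\alpha$, i.e. $df=2\pi i\,f\,\delta\alpha$. As recalled in the text, $\delta\alpha$ is closed, so the hypothesis $H^{1}(N)=0$ yields a real primitive $A_{0}\in\Omega^{0}(N)$ with $\delta\alpha=dA_{0}$. The point of $A_{0}$ is that $\exp(2\pi iA_{0})$ should differ from $f$ only by a constant, which I would verify by forming $g=f\cdot\exp(-2\pi iA_{0})\colon N\rightarrow U(1)$ and computing $dg=\exp(-2\pi iA_{0})\bigl(df-2\pi i\,f\,dA_{0}\bigr)=2\pi i\,f\exp(-2\pi iA_{0})\bigl(\delta\alpha-dA_{0}\bigr)=0$.

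Since $N$ is connected, $g$ is then a constant $c\in U(1)$; writing $c=\exp(2\pi ib)$ with $b\in\mathbb{R}$ and putting $A=A_{0}+b\in\Omega^{0}(N)$ gives $\exp(2\pi i\alpha)=\exp(2\pi iA)$, that is $\alpha=A\operatorname{mod}\mathbb{Z}$, which is the existence assertion. For uniqueness, if $A'$ is another real function with $\alpha=A'\operatorname{mod}\mathbb{Z}$, then $A-A'$ is a smooth $\mathbb{Z}$-valued function on the connected $N$, hence a constant integer $n$, so $A'=A+n$. Combining the two parts, the reduction map $A\mapsto A\operatorname{mod}\mathbb{Z}$ is a surjection $\Omega^{0}(N)\rightarrow\Omega^{0}(N,\mathbb{R}/\mathbb{Z})$ (by existence) whose fibres are exactly the cosets of the constant integers (by uniqueness), giving the stated isomorphism $\Omega^{0}(N,\mathbb{R}/\mathbb{Z})\simeq\Omega^{0}(N)/\mathbb{Z}$.

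I expect the existence step to be the real obstacle: the primitive $A_{0}$ produced from $H^{1}(N)=0$ is only a primitive of the differential $\delta\alpha$, and a priori $\exp(2\pi iA_{0})$ need not equal $f$ on the nose. The key is that the discrepancy $g$ is automatically a locally constant (indeed closed-logarithmic-derivative) $U(1)$-valued function, so connectedness collapses it to a single global constant that can be absorbed into the primitive. Conceptually this reflects that the periods of $\delta\alpha$ are integers, so the de Rham vanishing $H^{1}(N;\mathbb{R})=0$ already forces the integral winding obstruction in $[N,U(1)]=H^{1}(N;\mathbb{Z})$ to vanish; the calculation above is just the concrete incarnation of that fact and should present no further difficulty.
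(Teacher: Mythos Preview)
Your proof is correct and follows essentially the same approach as the paper's: both use $H^{1}(N)=0$ to find a primitive $A_{0}$ (the paper's $A'$) of the closed form $\delta\alpha$, then observe that $\exp(2\pi i\alpha)\exp(-2\pi iA_{0})$ is a constant in $U(1)$ which is absorbed as an additive real constant, with uniqueness handled identically. Your version is slightly more explicit in verifying $dg=0$, but the argument is the same.
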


\begin{proof}
As $d(\delta\alpha)=0$ and $H^{1}(N)=0$, there exist $A^{\prime}\in\Omega
^{0}(N)$ such that $dA^{\prime}=\delta\alpha$. The function $\exp(2\pi
i\alpha)\exp(-2\pi iA^{\prime})$ is constant and has modulus one, and hence
there exists $r\in\mathbb{R}$ such that $\exp(2\pi i\alpha)=\exp(2\pi
i(A^{\prime}+r))$. We can take $A=A^{\prime}+r$. The constant $r$ is unique
modulo $\mathbb{Z}$, and the result follows.
\end{proof}

Hence the cocycle $\alpha$ can be considered as a map $\alpha\colon
\mathcal{G}\rightarrow\Omega^{0}(N)/\mathbb{\mathbb{Z}}$. We denote by
$Z^{1}(\mathcal{G},\Omega^{0}(N)/\mathbb{\mathbb{Z}}\mathbf{)}$ the space of
maps $\alpha\colon\mathcal{G}\rightarrow\Omega^{0}(N)/\mathbb{\mathbb{Z}}$
satisfying the cocycle condition $\alpha_{\phi^{\prime}\phi}(x)=\alpha_{\phi
}(x)+\alpha_{\phi^{\prime}}(\phi x)$ and by $B^{1}(\mathcal{G},\Omega
^{0}(N)/\mathbb{\mathbb{Z}})$ the exact cocycles of the form $\alpha_{\phi
}=\phi_{N}^{\ast}\theta-\theta$ for a function $\theta\in\Omega^{0}%
(N\mathbb{)}$. The group cohomology is defined by $H^{1}(\mathcal{G}%
,\Omega^{0}(N)/\mathbb{\mathbb{Z}})=Z^{1}(\mathcal{G},\Omega^{0}%
(N)/\mathbb{\mathbb{Z}})/B^{1}(\mathcal{G},\Omega^{0}(N)/\mathbb{\mathbb{Z}}%
)$. As $H^{1}(N)=0,$ from Lemmas \ref{cocyclo}\ and \ref{modZ} it
follows\ that $\alpha^{S}$ determines a cohomology class in $H^{1}%
(\mathcal{G},\Omega^{0}(N)/\mathbb{\mathbb{Z})}$ that does not depend on the
section $S$ chosen. We denote this class by $[\alpha^{\mathcal{U}}]\in
H^{1}(\mathcal{G},\Omega^{0}(N)/\mathbb{Z)}$ and we have the following

\begin{proposition}
\label{SectionH1}If $\mathcal{U}\rightarrow N$ is a topologically trivial
$\mathcal{G}$-equivariant $U(1)$-bundle, and $H^{1}(N)=0$ then the following
conditions are equivalent

a) $\mathcal{U}\rightarrow N$ is a trivial $\mathcal{G}$-equivariant $U(1)$-bundle.

b) There exists a $\mathcal{G}$-equivariant section $S\colon N\rightarrow
\mathcal{U}$.

c) $[\alpha^{\mathcal{U}}]=0$ on $H^{1}(\mathcal{G},\Omega^{0}(N)/\mathbb{Z)}$.
\end{proposition}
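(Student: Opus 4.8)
The plan is to run the logical cycle (b) $\Rightarrow$ (c) $\Rightarrow$ (b) and to dispatch (a) $\Leftrightarrow$ (b) directly from the definitions recalled in the setup. For (a) $\Leftrightarrow$ (b), given a $\mathcal{G}$-equivariant section $S$ I would check that the associated trivialization $\Psi_S(x,u)=S(x)\cdot u$ intertwines the $\mathcal{G}$-action on $\mathcal{U}$ with the trivial action on $N\times U(1)$: since $\mathcal{G}$ acts by $U(1)$-automorphisms, the action commutes with the fibrewise $U(1)$-multiplication, so $\phi_{\mathcal{U}}(\Psi_S(x,u))=(\phi_{\mathcal{U}}S(x))\cdot u=S(\phi_N x)\cdot u=\Psi_S(\phi_N x,u)$, using equivariance of $S$. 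Conversely a $\mathcal{G}$-equivariant trivialization produces a $\mathcal{G}$-equivariant section $S=\Psi^{-1}\circ S_1$ by the same computation read backwards. This is exactly the equivalence already asserted in the paragraph defining trivial $\mathcal{G}$-equivariant bundles, so it needs only to be recorded.

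The substance is the equivalence of (b) and (c), which follows formally from Lemmas \ref{cocyclo} and \ref{modZ}. For (b) $\Rightarrow$ (c): a section $S$ is $\mathcal{G}$-equivariant exactly when $\alpha^S_\phi=0$ for all $\phi$ (noted just before Lemma \ref{cocyclo}); since $[\alpha^{\mathcal{U}}]$ is represented by $\alpha^S$ for \emph{any} section and is independent of that choice, it is represented by the zero cocycle, hence vanishes. For (c) $\Rightarrow$ (b): fix any section $S$. Because $[\alpha^{\mathcal{U}}]=0$, the representative $\alpha^S$ lies in $B^1(\mathcal{G},\Omega^0(N)/\mathbb{Z})$, so $\alpha^S_\phi=\phi_N^\ast\Lambda-\Lambda$ for some $\Lambda\in\Omega^0(N)$. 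I would then set $S'(x)=S(x)\cdot\exp(2\pi i\Lambda(x))$ and apply Lemma \ref{cocyclo}b, which gives $\alpha^{S'}_\phi=\alpha^S_\phi+\Lambda-\phi_N^\ast\Lambda=0$. Thus $\alpha^{S'}\equiv 0$ and $S'$ is the desired $\mathcal{G}$-equivariant section, establishing (b).

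I do not expect a serious obstacle; the whole statement is a bookkeeping consequence of the two lemmas together with the definition of the group cohomology, and the only points requiring care are conventions. First, the sign and direction of the coboundary must be matched: the defining form $\phi_N^\ast\theta-\theta$ of $B^1$ lines up with the transformation law of Lemma \ref{cocyclo}b precisely under the identification $\theta=\Lambda$, so that the correction cancels $\alpha^S$ rather than doubling it. Second, one must be sure that $\alpha^S_\phi$, a priori an $\mathbb{R}/\mathbb{Z}$-valued function, legitimately represents an element of $\Omega^0(N)/\mathbb{Z}$; this is exactly where the hypothesis $H^1(N)=0$ (with $N$ connected) enters, through Lemma \ref{modZ} and the resulting identification $\Omega^0(N,\mathbb{R}/\mathbb{Z})\simeq\Omega^0(N)/\mathbb{Z}$, which is what makes the cohomology group appearing in (c) the correct receptacle for the class $[\alpha^{\mathcal{U}}]$.
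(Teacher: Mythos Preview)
Your proposal is correct and follows essentially the same route as the paper: the equivalence (a)$\Leftrightarrow$(b) is dispatched by the sections/trivializations correspondence, (b)$\Rightarrow$(c) by observing $\alpha^S=0$ for an equivariant $S$, and (c)$\Rightarrow$(b) by choosing any section, writing $\alpha^S_\phi=\phi_N^\ast\Lambda-\Lambda$, and correcting via $S'=S\cdot\exp(2\pi i\Lambda)$ using Lemma~\ref{cocyclo}b). Your additional remarks on sign matching and on the role of $H^1(N)=0$ through Lemma~\ref{modZ} are accurate and make explicit what the paper leaves implicit.
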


\begin{proof}
That a) and b) are equivalent follows from the equivalence between
trivializations and sections.

b)$\Rightarrow$c) As $S\colon N\rightarrow\mathcal{U}$ is $\mathcal{G}%
$-equivariant we have $\alpha^{S}=0$, and hence $[\alpha^{\mathcal{U}}]=0.$

c)$\Rightarrow$b) If $\alpha^{\mathcal{U}}=0$ on $H^{1}(\mathcal{G},\Omega
^{0}(N)/\mathbb{Z)}$ we chose a section $S\colon N\rightarrow\mathcal{U}$ and
we have $\alpha^{S}=\phi_{N}^{\ast}\theta-\theta$ for $\theta\in\Omega
^{0}(N\mathbb{)}$. We define the section $S^{\prime}=S\cdot\exp(2\pi i\theta)$
and by Proposition \ref{cocyclo} b)\ we have $\alpha_{\phi}^{S^{\prime}%
}=\alpha_{\phi}^{S}-\phi_{N}^{\ast}\theta+\theta=0$, and hence $S^{\prime}$ is
$\mathcal{G}$-equivariant.
\end{proof}

\begin{remark}
\emph{If} $H^{1}(N)\neq0$ \emph{then the result is also true, but replacing
the cohomology }$H^{1}(\mathcal{G},\Omega^{0}(N)/\mathbb{Z)}$ \emph{with}
$H^{1}(\mathcal{G},\Omega^{0}(N,\mathbb{R}/\mathbb{Z))}$\emph{. We prefer to
work with }$\Omega^{0}(N)/\mathbb{Z}$ \emph{as it can be easily generalized to
local cohomology.}
\end{remark}

\subsection{Local topological anomalies and Lie algebra cohomology}

Let $\mathcal{G}_{0}$ be the connected component with the identity on
$\mathcal{G}$. Invariance under $\mathcal{G}_{0}$ can be determined in terms
of the Lie algebra $\mathrm{Lie}\mathcal{G}$. The action of $\mathcal{G}$ on
$\mathcal{U}$\ induces a homomorphism $\mathrm{Lie}\mathcal{G}\rightarrow
\mathfrak{X}(\mathcal{U})$. If $S\colon N\rightarrow\mathcal{U}$ is a section,
for any $X\in\mathrm{Lie}\mathcal{G}$ the vector $X_{\mathcal{U}%
}(S(x))-S_{\ast}(X_{N}(x))$ is vertical and hence we have
\begin{equation}
X_{\mathcal{U}}(S(x))-S_{\ast}(X_{N}(x))=2\pi\mathfrak{a}^{S}(X)(x)\xi
_{\mathcal{U}}(S(x)) \label{definitiona}%
\end{equation}
for a function $\mathfrak{a}^{S}(X)\in\Omega^{0}(N)$. The term $\mathfrak{a}%
^{S}$ is the infinitesimal variation of $\alpha^{S}$. Precisely we have the following

\begin{proposition}
\label{AlfaA}If $X\in\mathrm{Lie}\mathcal{G}$, and $\phi_{t}=\exp(tX)$ then we have

$\mathfrak{a}^{S}(X)=\left.  \tfrac{\delta\alpha_{\phi_{t}}}{\delta
t}\right\vert _{t=0}$
\end{proposition}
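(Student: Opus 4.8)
The plan is to read both sides of the defining relation for $\alpha_{\phi}^{S}$ as curves in $\mathcal{U}$ and differentiate at $t=0$. Since $\phi_{\mathcal{U}}$ acts by $U(1)$-automorphisms it commutes with the $U(1)$-action, so the identity $\phi_{\mathcal{U}}^{-1}\circ S\circ\phi_{N}=S\cdot\exp(-2\pi i\alpha_{\phi}^{S})$ can be rewritten, after applying $\phi_{\mathcal{U}}$, as
\begin{equation*}
S(\phi_{N}(x))=\phi_{\mathcal{U}}(S(x))\cdot\exp(-2\pi i\alpha_{\phi}^{S}(x)).
\end{equation*}
I would then substitute $\phi=\phi_{t}=\exp(tX)$. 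Because $\phi_{0}=\mathrm{id}$ gives $\alpha_{\phi_{0}}^{S}=0$, for fixed $x$ the path $t\mapsto\alpha_{\phi_{t}}^{S}(x)$ admits a smooth real lift $A_{t}(x)$ through $A_{0}(x)=0$, and by the definition of $\tfrac{\delta}{\delta t}$ one has $\left.\tfrac{\delta\alpha_{\phi_{t}}^{S}}{\delta t}\right\vert_{t=0}(x)=\left.\tfrac{d}{dt}\right\vert_{t=0}A_{t}(x)$.

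Next I would differentiate both sides at $t=0$. The left-hand side is the curve $t\mapsto S(\phi_{t,N}(x))$; since $\left.\tfrac{d}{dt}\right\vert_{t=0}\phi_{t,N}(x)=X_{N}(x)$, its velocity is $S_{\ast}(X_{N}(x))$. The right-hand side is a product of the curve $c(t)=\phi_{t,\mathcal{U}}(S(x))$ in $\mathcal{U}$ with the phase $g(t)=\exp(-2\pi iA_{t}(x))$ in $U(1)$, with $c(0)=S(x)$ and $g(0)=1$. The key step is a Leibniz rule for the action map $a(p,u)=p\cdot u$: the velocity of $t\mapsto a(c(t),g(t))$ at $t=0$ splits as $(da)(\dot{c}(0),0)+(da)(0,\dot{g}(0))$. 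The first term equals $\dot{c}(0)=X_{\mathcal{U}}(S(x))$ because $g(0)=1$ makes right-multiplication the identity; the second term is the vertical velocity of $t\mapsto S(x)\cdot g(t)$, which with the standard normalization $\xi_{\mathcal{U}}(p)=\left.\tfrac{d}{ds}\right\vert_{s=0}(p\cdot e^{is})$ equals $-2\pi\left.\tfrac{d}{dt}\right\vert_{t=0}A_{t}(x)\,\xi_{\mathcal{U}}(S(x))$.

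Equating the two velocities gives
\begin{equation*}
S_{\ast}(X_{N}(x))=X_{\mathcal{U}}(S(x))-2\pi\left.\tfrac{\delta\alpha_{\phi_{t}}^{S}}{\delta t}\right\vert_{t=0}(x)\,\xi_{\mathcal{U}}(S(x)),
\end{equation*}
and rearranging yields $X_{\mathcal{U}}(S(x))-S_{\ast}(X_{N}(x))=2\pi\left.\tfrac{\delta\alpha_{\phi_{t}}^{S}}{\delta t}\right\vert_{t=0}(x)\,\xi_{\mathcal{U}}(S(x))$. Comparing with the defining equation (\ref{definitiona}) for $\mathfrak{a}^{S}(X)$ and using $\xi_{\mathcal{U}}(S(x))\neq0$ identifies $\mathfrak{a}^{S}(X)=\left.\tfrac{\delta\alpha_{\phi_{t}}^{S}}{\delta t}\right\vert_{t=0}$, as claimed. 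I expect the main obstacle to be the Leibniz-rule step, that is, making precise the decomposition of the velocity of a product of a moving point in $\mathcal{U}$ and a moving phase in $U(1)$, and, relatedly, pinning down the $2\pi$ normalization of $\xi_{\mathcal{U}}$ so that the constants match (\ref{definitiona}); once the conventions for $\xi_{\mathcal{U}}$ and for $\tfrac{\delta}{\delta t}$ are fixed consistently, the remaining computation is routine.
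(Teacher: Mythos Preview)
Your proof is correct and follows essentially the same approach as the paper: the paper's proof is a one-line sketch saying the result follows by differentiating the defining relation $(\phi_{t})_{\mathcal{U}}^{-1}\circ S\circ(\phi_{t})_{N}=S\cdot\exp(-2\pi i\alpha_{\phi_{t}}^{S})$ at $t=0$, and you have simply carried out this differentiation explicitly, including the Leibniz rule for the $U(1)$-action and the normalization of $\xi_{\mathcal{U}}$.
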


\begin{proof}
It follows by taking the derivative with respect to $t$ at $t=0$ on the
equation $(\phi_{t})_{\mathcal{U}}^{-1}\circ S\circ(\phi_{t})_{N}=S\cdot
\exp(-2\pi i\alpha_{\phi_{t}}^{S})$.
\end{proof}

We conclude that the section $S$ is $\mathcal{G}_{0}$-equivariant if and only
if $\mathfrak{a}^{S}(X)=0$ for any $X\in\mathrm{Lie}\mathcal{G}$.

Let us recall the definition of Lie algebra cohomology. If $b\in
\mathrm{Hom}(\mathrm{Lie}\mathcal{G},\Omega^{0}(N))$ we define $\partial
b(X,Y)=X_{N}(b(Y))-Y_{N}(b(X))-b([X,Y])$ for $X,Y\in\mathrm{Lie}\mathcal{G}$.
The closed elements $Z^{1}(\mathrm{Lie}\mathcal{G},\Omega^{0}(N))\subset
\mathrm{Hom}(\mathrm{Lie}\mathcal{G},\Omega^{0}(N))$ are those satisfying
$\partial b=0$. The exact elements $B^{1}(\mathrm{Lie}\mathcal{G},\Omega
^{0}(N))$ are those of the form $b(X)=L_{X}\Lambda$ for $\Lambda\in\Omega
^{0}(N)$. We define the Lie algebra cohomology by $H^{1}(\mathrm{Lie}%
\mathcal{G},\Omega^{0}(N))=Z^{1}(\mathrm{Lie}\mathcal{G},\Omega^{0}%
(N))/B^{1}(\mathrm{Lie}\mathcal{G},\Omega^{0}(N))$.

\begin{proposition}
\label{delatA}We have $\partial\mathfrak{a}^{S}=0$.
\end{proposition}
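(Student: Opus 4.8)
The plan is to reduce the identity $\partial\mathfrak{a}^{S}=0$ to a computation of Lie brackets of fundamental vector fields, carried out in the trivialization $\Psi_{S}$ determined by $S$. Recall that $X\mapsto X_{\mathcal{U}}$ and $X\mapsto X_{N}$ are Lie algebra homomorphisms into $\mathfrak{X}(\mathcal{U})$ and $\mathfrak{X}(N)$ respectively --- this is exactly the convention under which the coboundary $\partial$ of the excerpt has $B^{1}(\mathrm{Lie}\mathcal{G},\Omega^{0}(N))$ consisting of closed cochains, since for $b(X)=L_{X}\Lambda=X_{N}\Lambda$ one finds $\partial b(X,Y)=([X_{N},Y_{N}]-[X,Y]_{N})\Lambda$. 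Thus $[X,Y]_{\mathcal{U}}=[X_{\mathcal{U}},Y_{\mathcal{U}}]$ and $[X,Y]_{N}=[X_{N},Y_{N}]$, and the strategy is to expand both sides of the first of these and read off the identity $X_{N}(\mathfrak{a}^{S}(Y))-Y_{N}(\mathfrak{a}^{S}(X))=\mathfrak{a}^{S}([X,Y])$, which is precisely $\partial\mathfrak{a}^{S}(X,Y)=0$.

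First I would pass to the trivialization $N\times U(1)$ associated to $S$, in which $S$ becomes the section $x\mapsto(x,1)$ and, by the discussion following Proposition \ref{converse}, $\phi\in\mathcal{G}$ acts by $\phi_{\mathcal{U}}(x,u)=(\phi_{N}(x),u\exp(2\pi i\alpha_{\phi}^{S}(x)))$. Differentiating this action along $\phi_{t}=\exp(tX)$ at $t=0$ and using Proposition \ref{AlfaA} gives the explicit form of the fundamental field, $X_{\mathcal{U}}(x,u)=\bar{X}(x,u)+2\pi\,\mathfrak{a}^{S}(X)(x)\,\xi_{\mathcal{U}}(x,u)$, where $\bar{X}=(X_{N},0)$ is the horizontal lift of $X_{N}$ in the product and $\xi_{\mathcal{U}}$ is the vertical fundamental field of the $U(1)$-action. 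The essential gain is that this is now a genuine identity of vector fields on all of $N\times U(1)$, not merely along $S(N)$: it holds at every $u$ because the vertical part of $X_{\mathcal{U}}$ and the field $\xi_{\mathcal{U}}$ scale in the same way with $u$, while $\mathfrak{a}^{S}(X)$ is a function on $N$, constant along the fibres. This extension is where the restriction of (\ref{definitiona}) to $S(N)$ is overcome, and I expect it to be the main point requiring care.

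With the global identity in hand, I would compute $[X_{\mathcal{U}},Y_{\mathcal{U}}]$ using the elementary bracket relations on the product: $[\bar{X},\bar{Y}]=\overline{[X_{N},Y_{N}]}$, $[\bar{X},\xi_{\mathcal{U}}]=0$ (so that $[\bar{X},f\xi_{\mathcal{U}}]=(X_{N}f)\xi_{\mathcal{U}}$ for $f\in\Omega^{0}(N)$), and $[f\xi_{\mathcal{U}},g\xi_{\mathcal{U}}]=0$ since $\xi_{\mathcal{U}}$ annihilates functions pulled back from $N$. Writing $a=\mathfrak{a}^{S}(X)$ and $b=\mathfrak{a}^{S}(Y)$, this yields
\[
[X_{\mathcal{U}},Y_{\mathcal{U}}]=\overline{[X_{N},Y_{N}]}+2\pi\bigl(X_{N}(b)-Y_{N}(a)\bigr)\xi_{\mathcal{U}}.
\]
On the other hand, the homomorphism property together with (\ref{definitiona}) applied to the bracket element gives $[X,Y]_{\mathcal{U}}=\overline{[X,Y]_{N}}+2\pi\,\mathfrak{a}^{S}([X,Y])\,\xi_{\mathcal{U}}$, and $\overline{[X,Y]_{N}}=\overline{[X_{N},Y_{N}]}$. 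Comparing the vertical components of the two expressions gives $X_{N}(\mathfrak{a}^{S}(Y))-Y_{N}(\mathfrak{a}^{S}(X))=\mathfrak{a}^{S}([X,Y])$, i.e. $\partial\mathfrak{a}^{S}=0$.

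The only delicate points are bookkeeping ones: checking that the sign conventions for fundamental vector fields are the same on $\mathcal{U}$ and on $N$ (they must be, since the action on $\mathcal{U}$ covers that on $N$), and that the normalization of $\xi_{\mathcal{U}}$ matches the one fixed in (\ref{definitiona}). An alternative, trivialization-free route would differentiate the cocycle identity of Lemma \ref{cocyclo}(a) for $\phi=\exp(tX)$ and $\phi'=\exp(sY)$ in both $s$ and $t$ and antisymmetrize, extracting $\mathfrak{a}^{S}([X,Y])$ through the Baker--Campbell--Hausdorff formula; this avoids choosing a trivialization but makes the commutator term harder to isolate cleanly, so I would prefer the bracket computation above.
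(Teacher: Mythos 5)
Your proof is correct, but it takes a different route from the paper's. The paper never leaves the section: it rewrites (\ref{definitiona}) as the derivation identity $X_{\mathcal{U}}(f)\circ S-X_{N}(f\circ S)=2\pi\mathfrak{a}^{S}(X)(\xi_{\mathcal{U}}(f)\circ S)$ valid for arbitrary test functions $f\in\Omega^{0}(\mathcal{U})$, iterates it on the functions $X_{\mathcal{U}}(f)$ and $\xi_{\mathcal{U}}(f)$, invokes $[X_{\mathcal{U}},\xi_{\mathcal{U}}]=0$ (the same use of the $U(1)$-equivariance of the $\mathcal{G}$-action that you make when arguing the vertical parts scale correctly with $u$), and assembles four identities to isolate the vertical component of $[X_{\mathcal{U}},Y_{\mathcal{U}}]-[X,Y]_{\mathcal{U}}$ along $S(N)$. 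You instead pass to the trivialization $\Psi_{S}$, where differentiating the explicit action $\phi_{\mathcal{U}}(x,u)=(\phi_{N}(x),u\exp(2\pi i\alpha_{\phi}^{S}(x)))$ promotes (\ref{definitiona}) to the global vector-field identity $X_{\mathcal{U}}=\bar{X}+2\pi\mathfrak{a}^{S}(X)\xi_{\mathcal{U}}$ on all of $N\times U(1)$, after which the bracket computation is elementary; your identification of where the homomorphism convention $[X,Y]_{N}=[X_{N},Y_{N}]$ enters (namely, in making $B^{1}$ consist of cocycles) is also the convention the paper tacitly uses in its own display. The two arguments extract the same vertical-component comparison; yours buys a shorter and more transparent computation at the cost of a choice of trivialization, while the paper's test-function version is coordinate-free and needs no extension of the identity off $S(N)$. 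Both are complete proofs.
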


\begin{proof}
By the definition of $\mathfrak{a}^{S}$, for any $f\in\Omega^{0}(\mathcal{U})$
we have
\begin{equation}
X_{\mathcal{U}}(f)\circ S-X_{N}(f\circ S)=2\pi\mathfrak{a}^{S}(X)(\xi
_{\mathcal{U}}(f)\circ S) \label{derivation}%
\end{equation}
By using equation (\ref{derivation}) for the function $X_{\mathcal{U}}(f)$ we
obtain%
\begin{align}
Y_{\mathcal{U}}(X_{\mathcal{U}}(f))\circ S-Y_{N}(X_{\mathcal{U}}(f)\circ S)
&  =2\pi\mathfrak{a}^{S}(Y)(\xi_{\mathcal{U}}(X_{\mathcal{U}}(f))\circ
S)\nonumber\\
&  =2\pi\mathfrak{a}^{S}(Y)(X_{\mathcal{U}}(\xi_{\mathcal{U}}(f))\circ S)
\label{derivation1}%
\end{align}
(in the last equation we use that $[X_{\mathcal{U}},\xi_{\mathcal{U}}]=0$ as
$\mathcal{G}$ acts on $\mathcal{U}$ by \linebreak$U(1)$-automorphisms).
Equation (\ref{derivation}) applied to the function $\xi_{\mathcal{U}}(f)$
gives%
\begin{equation}
(X_{\mathcal{U}}(\xi_{\mathcal{U}}(f))\circ S)-X_{N}(\xi_{\mathcal{U}}(f)\circ
S)=2\pi\mathfrak{a}^{S}(X)(\xi_{\mathcal{U}}(\xi_{\mathcal{U}}(f))\circ S)
\label{derivation2}%
\end{equation}
And finally, by applying $Y_{N}$ to equation (\ref{derivation}) we obtain
\begin{equation}
Y_{N}(X_{\mathcal{U}}(f)\circ S)-Y_{N}(X_{N}(f\circ S))=2\pi Y_{N}%
(\mathfrak{a}^{S}(X))(\xi_{\mathcal{U}}(f)\circ S)+2\pi\mathfrak{a}%
^{S}(X)Y_{N}(\xi_{\mathcal{U}}(f)\circ S) \label{derivation3}%
\end{equation}

By using equations (\ref{derivation}), (\ref{derivation1}), (\ref{derivation2}%
) and (\ref{derivation3}) we obtain%
\begin{align*}
\mathfrak{a}^{S}([X,Y])(\xi_{\mathcal{U}}(f)\circ S))  &  =\tfrac{1}{2\pi
}([X_{\mathcal{U}},Y_{\mathcal{U}}](f)\circ S-[X_{N},Y_{N}](f\circ
S))(\xi_{\mathcal{U}}(f)\circ S))=\\
&  (Y_{N}(\mathfrak{a}^{S}(X))-X_{N}(\mathfrak{a}^{S}(Y)))(\xi_{\mathcal{U}%
}(f)\circ S)
\end{align*}
As $f$ is arbitrary we conclude that $\mathfrak{a}^{S}([X,Y])=X_{N}%
(\mathfrak{a}^{S}(Y))-Y_{N}(\mathfrak{a}^{S}(X))$ and $\partial\mathfrak{a}%
^{S}=0$.
\end{proof}

As $H^{1}(N)=0$, if $S$ is a section, then by Lemma \ref{modZ} any other
section $S^{\prime}$ can be expressed as $S^{\prime}=S\cdot\exp(2\pi iA)$ for
a function $A\in\Omega^{0}(M)$. The following result follows from Proposition
\ref{AlfaA} and Lemma \ref{cocyclo} b)

\begin{lemma}
\label{variacionSeccion}If $S^{\prime}=S\cdot\exp(2\pi iA)$ with $A\in
\Omega^{0}(M)$, then $\mathfrak{a}^{S^{\prime}}(X)=\mathfrak{a}^{S}(X)-L_{X}A$.
\end{lemma}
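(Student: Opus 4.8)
The plan is to follow the two hints and reduce everything to the behaviour of the $\mathbb{R}/\mathbb{Z}$-valued cocycle under the change of section. First I would invoke Lemma \ref{cocyclo} b) with $\Lambda = A$: since $S'(x) = S(x)\cdot\exp(2\pi iA(x))$, it gives $\alpha_{\phi}^{S'} = \alpha_{\phi}^{S} + A - \phi_N^\ast A$ for every $\phi \in \mathcal{G}$. Specializing to the one-parameter subgroup $\phi_t = \exp(tX)$ generated by $X \in \mathrm{Lie}\mathcal{G}$, this reads $\alpha_{\phi_t}^{S'} = \alpha_{\phi_t}^{S} + A - (\phi_t)_N^\ast A$. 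Next I would apply Proposition \ref{AlfaA}, which identifies $\mathfrak{a}^{S'}(X)$ with $\left.\frac{\delta\alpha_{\phi_t}^{S'}}{\delta t}\right|_{t=0}$ (and $\mathfrak{a}^{S}(X)$ with the analogous expression for $S$), so the task becomes differentiating the displayed sum at $t=0$.

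The one point requiring care --- and the only real obstacle --- is that $\alpha$ is valued in $\mathbb{R}/\mathbb{Z}$, so the operator $\frac{\delta}{\delta t}$ cannot be applied termwise without justification. I would first record that $\frac{\delta}{\delta t}$ is additive on $\mathbb{R}/\mathbb{Z}$-valued functions: writing $\exp(2\pi i(\alpha_1+\alpha_2)) = \exp(2\pi i\alpha_1)\exp(2\pi i\alpha_2)$ and applying the Leibniz rule inside the defining formula $\frac{\delta\alpha}{\delta t} = -\frac{i}{2\pi}\exp(-2\pi i\alpha)\frac{d}{dt}\exp(2\pi i\alpha)$ shows that $\frac{\delta}{\delta t}(\alpha_1 + \alpha_2) = \frac{\delta\alpha_1}{\delta t} + \frac{\delta\alpha_2}{\delta t}$. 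With additivity in hand the three summands split cleanly.

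Finally I would evaluate the three contributions. The first, $\left.\frac{\delta\alpha_{\phi_t}^{S}}{\delta t}\right|_{t=0}$, is $\mathfrak{a}^{S}(X)$ by Proposition \ref{AlfaA}. The constant term $A$ does not depend on $t$, so it contributes $0$. The last term $(\phi_t)_N^\ast A$ is the pullback of a genuine real function $A \in \Omega^0(N)$, so its $\mathbb{R}/\mathbb{Z}$-reduction lifts and $\frac{\delta}{\delta t}$ reduces to the ordinary derivative; hence $\left.\frac{\delta}{\delta t}(\phi_t)_N^\ast A\right|_{t=0} = \left.\frac{d}{dt}(\phi_t)_N^\ast A\right|_{t=0} = L_X A$. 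Combining, $\mathfrak{a}^{S'}(X) = \mathfrak{a}^{S}(X) + 0 - L_X A = \mathfrak{a}^{S}(X) - L_X A$, as claimed.

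As an alternative route I could instead differentiate the defining relation (\ref{definitiona}) for $S'$ directly, using that the $U(1)$-action commutes with each $X_{\mathcal{U}}$ (so that $[X_{\mathcal{U}},\xi_{\mathcal{U}}]=0$, as already exploited in the proof of Proposition \ref{delatA}); this produces the same $L_X A$ correction through a longer computation in which the term comes from differentiating $\exp(2\pi iA)$ along $X_N$. Since the hint points to the cocycle identity, I expect the argument of the first three paragraphs to be the cleanest, with the verification of additivity of $\frac{\delta}{\delta t}$ being the only step that is not purely formal.
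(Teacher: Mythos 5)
Your proof is correct and takes essentially the same route the paper does: the paper gives no written-out proof, only the remark that the lemma ``follows from Proposition \ref{AlfaA} and Lemma \ref{cocyclo} b)'', and those are exactly the two ingredients you combine. Your explicit check that $\frac{\delta}{\delta t}$ is additive on $\mathbb{R}/\mathbb{Z}$-valued functions, and that the pullback term $(\phi_t)_N^{\ast}A$ differentiates to $L_X A$, merely fills in the details the paper leaves implicit.
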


We conclude that the cohomology class of $\mathfrak{a}^{S}$ on $H^{1}%
(\mathrm{Lie}\mathcal{G},\Omega^{0}(N))$ does not depend on the section $S$
chosen. We denote this class by $[\mathfrak{a}^{\mathcal{U}}]\in
H^{1}(\mathrm{Lie}\mathcal{G},\Omega^{0}(N))$.

\section{Connections and anomaly cancellation}

In this Section we study the conditions of topological anomaly cancellation in
terms of a $\mathcal{G}$-invariant connection $\Xi$ on $\mathcal{U}\rightarrow
N$. We show that local anomalies are related to the $\mathcal{G}$-equivariant
curvature, while global anomalies are related to the $\mathcal{G}$-equivariant
holonomy of $\Xi$.

\subsection{Equivariant cohomology in the Cartan model}

First, we recall the definition of equivariant cohomology in the Cartan model
(\emph{e.g. }see \cite{BGV}). Suppose that we have a left action of a
connected Lie group $\mathcal{G}$ on a manifold $N$. We denote by
$H^{k}(N)^{\mathcal{G}}$ the cohomology of the space of $\mathcal{G}%
$-invariant forms $\Omega^{k}(N)^{\mathcal{G}}$ on $N$. Let $\Omega
_{\mathcal{G}}^{\bullet}(N)=\left(  \mathbf{S}^{\bullet}(\mathrm{Lie\,}%
\mathcal{G}^{\ast})\otimes\Omega^{\bullet}(N)\right)  ^{\mathcal{G}%
}=\mathcal{P}^{\bullet}(\mathrm{Lie\,}\mathcal{G},\Omega^{\bullet
}(N))^{\mathcal{G}}$ be the space of $\mathcal{G}$-invariant polynomials on
$\mathrm{Lie\,}\mathcal{G}$ with values in $\Omega^{\bullet}(N)$ with the
graduation $\deg(\alpha)=2k+r$ if $\alpha\in\mathcal{P}^{k}(\mathrm{Lie\,}%
\mathcal{G},\Omega^{r}(N))$. Let $D\colon\Omega_{\mathcal{G}}^{q}%
(N)\rightarrow\Omega_{\mathcal{G}}^{q+1}(N)$ be the Cartan differential,
$(D\alpha)(X)=d(\alpha(X))-\iota_{X_{N}}\alpha(X)$, $X\in\mathrm{Lie\,}%
\mathcal{G}$. On $\Omega_{\mathcal{G}}^{\bullet}(N)$ we have $D^{2}=0$, and
the equivariant cohomology (in the Cartan model) of $N$ with respect of the
action of $\mathcal{G}$ is defined as the cohomology of this complex.

A $\mathcal{G}$-equivariant $1$-form $\alpha\in\Omega_{\mathcal{G}}%
^{1}(\mathcal{N})$ is just a $\mathcal{G}$-invariant $1$-form $\alpha\in
\Omega^{1}(\mathcal{N})^{\mathcal{G}}$. It is $D$-closed if and only if it is
$\mathcal{G}$-basic, i.e., if $d\alpha=0$ and $\iota_{X_{N}}\alpha=0$ for any
$X\in\mathrm{Lie\,}\mathcal{G}$.

Let $\varpi\in\Omega_{\mathcal{G}}^{2}(N)$ be a $\mathcal{G}$-equivariant
$2$-form. Then we have $\varpi=\omega+\mu$ where $\omega\in\Omega^{2}(N)$ is
$\mathcal{G}$-invariant and $\mu\in\mathrm{Hom}\left(  \mathrm{Lie\,}%
\mathcal{G},\Omega^{0}(N)\right)  ^{\mathcal{G}}$. We have $D\omega
=0\;$if\ and only if $d\omega=0$, and $\iota_{X_{N}}\omega=d(\mu(X))\;$for
every $X\in\mathrm{Lie\,}\mathcal{G}$.

If $\pi\colon\mathcal{U}\rightarrow N$ is a principal $U(1)$ bundle and
$\Xi\in\Omega^{1}(\mathcal{U},i\mathbb{R)}$ is a connection then the curvature
form$\ \mathrm{curv}(\Xi)\in\Omega^{2}(N)$ is defined by the property
$\pi^{\ast}(\mathrm{curv}(\Xi))=\frac{i}{2\pi}d\Xi$. The (real) first Chern
class of $\mathcal{U}$ is the cohomology class of $\mathrm{curv}(\Xi)$. A
connection $\Xi\in\Omega^{1}(\mathcal{U},i\mathbb{R)}$ on $\mathcal{U}$ is
$\mathcal{G}$-invariant if $\phi_{\mathcal{U}}^{\ast}\Xi=\Xi$ for any $\phi
\in\mathcal{G}$. At the Lie algebra level this implies that $L_{X_{\mathcal{U}%
}}\Xi=0$ for any $X\in\mathrm{Lie}\mathcal{G}$, and the converse is true for
connected groups. If $\Xi$ is a $\mathcal{G}$-invariant connection then
$\frac{i}{2\pi}D(\Xi)$ projects onto a closed $\mathcal{G}$-equivariant
$2$-form $\mathrm{curv}_{\mathcal{G}}(\Xi)\in\Omega_{\mathcal{G}}^{2}(N))$
called the $\mathcal{G}$-equivariant curvature of $\Xi$. If $X\in
\mathrm{Lie}\mathcal{G}$ then we have $\mathrm{curv}_{\mathcal{G}}%
(\Xi)(X)=\mathrm{curv}(\Xi)+\mu^{\Xi}(X)$, where $\mu^{\Xi}(X)=-\frac{i}{2\pi
}\Xi(X_{\mathcal{U}})$. We say that a connection $\Xi$ is $\mathcal{G}$-flat
if $\mathrm{curv}_{\mathcal{G}}(\Xi)=0$. If $\Xi^{\prime}$ is another
$\mathcal{G}$-invariant connection we have $\Xi^{\prime}=\Xi-2\pi i(\pi^{\ast
}\lambda)$ for a $\mathcal{G}$-invariant $\lambda\in\Omega^{1}(N)$. Then
$\mathrm{curv}_{\mathcal{G}}(\Xi^{\prime})=\mathrm{curv}_{\mathcal{G}}%
(\Xi)+D\lambda$ and hence the equivariant cohomology class $[\mathrm{curv}%
_{\mathcal{G}}(\Xi)]\in\Omega_{\mathcal{G}}^{2}(N)$ does not depend on the
$\mathcal{G}$-invariant connection chosen.

\subsection{Local topological anomalies and equivariant curvature}

Let $\pi\colon\mathcal{U}\rightarrow N$ be a $\mathcal{G}$%
-equivariant\ principal $U(1)$-bundle and $\Xi$ a $\mathcal{G}$-invariant
connection. The Maurer-Cartan form on $U(1)$ is denoted by $\vartheta
=z^{-1}dz$, and $\xi\in\mathfrak{X}(U(1))$ is the $U(1)$-invariant vector
field $\xi(z)=iz$\ such that $\vartheta(\xi)=i$. We denote by $\xi
_{\mathcal{U}}\in\mathfrak{X}(\mathcal{U})$ the vector field on $\mathcal{U}$
corresponding to $\xi$. Given a section $S\colon N\rightarrow\mathcal{U}$, we
define $\rho^{S}=\frac{i}{2\pi}S^{\ast}\Xi\in\Omega^{1}(N)$. On the
trivialization $\Psi_{S}\colon N\times U(1)\rightarrow\mathcal{U}$ determined
by $S$ we have
\begin{equation}
\Psi_{S}^{\ast}\Xi=\vartheta-2\pi i\rho^{S} \label{trivialization}%
\end{equation}

Conversely, if $\rho\in\Omega^{1}(N)$ and $S$ is a section of $\mathcal{U}%
\rightarrow N$ then the form $\Xi=(\Psi_{S}^{-1})^{\ast}(\vartheta-2\pi
i\rho)$ is a connection form on $\mathcal{U}\rightarrow N$ with $\rho^{S}%
=\rho$.

The following result follows from the definitions of $\mathfrak{a}^{S}$,
$\rho^{S}$, $\mu^{\Xi}$ and equation (\ref{trivialization})

\begin{lemma}
\label{curvatura}We have

a) $\mathrm{curv}(\Xi)=d\rho^{S}$.

b) $\mu^{\Xi}(X)=-\rho^{S}(X_{N})+\mathfrak{a}^{S}(X)$ for any $X\in
\mathrm{Lie}\mathcal{G}$.

c) If $S^{\prime}=S\exp(2\pi i\Lambda)$ for $\Lambda\in\Omega^{0}(N)$, then
$\rho^{S^{\prime}}=\rho^{S}-d\Lambda$.
\end{lemma}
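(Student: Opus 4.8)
The plan is to prove each of the three parts of Lemma~\ref{curvatura} by pulling back the defining equation (\ref{trivialization}) through the section $S$ and computing directly. The central object is the trivialization identity $\Psi_S^\ast\Xi=\vartheta-2\pi i\rho^S$ on $N\times U(1)$. Since $S$ corresponds under $\Psi_S$ to the section $S_1(x)=(x,1)$, and the Maurer--Cartan form $\vartheta=z^{-1}dz$ vanishes when pulled back along the constant map $z\equiv 1$, the whole computation reduces to tracking how $\vartheta$ and $\rho^S$ behave under pullback and under the infinitesimal $\mathcal{G}$-action. I would treat parts (a) and (c) as essentially formal consequences of (\ref{trivialization}), and reserve the real work for part (b), which is where the cocycle data $\mathfrak{a}^S$ enters through equation (\ref{definitiona}).

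\textbf{Part (a).} First I would apply $\Psi_S^\ast$ to the curvature relation $\pi^\ast(\mathrm{curv}(\Xi))=\tfrac{i}{2\pi}d\Xi$. On $N\times U(1)$ the projection $\pi$ composed with $\Psi_S$ is just the first projection, so the left side becomes $\mathrm{pr}_1^\ast\mathrm{curv}(\Xi)$. For the right side, $d$ commutes with pullback, so $\tfrac{i}{2\pi}d(\Psi_S^\ast\Xi)=\tfrac{i}{2\pi}d(\vartheta-2\pi i\rho^S)=\tfrac{i}{2\pi}d\vartheta+d\rho^S$. Since $\vartheta=z^{-1}dz$ is closed, $d\vartheta=0$, and we obtain $\mathrm{curv}(\Xi)=d\rho^S$ after restricting to the zero-section (or equivalently reading off the form on the $N$-factor). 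This gives (a).

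\textbf{Part (b).} Here I would use the definition $\mu^\Xi(X)=-\tfrac{i}{2\pi}\Xi(X_{\mathcal{U}})$ and evaluate it along the image of $S$. The key is to decompose $X_{\mathcal{U}}(S(x))$ using (\ref{definitiona}): $X_{\mathcal{U}}(S(x))=S_\ast(X_N(x))+2\pi\mathfrak{a}^S(X)(x)\,\xi_{\mathcal{U}}(S(x))$. Applying $\Xi$ and using that $\Xi(\xi_{\mathcal{U}})=i$ (the connection evaluates to $i$ on the fundamental vector field, matching $\vartheta(\xi)=i$) together with $\Xi(S_\ast(X_N))=(S^\ast\Xi)(X_N)=-2\pi i\,\rho^S(X_N)$, I get $\Xi(X_{\mathcal{U}})\circ S=-2\pi i\,\rho^S(X_N)+2\pi i\,\mathfrak{a}^S(X)$. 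Multiplying by $-\tfrac{i}{2\pi}$ yields $\mu^\Xi(X)=-\rho^S(X_N)+\mathfrak{a}^S(X)$, as claimed. I expect the main obstacle to be bookkeeping the factors of $2\pi$ and $i$ and justifying $\Xi(S_\ast(X_N))=(S^\ast\Xi)(X_N)$, which is just naturality of pullback applied to the tangent vector $X_N$; once the normalizations $\Xi(\xi_{\mathcal{U}})=i$ and $\rho^S=\tfrac{i}{2\pi}S^\ast\Xi$ are pinned down correctly, the identity falls out.

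\textbf{Part (c).} For the change-of-section formula I would compute $\rho^{S'}=\tfrac{i}{2\pi}(S')^\ast\Xi$ where $S'=S\cdot\exp(2\pi i\Lambda)$. Writing $S'=R_{\exp(2\pi i\Lambda)}\circ S$ as $S$ followed by right multiplication by the $U(1)$-valued function $\exp(2\pi i\Lambda)$, the standard transformation law for a connection under a gauge-type change of section gives $(S')^\ast\Xi=S^\ast\Xi+(\exp(2\pi i\Lambda))^\ast\vartheta$. Now $(\exp(2\pi i\Lambda))^\ast\vartheta=\exp(-2\pi i\Lambda)\,d(\exp(2\pi i\Lambda))=2\pi i\,d\Lambda$, so $(S')^\ast\Xi=S^\ast\Xi+2\pi i\,d\Lambda$. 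Multiplying by $\tfrac{i}{2\pi}$ gives $\rho^{S'}=\rho^S-d\Lambda$, completing the lemma. The only point requiring care is the sign in the connection transformation law under the right $U(1)$-action, which I would verify against the compatibility between $\vartheta$ and the chosen orientation of $\xi$.
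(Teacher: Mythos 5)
Your proof is correct; all three computations check out, including the signs and the factors of $2\pi i$ (in particular the key normalizations $\Xi(\xi_{\mathcal{U}})=i$ and $S^{\ast}\Xi=-2\pi i\rho^{S}$ in part (b), and $g^{\ast}\vartheta=2\pi i\,d\Lambda$ in part (c)). The paper omits the proof entirely, stating only that the lemma follows from the definitions of $\mathfrak{a}^{S}$, $\rho^{S}$, $\mu^{\Xi}$ and equation (\ref{trivialization}), which is precisely the route you take, with the details filled in.
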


Note that $\mathfrak{a}^{S}$ and $\rho^{S}$ satisfy the following set of
equations, that are similar to the Stora-Zumino descent equations
\begin{align*}
\mathrm{curv}(\Xi)  &  =d\rho^{S}\\
L_{X}\rho^{S}+d(\mathfrak{a}^{S}(X))  &  =0.
\end{align*}

\begin{proposition}
\label{triviality}Let $\Xi$ be a $\mathcal{G}$-invariant connection on
$\mathcal{U}\rightarrow N$. Then for any $\rho\in\Omega^{1}(N)$ such that
$d\rho=\mathrm{curv}(\Xi)$ there exists a section $S\colon N\rightarrow
\mathcal{U}$ such that $\rho=\frac{i}{2\pi}S^{\ast}(\Xi)$. Any other section
satisfying this condition is of the form $S\cdot\exp(2\pi ir)$ for
$r\in\mathbb{R}$.
\end{proposition}

\begin{proof}
As $\mathcal{U}\rightarrow N$ is trivial, there exist a section $S_{0}\colon
N\rightarrow\mathcal{U}$. As $d\rho^{S_{0}}=\mathrm{curv}(\Xi)=d\rho$ and
$H^{1}(N)=0$ we have $\rho=\rho^{S_{0}}-d\Lambda$ for some $\Lambda\in
\Omega^{0}(N)$. We define the section $S=S_{0}\cdot\exp(2\pi i\Lambda)$ and we
have $\frac{i}{2\pi}S^{\ast}(\Xi)=\rho^{S_{0}}-d\Lambda=\rho$. If $S^{\prime}$
is another section satisfying this condition we have $S^{\prime}=S\cdot
\exp(2\pi ir)$ and $dr=\rho^{S}-\rho^{S^{\prime}}=0$.
\end{proof}

\begin{proposition}
\label{ThLocalTopological}If $\mathcal{U}\rightarrow N$ admits $\mathcal{G}%
_{0}$-invariant connections then the following conditions are equivalent

p$_{1}$) There exists a $\mathcal{G}_{0}$-equivariant section of
$\mathcal{U}\rightarrow N$.

p$_{2}$) $[\mathfrak{a}^{\mathcal{U}}]=0$ on the cohomology $H^{1}%
(\mathrm{Lie}\mathcal{G}$, $\Omega^{0}(N))$.

p$_{3}$) The first $\mathcal{G}_{0}$-equivariant Chern class $c_{1,\mathcal{G}%
_{0}}(\mathcal{U})\in H_{\mathcal{G}_{0}}^{2}(N)$ vanishes.
\end{proposition}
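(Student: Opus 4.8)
The plan is to prove the two equivalences $p_1 \Leftrightarrow p_2$ and $p_2 \Leftrightarrow p_3$. The first is the Lie-algebra analogue of Proposition \ref{SectionH1}, and the second is the bridge between the cocycle $\mathfrak{a}^S$ and the equivariant curvature supplied by Lemma \ref{curvatura}. For $p_1 \Leftrightarrow p_2$ I would argue exactly as in Proposition \ref{SectionH1} but with $\mathfrak{a}^S$ replacing $\alpha^S$, using that $S$ is $\mathcal{G}_0$-equivariant precisely when $\mathfrak{a}^S(X)=0$ for all $X\in\mathrm{Lie}\mathcal{G}$. If such a section exists then $[\mathfrak{a}^{\mathcal{U}}]=0$, giving $p_1\Rightarrow p_2$. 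Conversely, if $[\mathfrak{a}^{\mathcal{U}}]=0$, choose any section $S$; then $\mathfrak{a}^S(X)=L_X A$ for some $A\in\Omega^0(N)$, and Lemma \ref{variacionSeccion} shows that $S'=S\cdot\exp(2\pi iA)$ satisfies $\mathfrak{a}^{S'}(X)=\mathfrak{a}^S(X)-L_X A=0$, so $S'$ is $\mathcal{G}_0$-equivariant.

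For $p_2\Rightarrow p_3$ I would first invoke $p_2\Rightarrow p_1$ to produce a $\mathcal{G}_0$-equivariant section $S$, so $\mathfrak{a}^S=0$. I would then check that $\rho^S$ is $\mathcal{G}_0$-invariant: since $\phi_{\mathcal{U}}\circ S=S\circ\phi_N$ and $\phi_{\mathcal{U}}^{\ast}\Xi=\Xi$, one gets $\phi_N^{\ast}\rho^S=\frac{i}{2\pi}(\phi_{\mathcal{U}}\circ S)^{\ast}\Xi=\frac{i}{2\pi}S^{\ast}\Xi=\rho^S$ for $\phi\in\mathcal{G}_0$, so $\rho^S\in\Omega^1(N)^{\mathcal{G}_0}=\Omega_{\mathcal{G}_0}^1(N)$. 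By Lemma \ref{curvatura} we have $\mathrm{curv}(\Xi)=d\rho^S$ and $\mu^{\Xi}(X)=-\rho^S(X_N)$, hence
\begin{equation*}
\mathrm{curv}_{\mathcal{G}_0}(\Xi)(X)=\mathrm{curv}(\Xi)+\mu^{\Xi}(X)=d\rho^S-\iota_{X_N}\rho^S=(D\rho^S)(X).
\end{equation*}
Thus $\mathrm{curv}_{\mathcal{G}_0}(\Xi)=D\rho^S$ is equivariantly exact and $c_{1,\mathcal{G}_0}(\mathcal{U})=0$.

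For $p_3\Rightarrow p_2$, assume $c_{1,\mathcal{G}_0}(\mathcal{U})=[\mathrm{curv}_{\mathcal{G}_0}(\Xi)]=0$, so there is a $\mathcal{G}_0$-invariant $\lambda\in\Omega^1(N)^{\mathcal{G}_0}$ with $\mathrm{curv}_{\mathcal{G}_0}(\Xi)=D\lambda$. Comparing the $2$-form and $0$-form parts yields $\mathrm{curv}(\Xi)=d\lambda$ and $\mu^{\Xi}(X)=-\lambda(X_N)$. Taking any section $S$ and using Lemma \ref{curvatura}b gives $-\rho^S(X_N)+\mathfrak{a}^S(X)=-\lambda(X_N)$, so $\mathfrak{a}^S(X)=(\rho^S-\lambda)(X_N)$. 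The $1$-form $\eta=\rho^S-\lambda$ is closed since $d\eta=\mathrm{curv}(\Xi)-\mathrm{curv}(\Xi)=0$, and assumption (a1) ($H^1(N)=0$) gives $\eta=dA$ for some $A\in\Omega^0(N)$. Then $\mathfrak{a}^S(X)=\iota_{X_N}dA=L_X A$, so $[\mathfrak{a}^{\mathcal{U}}]=0$.

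The main obstacle I anticipate is bookkeeping in the Cartan model rather than a genuine difficulty: one must be careful that $\rho^S$ is guaranteed to be $\mathcal{G}_0$-invariant, and hence a legitimate equivariant $1$-form, only when $S$ is itself $\mathcal{G}_0$-equivariant. This is precisely why the direction $p_2\Rightarrow p_3$ must first produce such a section instead of working with an arbitrary one, whereas $p_3\Rightarrow p_2$ can use any section because there the invariance of $\lambda$ compensates. The matching of the $0$-form component of $\mathrm{curv}_{\mathcal{G}_0}(\Xi)$ with $\mathfrak{a}^S$ through Lemma \ref{curvatura}b, combined with $H^1(N)=0$, is what makes both implications in $p_2\Leftrightarrow p_3$ go through.
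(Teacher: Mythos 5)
Your proof is correct and follows essentially the same route as the paper's: the paper establishes the cycle $p_1\Rightarrow p_3\Rightarrow p_2\Rightarrow p_1$ using exactly the ingredients you use (Lemma \ref{curvatura}, Lemma \ref{variacionSeccion}, and the hypothesis $H^{1}(N)=0$), and your two separate equivalences are just a reorganization of those implications. The only cosmetic difference is in $p_3\Rightarrow p_2$, where the paper invokes Proposition \ref{triviality} to adjust the section so that $\rho^{S}=\beta$ and obtains $\mathfrak{a}^{S}=0$ identically, while you keep an arbitrary section and exhibit $\mathfrak{a}^{S}(X)=L_{X}A$ directly from $H^{1}(N)=0$ --- the same cohomological input that powers Proposition \ref{triviality} in the first place.
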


\begin{proof}
$p_{1}$)$\Rightarrow p_{3}$) If $S$ is a $\mathcal{G}_{0}$-equivariant section
then $\mathfrak{a}^{S}=0$, $\rho^{S}$ is $\mathcal{G}_{0}$-invariant and by
Proposition \ref{curvatura} we have $D\rho^{S}=0$.

$p_{3}$)$\Rightarrow p_{2}$) If $D\beta=\mathrm{curv}_{\mathcal{G}}(\Xi)$ for
a $\mathcal{G}_{0}$-invariant $\beta\in\Omega^{1}(N)$ then $d\beta
=\mathrm{curv}(\Xi)$ and $\iota_{X_{N}}\beta=-\mu^{\Xi}(X)$. By Proposition
\ref{triviality}\ there exists a section $S$ such that $\rho^{S}=\beta$, and
by Proposition \ref{curvatura} we have $\mathfrak{a}^{S}(X)=\mu^{\Xi}%
(X)+\beta(X_{N})=0$.

$p_{2}$)$\Rightarrow p_{1}$) If for a section we have $\mathfrak{a}%
^{S}(X)=L_{X}\Lambda$ for $\Lambda\in\Omega^{0}(N)$, we define $S^{\prime
}=S\exp(2\pi i\Lambda)$ and we have $\mathfrak{a}^{S^{\prime}}=\mathfrak{a}%
^{S}(X)-L_{X}\Lambda=0$ and $S^{\prime}$ is $\mathcal{G}_{0}$-equivariant.
\end{proof}

As commented in the Introduction anomalies are usually studied in terms of the
topology of the quotient space, but we prefer to work with equivariant
cohomology because it can be extended to local cohomology. However, the
topology of the quotient can be used to obtain necessary conditions for
anomaly cancellation.

Suppose that $\mathcal{H}\subset\mathcal{G}$ is a a subgroup that acts freely
on $N$ and we have a well defined quotient bundle $\mathcal{U}/\mathcal{H}%
\rightarrow N/\mathcal{H}$. If $\mathcal{U}\rightarrow N$ admits a
$\mathcal{G}$-equivariant section $S$, then $S$ is also $\mathcal{H}$
invariant and the first Chern class $c_{1}(\mathcal{U}/\mathcal{H})\in
H^{2}(N/\mathcal{H})$ vanishes. Hence we have the following

\begin{proposition}
If $c_{1}(\mathcal{U}/\mathcal{H})\neq0$ for some $\mathcal{H}\subset
\mathcal{G}$, then $\mathcal{U}\rightarrow N$ is a\ non trivial $\mathcal{G}%
$-equivariant $U(1)$-bundle.
\end{proposition}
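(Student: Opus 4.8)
The plan is to prove the contrapositive: assuming that $\mathcal{U}\rightarrow N$ is a trivial $\mathcal{G}$-equivariant $U(1)$-bundle, I will show that $c_{1}(\mathcal{U}/\mathcal{H})=0$ for every subgroup $\mathcal{H}\subset\mathcal{G}$ acting freely on $N$. By Proposition \ref{SectionH1} (equivalently, directly by the definition of a trivial $\mathcal{G}$-equivariant bundle) this assumption provides a $\mathcal{G}$-equivariant section $S\colon N\rightarrow\mathcal{U}$, so that $\phi_{\mathcal{U}}\circ S=S\circ\phi_{N}$ for every $\phi\in\mathcal{G}$.

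Since $\mathcal{H}$ is a subgroup of $\mathcal{G}$, the same identity holds for every $\phi\in\mathcal{H}$; that is, $S$ is $\mathcal{H}$-equivariant. The next step is to descend $S$ to the quotient bundle. Because $\mathcal{H}$ acts freely on $N$, the projection $q\colon N\rightarrow N/\mathcal{H}$ is a principal $\mathcal{H}$-bundle and $\mathcal{U}/\mathcal{H}\rightarrow N/\mathcal{H}$ is a well-defined $U(1)$-bundle. The $\mathcal{H}$-equivariance of $S$ means that $S$ sends $\mathcal{H}$-orbits in $N$ to $\mathcal{H}$-orbits in $\mathcal{U}$, compatibly with the projections, so $S$ factors through a smooth section $\underline{S}\colon N/\mathcal{H}\rightarrow\mathcal{U}/\mathcal{H}$ satisfying $\underline{S}\circ q=p\circ S$, where $p\colon\mathcal{U}\rightarrow\mathcal{U}/\mathcal{H}$ denotes the quotient map.

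Finally, a principal $U(1)$-bundle that admits a global section is trivial, and a trivial bundle has vanishing first Chern class; hence $c_{1}(\mathcal{U}/\mathcal{H})=0$. This contradicts the hypothesis $c_{1}(\mathcal{U}/\mathcal{H})\neq0$, and therefore $\mathcal{U}\rightarrow N$ cannot be a trivial $\mathcal{G}$-equivariant $U(1)$-bundle.

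The only step requiring genuine care is the descent of $S$ to $\underline{S}$: one must verify that the induced map on quotients is well defined, i.e. independent of the choice of representative in each $\mathcal{H}$-orbit, and that it inherits smoothness from the freeness of the $\mathcal{H}$-action together with the local triviality of $q$. Both facts are standard properties of quotients by free (proper) actions, so no real obstacle arises; the argument is essentially the contrapositive formulation of the remark preceding the statement.
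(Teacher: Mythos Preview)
Your proof is correct and follows exactly the same approach as the paper: the paper's argument is precisely the contrapositive sketched in the paragraph preceding the proposition, namely that a $\mathcal{G}$-equivariant section is in particular $\mathcal{H}$-equivariant and therefore descends to trivialize $\mathcal{U}/\mathcal{H}$. Your write-up is simply a more detailed version of that same reasoning, including the explicit verification of the descent step.
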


This condition is frequently used to show that an anomaly does not cancel as
$c_{1}(\mathcal{U}/\mathcal{H})$ can be computed by using topological
techniques. However, we recall that, due to the locality problem, in this way
we obtain necessary conditions for physical anomaly cancellation, but they are
not sufficient.

\subsection{Global topological anomalies and equivariant holonomy}

In order to obtain conditions for\ a $\mathcal{G}$-equivariant $U(1)$-bundle
to be trivial in terms of a connection we need to obtain a condition analogous
to condition $p_{3}$) and valid for non connected groups. One possibility
could be to consider the integer equivariant Chern class. However, we need a
condition that should be able to be generalized to local cohomology, and that
is not the case for the integer cohomology. We show that this problem can be
solved by introducing the equivariant holonomy of a invariant connection.
Although it seems to be a natural concept, we have been unable\ to find a
detailed study of it\ in the literature. In this paper we give only the basic
facts needed for our characterization of anomaly cancellation. We left\ a more
detailed study of the equivariant holonomy for a separate paper.

\subsection{Equivariant Holonomy}

Let $\Xi$ be a $\mathcal{G}$-invariant connection on a $\mathcal{G}%
$-equivariant $U(1)$-bundle $\mathcal{U}\rightarrow N$ and let $I$ denote the
interval $[0,1]$. If $\phi\in\mathcal{G}$, we define $\mathcal{C}^{\phi
}=\{\gamma\colon I\rightarrow N:\gamma(1)=\phi(\gamma(0))\}$, and
$\mathcal{C}_{x}^{\phi}=\{\gamma\colon I\rightarrow N:\gamma(0)=x$ and
$\gamma(1)=\phi(x)\}$. If $\gamma\in\mathcal{C}_{x}^{\phi}$ and $y\in
\mathcal{U}$, with $\pi(y)=x$, we denote by $\overline{\gamma}\colon
I\rightarrow\mathcal{U}$ the $\Xi$-horizontal lift of $\gamma$ with
$\overline{\gamma}(0)=y$. We have $\pi(\overline{\gamma}(1))=\pi
(\phi_{\mathcal{U}}(y))=\phi(x)$, and hence there exists $h\in\mathbb{R}%
/\mathbb{Z}$ such that $\overline{\gamma}(1)=(\phi_{\mathcal{U}}(y))\exp(2\pi
ih)$. It can be easily seen that $h$ does not depend on the $y$ chosen and we
denote it by $\mathrm{hol}_{\phi}^{\Xi}(\gamma)$ and we call it the $\phi
$-equivariant holonomy of $\Xi$ on $\gamma$. Note that for $\phi
=1_{\mathcal{G}}$ the $1_{\mathcal{G}}$-equivariant holonomy coincides with
the ordinary holonomy. By using equation (\ref{trivialization}) we obtain the following

\begin{lemma}
\label{horizontalLift}On the trivialization determined by a section $S$ the
horizontal lift of $\gamma\colon I\rightarrow N$ with $\overline{\gamma
}(0)=(\gamma(0),u)$ is given by $\overline{\gamma}(s)=(\gamma(s),u\cdot
\exp(2\pi i%
{\textstyle\int\nolimits_{0}^{s}}
\rho_{\gamma(t)}^{S}\dot{\gamma}(t)dt))$.
\end{lemma}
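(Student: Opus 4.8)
The plan is to reduce the problem to a first-order ordinary differential equation by unwinding the definition of horizontal lift in the trivialization $\Psi_S$. Recall that $\overline{\gamma}$ is the $\Xi$-horizontal lift of $\gamma$ precisely when it projects onto $\gamma$ (i.e.\ $\pi\circ\overline{\gamma}=\gamma$), satisfies the initial condition $\overline{\gamma}(0)=(\gamma(0),u)$, and is annihilated by the connection form, $\Xi(\dot{\overline{\gamma}}(s))=0$ for all $s$. First I would write the candidate lift in the trivialization as $\overline{\gamma}(s)=(\gamma(s),a(s))$ for an unknown curve $a\colon I\to U(1)$ with $a(0)=u$; the first component being $\gamma(s)$ guarantees $\pi\circ\overline{\gamma}=\gamma$ automatically, so only horizontality remains to be imposed.

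Next I would compute the horizontality condition using the expression $\Psi_S^{\ast}\Xi=\vartheta-2\pi i\rho^{S}$ from equation (\ref{trivialization}). Since $\rho^{S}$ is pulled back from $N$, it only pairs with the $N$-component $\dot{\gamma}(s)$, while the Maurer--Cartan form $\vartheta=z^{-1}dz$ only sees the $U(1)$-component; applying $\Psi_S^{\ast}\Xi$ to $\dot{\overline{\gamma}}(s)=(\dot{\gamma}(s),\dot{a}(s))$ therefore gives
\begin{equation*}
0=\vartheta(\dot{a}(s))-2\pi i\,\rho^{S}_{\gamma(s)}(\dot{\gamma}(s))=a(s)^{-1}\dot{a}(s)-2\pi i\,\rho^{S}_{\gamma(s)}(\dot{\gamma}(s)).
\end{equation*}
This is the linear ODE $a(s)^{-1}\dot{a}(s)=2\pi i\,\rho^{S}_{\gamma(s)}(\dot{\gamma}(s))$, and integrating from $0$ with $a(0)=u$ yields exactly $a(s)=u\cdot\exp\!\big(2\pi i\int_{0}^{s}\rho^{S}_{\gamma(t)}\dot{\gamma}(t)\,dt\big)$, which is the claimed formula.

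There is no serious obstacle here; the only points requiring care are routine. One must check that the stated curve is genuinely valued in $U(1)$: because $\Xi$ takes values in $i\mathbb{R}$, the form $\rho^{S}=\tfrac{i}{2\pi}S^{\ast}\Xi$ is real-valued, so $2\pi i\,\rho^{S}_{\gamma(t)}(\dot{\gamma}(t))\in i\mathbb{R}$ and the exponential has modulus one. One should also verify directly that this explicit $a(s)$ solves the ODE above, and then invoke uniqueness of solutions of the linear initial-value problem to conclude that this is the horizontal lift rather than merely a horizontal curve. The bookkeeping of the factor $-2\pi i$ and the convention $\vartheta(\xi)=i$ (so that $a^{-1}\dot a=\vartheta(\dot a)$) is the one place where a sign error could creep in, so I would track those constants explicitly.
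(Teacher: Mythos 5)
Your proof is correct and follows essentially the same route the paper intends: the paper simply asserts that the lemma follows from equation (\ref{trivialization}), and your argument is the natural unwinding of that, writing the lift as $(\gamma(s),a(s))$, imposing $(\Psi_S^{\ast}\Xi)(\dot{\overline{\gamma}})=0$ to get the ODE $a^{-1}\dot a=2\pi i\,\rho^S_{\gamma(s)}(\dot\gamma(s))$, and integrating. The sign and the reality of $\rho^S=\tfrac{i}{2\pi}S^{\ast}\Xi$ both check out against the paper's conventions.
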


From Lemma \ref{horizontalLift} and the definition of $\mathrm{hol}_{\phi
}^{\Xi}(\gamma)$ we conclude the following

\begin{proposition}
\label{loghol}If $S\colon N\rightarrow\mathcal{U}$ is a section of
$\mathcal{U}$ , then for any $\gamma\in\mathcal{C}_{x}^{\phi}$
\[
\mathrm{hol}_{\phi}^{\Xi}(\gamma)=\int_{\gamma}\rho^{S}-\alpha_{\phi}^{S}(x).
\]

\end{proposition}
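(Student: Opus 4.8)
The plan is to carry out the whole computation inside the trivialization $\Psi_{S}$ determined by the section $S$, where the horizontal lift is given explicitly by Lemma \ref{horizontalLift}, and then to compare the endpoint of that lift against $\phi_{\mathcal{U}}(y)$, since it is precisely this quantity (not $S(\phi(x))\cdot u$) that enters the definition of the equivariant holonomy. Fix $x\in N$, a path $\gamma\in\mathcal{C}_{x}^{\phi}$, and a starting point $y\in\mathcal{U}$ with $\pi(y)=x$. Because $S$ trivializes $\mathcal{U}$, I write $y=\Psi_{S}(x,u)=S(x)\cdot u$ for a unique $u\in U(1)$.

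First I would apply Lemma \ref{horizontalLift} to the lift starting at $(x,u)$. Using $\gamma(1)=\phi(x)$ and $\int_{0}^{1}\rho_{\gamma(t)}^{S}\dot{\gamma}(t)\,dt=\int_{\gamma}\rho^{S}$, its endpoint is
\[
\overline{\gamma}(1)=\Psi_{S}\Big(\phi(x),\,u\cdot\exp\big(2\pi i\textstyle\int_{\gamma}\rho^{S}\big)\Big)=S(\phi(x))\cdot u\cdot\exp\big(2\pi i\textstyle\int_{\gamma}\rho^{S}\big).
\]
Next I would rewrite the defining relation of the cocycle, $\phi_{\mathcal{U}}^{-1}\circ S\circ\phi_{N}=S\cdot\exp(-2\pi i\alpha_{\phi}^{S})$, in the form $\phi_{\mathcal{U}}(S(x))=S(\phi(x))\cdot\exp(2\pi i\alpha_{\phi}^{S}(x))$ by applying $\phi_{\mathcal{U}}$ to both sides and using that $\mathcal{G}$ acts by $U(1)$-automorphisms. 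The same $U(1)$-automorphism property gives $\phi_{\mathcal{U}}(y)=\phi_{\mathcal{U}}(S(x))\cdot u=S(\phi(x))\cdot u\cdot\exp(2\pi i\alpha_{\phi}^{S}(x))$.

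Comparing the two displayed expressions, the common factor $S(\phi(x))\cdot u$ cancels and yields $\overline{\gamma}(1)=\phi_{\mathcal{U}}(y)\cdot\exp\big(2\pi i(\int_{\gamma}\rho^{S}-\alpha_{\phi}^{S}(x))\big)$. By the definition of $\mathrm{hol}_{\phi}^{\Xi}(\gamma)$ as the unique $h\in\mathbb{R}/\mathbb{Z}$ with $\overline{\gamma}(1)=\phi_{\mathcal{U}}(y)\exp(2\pi ih)$, this reads off $\mathrm{hol}_{\phi}^{\Xi}(\gamma)=\int_{\gamma}\rho^{S}-\alpha_{\phi}^{S}(x)$, as claimed; note that $u$ drops out, which recovers the already-observed independence of the basepoint $y$.

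The calculation itself is routine once Lemma \ref{horizontalLift} is invoked, so the only delicate point is the bookkeeping: one must compare the lift's endpoint with $\phi_{\mathcal{U}}(y)$ rather than with the naive ``same-fiber'' reference $S(\phi(x))\cdot u$, and the discrepancy between these two is exactly the cocycle term $\alpha_{\phi}^{S}(x)$. Getting the sign of $\alpha_{\phi}^{S}$ correct from its defining equation, together with the commutation of $\phi_{\mathcal{U}}$ with the $U(1)$-action, is where care is required.
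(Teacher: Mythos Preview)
Your proof is correct and follows exactly the approach the paper intends: the paper states only that the result follows from Lemma~\ref{horizontalLift} and the definition of $\mathrm{hol}_{\phi}^{\Xi}(\gamma)$, and what you have written is precisely the routine unpacking of that sentence, with the sign of $\alpha_{\phi}^{S}$ tracked correctly from its defining equation.
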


\begin{remark}
\emph{The preceding Proposition can be used to give an alternative definition
of the cocycle }$\alpha_{\phi}$\emph{, by showing that }$\int_{\gamma}\rho
^{S}-\mathrm{hol}_{\phi}^{\Xi}(\gamma)$\emph{ does not depend on }$\gamma\in
C_{x}^{\phi}$\emph{. This is the approach used in \cite{CSconnections} to
define the Chern-Simons line bundles.}
\end{remark}

If $\gamma,\gamma^{\prime}\colon I\rightarrow N$ are curves on $N$, we define
the inverse curve $\overleftarrow{\gamma}(t)=\gamma(1-t)$, and if
$\gamma(1)=\gamma^{\prime}(0)$ we define $\gamma\ast\gamma^{\prime}\colon
I\rightarrow\mathbb{R}$ by $\gamma\ast\gamma^{\prime}(t)=\gamma(2t)$ for
$t\in\lbrack0,1/2]$ and $\gamma\ast\gamma^{\prime}(t)=\gamma^{\prime}(2t-1)$
for $t\in\lbrack1/2,1]$. If $\phi$ is a diffeomorphisms of $N$ then we define
$(\phi\cdot\gamma)(t)=\phi(\gamma(t)).$

\begin{proposition}
\label{localityAlfa}Let $\Xi$ be a $\mathcal{G}$-invariant connection,
$\phi,\phi^{\prime}\in\mathcal{G}$, $x,y\in N$, $\gamma\in\mathcal{C}%
_{x}^{\phi}$ and let $\zeta$ be a curve joining $y$ and $x$. We have

a) $\phi^{\prime}\cdot\gamma\in\mathcal{C}_{\phi x}^{\phi}$ and $\mathrm{hol}%
_{\phi}^{\Xi}(\phi^{\prime}\gamma)=\mathrm{hol}_{\phi}^{\Xi}(\gamma)$.

b) $\gamma^{\prime}=\zeta\ast\gamma\ast(\phi\cdot\overleftarrow{\zeta}%
)\in\mathcal{C}_{y}^{\phi}$ and $\mathrm{hol}_{\phi}^{\Xi}(\gamma^{\prime
})=\mathrm{hol}_{\phi}^{\Xi}(\gamma)$.

c) $\alpha_{\phi}^{S}(x)=\alpha_{\phi}^{S}(y)+\int_{\zeta}(\phi^{\ast}\rho
^{S}-\rho^{S})$.

d) $\delta\alpha_{\phi}^{S}=\phi^{\ast}\rho^{S}-\rho^{S}$.
\end{proposition}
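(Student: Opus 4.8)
The plan is to treat the four assertions in the listed order, deriving (a) and (b) directly from the behaviour of $\Xi$-horizontal lifts and deducing (c) and (d) from (b) by substituting into Proposition~\ref{loghol}. The single fact that drives everything is that, since $\Xi$ is $\mathcal{G}$-invariant, each bundle automorphism $\phi_{\mathcal{U}}$ preserves the horizontal distribution of $\Xi$; hence $\phi_{\mathcal{U}}$ carries $\Xi$-horizontal lifts to $\Xi$-horizontal lifts, it commutes with the right $U(1)$-action (being a $U(1)$-automorphism), and horizontal lifts reverse under $t\mapsto 1-t$.

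For (a) I would take the horizontal lift $\overline{\gamma}$ of $\gamma$ with $\overline{\gamma}(0)=y$ over $x$, note that $\phi_{\mathcal{U}}\circ\overline{\gamma}$ is then the horizontal lift of the pushed-forward curve starting at $\phi_{\mathcal{U}}(y)$, and apply $\phi_{\mathcal{U}}$ to the defining relation $\overline{\gamma}(1)=\phi_{\mathcal{U}}(\overline{\gamma}(0))\exp(2\pi i\,\mathrm{hol}^{\Xi}_{\phi}(\gamma))$. Because $\phi_{\mathcal{U}}$ commutes with the $U(1)$-action, the phase $\exp(2\pi i\,\mathrm{hol}^{\Xi}_{\phi}(\gamma))$ passes through unchanged, and comparison with the defining relation for the transported curve (whose basepoint is now $\phi x$, so that it again lies in $\mathcal{C}^{\phi}_{\phi x}$) identifies its $\phi$-holonomy with that of $\gamma$. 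For (b) I would lift $\gamma'=\zeta\ast\gamma\ast(\phi\cdot\overleftarrow{\zeta})$ segment by segment from a point $y_{0}$ over $y$: lifting $\zeta$ ends at some $\overline{\zeta}(1)$ over $x$; lifting $\gamma$ from there and using the basepoint-independence of the holonomy gives endpoint $\phi_{\mathcal{U}}(\overline{\zeta}(1))\exp(2\pi i\,\mathrm{hol}^{\Xi}_{\phi}(\gamma))$; and, because by $\mathcal{G}$-invariance the horizontal lift of $\phi\cdot\overleftarrow{\zeta}$ is $\phi_{\mathcal{U}}$ applied to the reversed horizontal lift of $\zeta$, while horizontal transport is $U(1)$-equivariant, the last segment exactly cancels the effect of $\zeta$ and carries the phase to the endpoint $\phi_{\mathcal{U}}(y_{0})\exp(2\pi i\,\mathrm{hol}^{\Xi}_{\phi}(\gamma))$; comparison with the defining relation for $\gamma'$ gives $\mathrm{hol}^{\Xi}_{\phi}(\gamma')=\mathrm{hol}^{\Xi}_{\phi}(\gamma)$. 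The same computation can instead be run through the explicit lift of Lemma~\ref{horizontalLift}.

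Parts (c) and (d) are then formal. For (c) I would apply Proposition~\ref{loghol} to $\gamma\in\mathcal{C}^{\phi}_{x}$ and to $\gamma'\in\mathcal{C}^{\phi}_{y}$, compute $\int_{\gamma'}\rho^{S}-\int_{\gamma}\rho^{S}=\int_{\zeta}\rho^{S}+\int_{\phi\cdot\overleftarrow{\zeta}}\rho^{S}=-\int_{\zeta}(\phi^{\ast}\rho^{S}-\rho^{S})$ using $\int_{\phi\cdot\overleftarrow{\zeta}}\rho^{S}=-\int_{\zeta}\phi^{\ast}\rho^{S}$, and then invoke the equality of holonomies from (b) to solve for $\alpha^{S}_{\phi}(x)-\alpha^{S}_{\phi}(y)$, obtaining the asserted $+\int_{\zeta}(\phi^{\ast}\rho^{S}-\rho^{S})$. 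For (d) I would observe that (c) states precisely that $\int_{\zeta}(\phi^{\ast}\rho^{S}-\rho^{S})$ computes the increment in $\mathbb{R}/\mathbb{Z}$ of the function $\alpha^{S}_{\phi}$ along every curve $\zeta$; since by the definition of $\delta$ (and Lemma~\ref{modZ}, which lifts $\alpha^{S}_{\phi}$ to a real function) the same increment is given by $\int_{\zeta}\delta\alpha^{S}_{\phi}$, the two smooth $1$-forms $\phi^{\ast}\rho^{S}-\rho^{S}$ and $\delta\alpha^{S}_{\phi}$ have equal integrals over all (short) paths and therefore coincide.

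The step I expect to require the most care is (b): one must propagate the single phase factor $\exp(2\pi i\,\mathrm{hol}^{\Xi}_{\phi}(\gamma))$ through the third segment, which forces the simultaneous use of $\mathcal{G}$-invariance of $\Xi$ (to recognise the horizontal lift of $\phi\cdot\overleftarrow{\zeta}$) and of $U(1)$-equivariance of horizontal transport (to slide the constant phase past it) without letting the transport along $\zeta$ reappear; keeping this bookkeeping honest, rather than any deep difficulty, is the crux.
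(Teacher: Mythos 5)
Your proposal is correct and follows essentially the same route as the paper: parts a) and b) from the $\mathcal{G}$-invariance of $\Xi$ acting on horizontal lifts (the paper leaves these as immediate), part c) by applying Proposition~\ref{loghol} to both $\gamma$ and $\gamma'=\zeta\ast\gamma\ast(\phi\cdot\overleftarrow{\zeta})$ together with $\int_{\phi\cdot\overleftarrow{\zeta}}\rho^{S}=-\int_{\zeta}\phi^{\ast}\rho^{S}$, and part d) by differentiating c). The only blemish is a notational slip in your part a), where the automorphism transporting the curve is $\phi'_{\mathcal{U}}$ rather than $\phi_{\mathcal{U}}$ (a conflation already present in the paper's own statement of a)); this does not affect b)--d), which only use the invariance under $\phi$ itself.
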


\begin{proof}
a) and b) easily follows from the definition of $\mathrm{hol}_{\phi}^{\Xi
}(\gamma)$ and the invariance of $\Xi$ and d) follows from c). We prove c). If
$\gamma\in\mathcal{C}_{x}^{\phi}$ and $\gamma^{\prime}=\zeta\ast\gamma
\ast(\phi\cdot\overleftarrow{\zeta})$ then by using b) we obtain
\begin{align*}
\alpha_{\phi}^{S}(x)  &  =%
{\textstyle\int\nolimits_{\gamma}}
\rho^{S}-\mathrm{hol}_{\phi}^{\Xi}(\gamma)=%
{\textstyle\int\nolimits_{\gamma^{\prime}}}
\rho^{S}-%
{\textstyle\int\nolimits_{\zeta}}
\rho^{S}-%
{\textstyle\int\nolimits_{\phi\cdot\overleftarrow{\zeta}}}
\rho^{S}-\mathrm{hol}_{\phi}^{\Xi}(\gamma^{\prime})\\
&  =\alpha_{\phi}^{S}(y)-%
{\textstyle\int\nolimits_{\zeta}}
\rho^{S}+%
{\textstyle\int\nolimits_{\zeta}}
\phi^{\ast}\rho^{S}=\alpha_{\phi}^{S}(y)+%
{\textstyle\int\nolimits_{\zeta}}
(\phi^{\ast}\rho^{S}-\rho^{S})\text{.}%
\end{align*}

\end{proof}

The following results determines necessary and sufficient conditions for
global topological anomaly cancellation

\begin{theorem}
\label{ThGlobalTopological}If $\Xi$ is a $\mathcal{G}$-invariant connection
then the following conditions are equivalent

g$_{1}$) There exists a $\mathcal{G}$-equivariant section of $\mathcal{U}%
\rightarrow N$.

g$_{2}$) $[\alpha^{\mathcal{U}}]=0$ on $H^{1}(\mathcal{G}$, $\Omega
^{0}(N)/\mathbb{Z})$.

g$_{3}$) There exists $\beta\in\Omega^{1}(N)^{\mathcal{G}}$ such that
$\mathrm{hol}_{\phi}^{\Xi}(\gamma)=\int_{\gamma}\beta$ for any $\phi
\in\mathcal{G}$, and $\gamma\in\mathcal{C}^{\phi}$. Furthermore in that case
we have $D\beta=\mathrm{curv}_{\mathcal{G}}(\Xi)$.
\end{theorem}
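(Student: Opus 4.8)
The plan is to treat g$_{1}$) $\Leftrightarrow$ g$_{2}$) as essentially settled and to concentrate the work on the holonomy condition g$_{3}$). The equivalence g$_{1}$) $\Leftrightarrow$ g$_{2}$) is immediate, since g$_{1}$) and g$_{2}$) are verbatim conditions b) and c) of Proposition \ref{SectionH1}, whose standing hypotheses (topological triviality of $\mathcal{U}\to N$, connectedness of $N$, and $H^{1}(N)=0$) are exactly assumption (a1). Thus the genuine content is to tie g$_{3}$) to the existence of a $\mathcal{G}$-equivariant section, which I would do by proving g$_{1}$) $\Rightarrow$ g$_{3}$) and g$_{3}$) $\Rightarrow$ g$_{1}$) directly, obtaining the supplementary identity $D\beta=\mathrm{curv}_{\mathcal{G}}(\Xi)$ along the way.

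For g$_{1}$) $\Rightarrow$ g$_{3}$), I would take a $\mathcal{G}$-equivariant section $S$ and set $\beta=\rho^{S}$. From $\phi_{\mathcal{U}}\circ S=S\circ\phi_{N}$ and the $\mathcal{G}$-invariance of $\Xi$ one gets $\phi_{N}^{\ast}\rho^{S}=\frac{i}{2\pi}S^{\ast}\phi_{\mathcal{U}}^{\ast}\Xi=\rho^{S}$, so $\beta\in\Omega^{1}(N)^{\mathcal{G}}$. Equivariance gives $\alpha_{\phi}^{S}=0$, so Proposition \ref{loghol} yields $\mathrm{hol}_{\phi}^{\Xi}(\gamma)=\int_{\gamma}\rho^{S}=\int_{\gamma}\beta$ for all $\gamma\in\mathcal{C}_{x}^{\phi}$, which is g$_{3}$). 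For the supplement, Lemma \ref{curvatura}(a) gives $d\beta=\mathrm{curv}(\Xi)$, while $\mathfrak{a}^{S}=0$ with Lemma \ref{curvatura}(b) gives $\iota_{X_{N}}\beta=\rho^{S}(X_{N})=-\mu^{\Xi}(X)$; hence $D\beta(X)=d\beta-\iota_{X_{N}}\beta=\mathrm{curv}(\Xi)+\mu^{\Xi}(X)=\mathrm{curv}_{\mathcal{G}}(\Xi)(X)$.

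For g$_{3}$) $\Rightarrow$ g$_{1}$), I would begin with an arbitrary section $S_{0}$ (available by topological triviality). Comparing Proposition \ref{loghol} with g$_{3}$) gives $\alpha_{\phi}^{S_{0}}(x)=\int_{\gamma}(\rho^{S_{0}}-\beta)\bmod\mathbb{Z}$ for $\gamma\in\mathcal{C}_{x}^{\phi}$. The crucial step is to upgrade g$_{3}$) to the exact equality $d\beta=\mathrm{curv}(\Xi)$: taking $\phi=1_{\mathcal{G}}$, for a small contractible loop bounding a disk $\Sigma$ the ordinary holonomy is $\int_{\Sigma}\mathrm{curv}(\Xi)\bmod\mathbb{Z}$, while Stokes gives $\int_{\gamma}\beta=\int_{\Sigma}d\beta$; hence $\int_{\Sigma}(d\beta-\mathrm{curv}(\Xi))\in\mathbb{Z}$ for every such disk, and shrinking $\Sigma$ forces this continuously varying integer to vanish, so $d\beta=\mathrm{curv}(\Xi)$. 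Then $\rho^{S_{0}}-\beta$ is closed, and $H^{1}(N)=0$ gives $\rho^{S_{0}}-\beta=d\Lambda$ with $\Lambda\in\Omega^{0}(N)$. Setting $S=S_{0}\exp(2\pi i\Lambda)$, Lemma \ref{curvatura}(c) gives $\rho^{S}=\beta$, and since $\int_{\gamma}d\Lambda=\Lambda(\phi x)-\Lambda(x)=\phi_{N}^{\ast}\Lambda(x)-\Lambda(x)$ for $\gamma\in\mathcal{C}_{x}^{\phi}$, Lemma \ref{cocyclo}(b) yields $\alpha_{\phi}^{S}=\alpha_{\phi}^{S_{0}}+\Lambda-\phi_{N}^{\ast}\Lambda=0$, so $S$ is $\mathcal{G}$-equivariant; the supplement follows exactly as before for $\beta=\rho^{S}$.

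I expect the main obstacle to be the passage from the $\bmod\,\mathbb{Z}$ holonomy identity to the honest two-form equality $d\beta=\mathrm{curv}(\Xi)$. The condition only controls $\int_{\gamma}\beta$ modulo $\mathbb{Z}$, so one cannot differentiate naively; the argument must exploit that $\int_{\Sigma}(d\beta-\mathrm{curv}(\Xi))$ is simultaneously integer-valued and continuous in $\Sigma$, hence locally constant and vanishing on disks contracting to a point. A secondary point requiring care is that the equivariance conclusion $\alpha_{\phi}^{S}=0$ must hold for all $\phi\in\mathcal{G}$, including components away from the identity; this is precisely why the holonomy formulation of g$_{3}$), being indexed by arbitrary $\phi$, is the correct global replacement for the infinitesimal condition p$_{3}$) of Proposition \ref{ThLocalTopological}.
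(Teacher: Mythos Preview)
Your proof is correct and follows essentially the same route as the paper: g$_{1}$)$\Leftrightarrow$g$_{2}$) via Proposition~\ref{SectionH1}, g$_{1}$)$\Rightarrow$g$_{3}$) by taking $\beta=\rho^{S}$ for an equivariant $S$ and invoking Proposition~\ref{loghol}, and g$_{3}$)$\Rightarrow$g$_{1}$) by first deducing $d\beta=\mathrm{curv}(\Xi)$ from the case $\phi=1_{\mathcal{G}}$, then producing a section with $\rho^{S}=\beta$ via Proposition~\ref{triviality} and checking $\alpha_{\phi}^{S}=0$ from Proposition~\ref{loghol}. The one place you are actually more careful than the paper is the passage from the $\bmod\,\mathbb{Z}$ identity to the exact equality $d\beta=\mathrm{curv}(\Xi)$: the paper writes $\int_{D}\chi=\alpha_{1_{\mathcal{G}}}^{S}(x)=0$ and concludes $\chi=0$ without comment, whereas you correctly note that this only gives $\int_{D}\chi\in\mathbb{Z}$ and supply the continuity/shrinking argument to force the integer to vanish.
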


\begin{proof}
We have seen in Proposition \ref{SectionH1} that g$_{1}$) and g$_{2}$) are equivalent.

g$_{1}$)$\Rightarrow$g$_{3}$) If $S$ is a $\mathcal{G}$-equivariant section of
$\mathcal{U}\rightarrow N$ then by definition we have $\alpha^{S}=0$ and
$\beta=\rho^{S}$ is $\mathcal{G}$-invariant. By Proposition \ref{loghol} for
any $\phi\in\mathcal{G}$, and $\gamma\in\mathcal{C}_{x}^{\phi}$ we have
$\mathrm{hol}_{\phi}^{\Xi}(\gamma)=-\alpha_{\phi}^{S}(x)+\int_{\gamma}\rho
^{S}=\int_{\gamma}\beta$.

g$_{3}$)$\Rightarrow$g$_{1}$) First we prove that if there exists $\beta
\in\Omega_{\mathcal{G}}^{1}(N)$ such that $\mathrm{hol}_{\phi}^{\Xi}%
(\gamma)=\int_{\gamma}\beta$, then we have $d\beta=\mathrm{curv}(\Xi)$. We
define $\chi=\mathrm{curv}(\Xi)-d\beta\in\Omega^{2}(N)$ and we choose a
section $S$ of $\mathcal{U}\rightarrow N$. For any loop $\gamma\in
\mathcal{C}_{x}^{1_{\mathcal{G}}}$ with $\gamma=\partial D$ we have $%
{\textstyle\int\nolimits_{D}}
\chi=%
{\textstyle\int\nolimits_{D}}
d(\rho^{S}-\beta)=%
{\textstyle\int\nolimits_{\gamma}}
(\rho^{S}-\beta)=%
{\textstyle\int\nolimits_{\gamma}}
\rho^{S}-\mathrm{hol}_{1_{\mathcal{G}}}^{\Xi}(\gamma)=\alpha_{1_{\mathcal{G}}%
}^{S}(x)=0$, and hence $\chi=0$, i.e., $\mathrm{curv}(\Xi)=d\beta$.

As $\mathrm{curv}(\Xi)=d\beta$, by Proposition \ref{triviality}\ there exists
a section $S^{\prime}$ such that $\rho^{S^{\prime}}=\beta$, and for any
$\phi\in\mathcal{G}$ and $\gamma\in\mathcal{C}_{x}^{\phi}$ we have$\ \alpha
_{\phi}^{S^{\prime}}(x)=\int_{\gamma}\rho^{S^{\prime}}-\mathrm{hol}_{\phi
}(\gamma)=0$, and $S^{\prime}$ is $\mathcal{G}$-invariant. Finally by
Proposition \ref{curvatura} we have $\iota_{X_{N}}\beta=\iota_{X_{N}}%
\rho^{S^{\prime}}=-\mu^{\Xi}(X)$ and hence $D\beta=\mathrm{curv}_{\mathcal{G}%
}(\Xi)$.
\end{proof}

\subsection{Global anomalies and equivariant flat connections\label{SectFlat}}

In the study of anomaly cancellation, we start with local anomalies because
they are easier to analyze. If the local anomaly cancels, then we study the
corresponding global anomaly. By Theorem \ref{ThLocalTopological}, if the
local topological anomaly cancels then there exists $\beta_{0}\in\Omega
^{1}(N)^{\mathcal{G}_{0}}$ such that $d\beta_{0}=\mathrm{curv}(\Xi)$ and
$\iota_{X_{N}}\beta_{0}=-\mu^{\Xi}(X)$. The first problem to cancel the global
anomaly\ is that the form $\beta_{0}$ does not need to be $\mathcal{G}%
$-invariant. As $\mathrm{curv}(\Xi)$ is $\mathcal{G}$-invariant we have
$d(\phi^{\ast}\beta_{0}-\beta_{0})=0$ and as $H^{1}(N)=0$ there exist
$\sigma_{\phi}\in\Omega^{0}(N)$ such that $d\sigma_{\phi}=\phi^{\ast}\beta
_{0}-\beta_{0}$. The function $\sigma_{\phi}^{\beta_{0}}$ is determined modulo
a constant. If $\phi\in\mathcal{G}_{0}$ we have $\phi^{\ast}\beta_{0}%
-\beta_{0}=0$ and we can take $\sigma_{\phi}^{\beta_{0}}$ constant. Hence
$\beta_{0}$ determines a map $\sigma^{\beta_{0}}\colon\mathcal{G}%
/\mathcal{G}_{0}\rightarrow\Omega^{0}(N)/\mathbb{R}$ and it is easily to see
that it satisfies the cocycle condition and hence defines an element
$[\sigma^{\beta_{0}}]\in H^{1}(\mathcal{G}/\mathcal{G}_{0},\Omega
^{0}(N)/\mathbb{R})$

\begin{proposition}
Let $\beta_{0}\in\Omega^{1}(N)^{\mathcal{G}_{0}}$ be a form such that
$d\beta_{0}\in\Omega^{1}(N)^{\mathcal{G}}$. If $H^{1}(N)=0$ then there exists
$\beta\in\Omega^{1}(N)^{\mathcal{G}}$ such that $d\beta=d\beta_{0}$ and
$\iota_{X}\beta=\iota_{X}\beta_{0}$ if and only if $[\sigma^{\beta_{0}}]=0$ on
$H^{1}(\mathcal{G}/\mathcal{G}_{0},\Omega^{0}(N)/\mathbb{R})$.
\end{proposition}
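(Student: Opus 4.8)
The plan is to reduce the whole statement to the existence of a single potential function $\Lambda$. Since $N$ is connected and $H^1(N)=0$, any $\beta$ with $d\beta=d\beta_0$ has $\beta-\beta_0$ closed, hence exact, so I would look for $\beta$ in the form $\beta=\beta_0-d\Lambda$ with $\Lambda\in\Omega^0(N)$; then $d\beta=d\beta_0$ holds automatically and it remains to read off the other two requirements. Because $\Lambda$ is a function, $\iota_{X_N}d\Lambda=L_X\Lambda=X_N(\Lambda)$, so $\iota_{X_N}\beta=\iota_{X_N}\beta_0$ for every $X\in\mathrm{Lie}\,\mathcal{G}$ is equivalent to $X_N(\Lambda)=0$ for all such $X$, i.e. (as $\mathcal{G}_0$ is connected) to $\Lambda\in\Omega^0(N)^{\mathcal{G}_0}$. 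Likewise, using $\phi^*\beta_0-\beta_0=d\sigma_\phi^{\beta_0}$, the invariance $\phi^*\beta=\beta$ becomes $d(\sigma_\phi^{\beta_0}-\phi^*\Lambda+\Lambda)=0$, which by connectedness of $N$ says precisely $\sigma_\phi^{\beta_0}=\phi^*\Lambda-\Lambda$ in $\Omega^0(N)/\mathbb{R}$. Thus the statement is equivalent to the existence of a $\mathcal{G}_0$-invariant $\Lambda$ realizing $\sigma^{\beta_0}$ as a coboundary.

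With this dictionary in hand the two implications are short. For the forward direction, given such a $\beta$ I would write $\beta_0-\beta=d\Lambda$ and run the two computations above backwards: $\iota_{X_N}\beta=\iota_{X_N}\beta_0$ forces $X_N(\Lambda)=0$, so $\Lambda$ is $\mathcal{G}_0$-invariant, and $\phi^*\beta=\beta$ forces $\sigma_\phi^{\beta_0}=\phi^*\Lambda-\Lambda$ modulo $\mathbb{R}$; hence this $\Lambda$ exhibits $[\sigma^{\beta_0}]=0$. Conversely, if $[\sigma^{\beta_0}]=0$ I would take a $\mathcal{G}_0$-invariant coboundary representative $\Lambda$, set $\beta=\beta_0-d\Lambda$, and invoke the equivalences of the first paragraph to conclude $\beta\in\Omega^1(N)^{\mathcal{G}}$, $d\beta=d\beta_0$ and $\iota_{X_N}\beta=\iota_{X_N}\beta_0$ (equivalently, when $\beta_0$ comes from the local anomaly, $D\beta=\mathrm{curv}_{\mathcal{G}}(\Xi)$ with $d\beta=\mathrm{curv}(\Xi)$ and $\iota_{X_N}\beta=-\mu^\Xi(X)$).

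The delicate point, and the step I expect to carry the real content, is the bookkeeping of the additive constants: the interplay between the $\mathbb{R}$-ambiguity in $\Omega^0(N)/\mathbb{R}$ and the genuine (not merely mod-constant) $\mathcal{G}_0$-invariance of $\Lambda$. A coboundary representative is a priori determined only modulo $\mathbb{R}$ and only satisfies $\phi^*\Lambda-\Lambda\in\mathbb{R}$ for $\phi\in\mathcal{G}_0$, which yields $X_N(\Lambda)$ constant rather than zero; one must ensure it can be chosen so that $X_N(\Lambda)=0$ exactly, since only then does $\iota_{X_N}\beta=\iota_{X_N}\beta_0$ hold on the nose. This is exactly why the coefficient module in $H^1(\mathcal{G}/\mathcal{G}_0,\Omega^0(N)/\mathbb{R})$ is set up with $\mathcal{G}_0$-invariant $0$-cochains: the $\iota_{X_N}$-condition on $\beta$ matches, term by term, the requirement that the coboundary be taken of a $\mathcal{G}_0$-invariant function, so the correspondence $\beta\leftrightarrow\Lambda$ is an exact dictionary. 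Everything else is the routine verification that $\sigma^{\beta_0}$ is well defined on $\mathcal{G}/\mathcal{G}_0$ and takes values in the $\mathcal{G}_0$-invariants, which I would record using the normality of $\mathcal{G}_0$ together with the $\mathcal{G}_0$-invariance of $\beta_0$ (so that $(\phi\psi)^*\beta_0=\phi^*\beta_0$ for $\psi\in\mathcal{G}_0$).
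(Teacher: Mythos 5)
Your proof is correct and follows essentially the same route as the paper: write $\beta=\beta_{0}-d\Lambda$ and translate the three conditions on $\beta$ into the statement that $\Lambda$ is a $\mathcal{G}_{0}$-invariant cochain cobounding $\sigma^{\beta_{0}}$. You are in fact more explicit than the paper on the forward implication (the paper only records the trivial case $\beta=\beta_{0}$) and on the mod-$\mathbb{R}$ bookkeeping needed to get $X_{N}(\Lambda)=0$ exactly rather than merely constant.
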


\begin{proof}
If $\beta_{0}$ is $\mathcal{G}$-invariant then $\sigma^{\beta_{0}}=0$.
Conversely, if $[\sigma^{\beta_{0}}]=0$ then there exists $\rho\in\Omega
^{0}(N)/\mathbb{R}$ such that $\sigma_{\phi}^{\beta_{0}}=\phi^{\ast}\rho-\rho
$. As $\sigma_{\phi}^{\beta_{0}}=0$ for any $\phi\in\mathcal{G}_{0}$ we have
$L_{X}\rho=0$.

If we define $\beta=\beta_{0}-d\rho$ then we have $d\beta=d\beta_{0}$. Also we
have $\phi^{\ast}(\beta)=\phi^{\ast}(\beta_{0})-d\phi^{\ast}(\rho)=\phi^{\ast
}(\beta_{0})-d\sigma_{\phi}^{\beta_{0}}-d\rho=\phi^{\ast}(\beta_{0}%
)-(\phi^{\ast}\beta_{0}-\beta_{0})-d\rho=\beta$ and $\iota_{X}\beta=\iota
_{X}\beta_{0}-\iota_{X}d\rho=\iota_{X}\beta_{0}.$
\end{proof}

Hence $[\sigma^{\beta_{0}}]\in H^{1}(\mathcal{G}/\mathcal{G}_{0},\Omega
^{0}(N)/\mathbb{R})$ is an obstruction to find a $\mathcal{G}$-equivariant
section of $\mathcal{U}\rightarrow N$. If this obstruction cancels, there
exists $\beta\in\Omega^{1}(N)^{\mathcal{G}}$ satisfying $\mathrm{curv}%
_{\mathcal{G}}(\Xi)=D\beta$. Then we define a new connection $\Xi^{\prime}%
=\Xi+2\pi i(\pi^{\ast}\beta),$ which is $\mathcal{G}$-invariant and we have
$\mathrm{curv}_{\mathcal{G}}(\Xi^{\prime})=0$, i.e. $\Xi^{\prime}$ is
$\mathcal{G}$-flat. Hence it is enough to study the case of $\mathcal{G}$-flat
connections. For $\mathcal{G}$-flat connections we have the following

\begin{proposition}
If $\Xi$ is a $\mathcal{G}$-flat connection then $\mathrm{hol}_{\phi}^{\Xi
}(\gamma)$ does not depend on $\gamma\in\mathcal{C}^{\phi}$, and
$\mathrm{hol}_{\phi}^{\Xi}(\gamma)=0$ for $\phi\in\mathcal{G}_{0}$.
\end{proposition}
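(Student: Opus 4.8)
The plan is to first record what $\mathcal{G}$-flatness gives us, then establish the independence statement, and finally use it to deduce the vanishing on $\mathcal{G}_{0}$. Since $\mathrm{curv}_{\mathcal{G}}(\Xi)(X)=\mathrm{curv}(\Xi)+\mu^{\Xi}(X)$ decomposes into a form-degree-two part and a form-degree-zero part, the condition $\mathrm{curv}_{\mathcal{G}}(\Xi)=0$ is equivalent to the two separate identities $\mathrm{curv}(\Xi)=0$ and $\mu^{\Xi}(X)=0$ for every $X\in\mathrm{Lie}\,\mathcal{G}$. The first tells us, via Lemma \ref{curvatura}a), that $\rho^{S}$ is a closed $1$-form for any section $S$; since $H^{1}(N)=0$ by assumption (a1), $\rho^{S}$ is in fact exact. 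The second says $\mu^{\Xi}(X)=-\tfrac{i}{2\pi}\Xi(X_{\mathcal{U}})=0$, i.e. every fundamental vector field $X_{\mathcal{U}}$ is $\Xi$-horizontal; this is the key geometric input for the vanishing on $\mathcal{G}_{0}$.

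For the independence of $\gamma\in\mathcal{C}^{\phi}$ I would first fix the basepoint. If $\gamma_{1},\gamma_{2}\in\mathcal{C}_{x}^{\phi}$, then Proposition \ref{loghol} gives $\mathrm{hol}_{\phi}^{\Xi}(\gamma_{i})=\int_{\gamma_{i}}\rho^{S}-\alpha_{\phi}^{S}(x)$, so their difference equals $\int_{\gamma_{1}}\rho^{S}-\int_{\gamma_{2}}\rho^{S}=\int_{\gamma_{1}\ast\overleftarrow{\gamma_{2}}}\rho^{S}$, which vanishes because $\rho^{S}$ is exact and $\gamma_{1}\ast\overleftarrow{\gamma_{2}}$ is a loop. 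To pass between different basepoints $x$ and $y$ I would invoke Proposition \ref{localityAlfa}b): joining $y$ to $x$ by a curve $\zeta$, the curve $\gamma'=\zeta\ast\gamma\ast(\phi\cdot\overleftarrow{\zeta})$ lies in $\mathcal{C}_{y}^{\phi}$ and satisfies $\mathrm{hol}_{\phi}^{\Xi}(\gamma')=\mathrm{hol}_{\phi}^{\Xi}(\gamma)$. Given arbitrary $\gamma_{1}\in\mathcal{C}_{x_{1}}^{\phi}$ and $\gamma_{2}\in\mathcal{C}_{x_{2}}^{\phi}$, transporting $\gamma_{1}$ to a curve based at $x_{2}$ by this construction (using connectedness of $N$) and then comparing it with $\gamma_{2}$ through the fixed-basepoint argument shows $\mathrm{hol}_{\phi}^{\Xi}(\gamma_{1})=\mathrm{hol}_{\phi}^{\Xi}(\gamma_{2})$. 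Hence $\mathrm{hol}_{\phi}^{\Xi}$ is independent of $\gamma\in\mathcal{C}^{\phi}$.

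For the vanishing when $\phi\in\mathcal{G}_{0}$, independence lets me evaluate the holonomy on a single conveniently chosen curve. Since $\mathcal{G}_{0}$ is connected, I would choose a smooth path $t\mapsto\phi_{t}$ in $\mathcal{G}_{0}$ with $\phi_{0}=1_{\mathcal{G}}$ and $\phi_{1}=\phi$, fix $x\in N$ and $y\in\pi^{-1}(x)$, and set $\gamma(t)=\phi_{t}(x)\in\mathcal{C}_{x}^{\phi}$. I claim the lift $\overline{\gamma}(t)=(\phi_{t})_{\mathcal{U}}(y)$ is $\Xi$-horizontal: writing $X_{t}=\dot{\phi}_{t}\phi_{t}^{-1}\in\mathrm{Lie}\,\mathcal{G}$ for the (right) logarithmic derivative, the action property gives $\dot{\overline{\gamma}}(t)=(X_{t})_{\mathcal{U}}(\overline{\gamma}(t))$, which is horizontal by the identity $\Xi(X_{\mathcal{U}})=0$ established above. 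Since $\overline{\gamma}$ projects to $\gamma$ and $\overline{\gamma}(0)=y$, it is the horizontal lift, and $\overline{\gamma}(1)=\phi_{\mathcal{U}}(y)$ shows directly from the definition of equivariant holonomy that $\mathrm{hol}_{\phi}^{\Xi}(\gamma)=0$. By the independence just proved, $\mathrm{hol}_{\phi}^{\Xi}(\gamma)=0$ for every $\gamma\in\mathcal{C}^{\phi}$.

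The only genuinely delicate point is the horizontality of $\overline{\gamma}$ for a general $\phi\in\mathcal{G}_{0}$ that is not a single exponential: one must verify that the velocity of $t\mapsto(\phi_{t})_{\mathcal{U}}(y)$ is at each instant the value of a fundamental vector field, so that $\mu^{\Xi}=0$ can be applied. This is exactly the computation $\dot{\overline{\gamma}}(t)=(X_{t})_{\mathcal{U}}(\overline{\gamma}(t))$, obtained by differentiating $(\phi_{t+s})_{\mathcal{U}}(y)=(\phi_{t+s}\phi_{t}^{-1})_{\mathcal{U}}\bigl((\phi_{t})_{\mathcal{U}}(y)\bigr)$ at $s=0$; everything else is a routine assembly of the cited results.
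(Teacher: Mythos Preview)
Your proof is correct, but the paper takes a shorter algebraic route. Rather than working with an arbitrary section and arguing via exactness of $\rho^{S}$, the paper invokes Proposition~\ref{triviality} to choose a section $S$ with $\rho^{S}=0$ (possible because $\mathrm{curv}(\Xi)=0$). Then Proposition~\ref{loghol} gives $\mathrm{hol}_{\phi}^{\Xi}(\gamma)=-\alpha_{\phi}^{S}(x)$ directly, so independence of $\gamma\in\mathcal{C}_{x}^{\phi}$ is immediate; Proposition~\ref{localityAlfa}c) with $\rho^{S}=0$ then yields basepoint independence. For the vanishing on $\mathcal{G}_{0}$ the paper again stays algebraic: Lemma~\ref{curvatura}b) with $\mu^{\Xi}=0$ and $\rho^{S}=0$ forces $\mathfrak{a}^{S}=0$, so $S$ is $\mathcal{G}_{0}$-equivariant and $\alpha_{\phi}^{S}=0$ for $\phi\in\mathcal{G}_{0}$.

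Your approach trades this special choice of section for a direct geometric argument: you keep $S$ arbitrary, use $H^{1}(N)=0$ to make $\rho^{S}$ exact, and for $\mathcal{G}_{0}$ you exhibit an explicit horizontal lift along an orbit curve using $\Xi(X_{\mathcal{U}})=0$. This is perfectly valid and arguably more illuminating about why $\mu^{\Xi}=0$ is the relevant condition, but the paper's trick of setting $\rho^{S}=0$ collapses all three steps into one-line applications of earlier results and avoids the ``delicate point'' you flag about non-exponential paths in $\mathcal{G}_{0}$.
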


\begin{proof}
By Proposition \ref{triviality} we can choose a section $S$ such that
$\rho^{S}=0$. By Proposition \ref{loghol}\ we have $\mathrm{hol}_{\phi}^{\Xi
}(\gamma)=-\alpha_{\phi}^{S}(x)$ that does not depend on $\gamma\in
\mathcal{C}_{x}^{\phi}$. Furthermore, by Proposition \ref{localityAlfa}
c)\ the holonomy does not depend on $x$.

By Proposition \ref{curvatura} we have $\mathfrak{a}^{S}=0$. Hence $S$ is
$\mathcal{G}_{0}$-equivariant and $\mathrm{hol}_{\phi}^{\Xi}(\gamma
)=\alpha_{\phi}^{S}=0$ for $\phi\in\mathcal{G}_{0}$.
\end{proof}

We conclude that $\mathrm{hol}_{\phi}^{\Xi}$ determines an element
$\kappa^{\Xi}\in\mathrm{Hom}(\mathcal{G}/\mathcal{G}_{0},\mathbb{R}%
/\mathbb{Z)}$, by setting $\kappa_{\phi}^{\Xi}=\mathrm{hol}_{\phi}^{\Xi
}(\gamma)$ for any $\gamma\in\mathcal{C}^{\phi}$. The following result shows
that any element of $\mathrm{Hom}(\mathcal{G}/\mathcal{G}_{0},\mathbb{R}%
/\mathbb{Z)}$ can be represented as the holonomy of a flat connection:

\begin{proposition}
\label{ExampleFlat}For any $h\in\mathrm{Hom}(\mathcal{G}/\mathcal{G}%
_{0},\mathbb{R}/\mathbb{Z})$ there exists a $\mathcal{G}$-flat connection
$\Xi$ on $\mathcal{U}\rightarrow N$ such that $\kappa^{\Xi}=h$.
\end{proposition}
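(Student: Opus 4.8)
The plan is to realize $h$ by equipping the (topologically trivial) bundle $\mathcal{U}=N\times U(1)$ with a suitable $\mathcal{G}$-equivariant structure built directly out of $h$, and then taking the product connection. First I would view $h$ as a homomorphism $\mathcal{G}\to\mathbb{R}/\mathbb{Z}$ by composing with the projection $\mathcal{G}\to\mathcal{G}/\mathcal{G}_{0}$; the resulting map vanishes on $\mathcal{G}_{0}$. Setting $\alpha_{\phi}:=-h_{\phi}$, regarded as a \emph{constant} function on $N$, Remark \ref{RemarkCocycle} shows that the cocycle condition for $\alpha$ is exactly the statement that $h$ is a homomorphism, so $\alpha$ is a genuine cocycle. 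By Proposition \ref{converse} the formula $\phi\cdot(x,u)=(\phi_{N}(x),\exp(-2\pi i h_{\phi})\cdot u)$ then defines a $\mathcal{G}$-equivariant $U(1)$-bundle structure on $\mathcal{U}=N\times U(1)$, and for the section $S(x)=(x,1)$ we have $\alpha^{S}=\alpha=-h$.

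For the connection I would take the product connection $\Xi$, i.e. the one with $\rho^{S}=0$ in the trivialization $\Psi_{S}$ (equivalently $\Psi_{S}^{\ast}\Xi=\vartheta$). Because the $\mathcal{G}$-action multiplies the $U(1)$-fibre coordinate by the \emph{constant} $\exp(-2\pi i h_{\phi})$ while $\rho^{S}=0$ leaves the base part untouched, $\Xi$ is manifestly $\mathcal{G}$-invariant. To check $\mathcal{G}$-flatness I would use Lemma \ref{curvatura}: part (a) gives $\mathrm{curv}(\Xi)=d\rho^{S}=0$, and part (b) gives $\mu^{\Xi}(X)=-\rho^{S}(X_{N})+\mathfrak{a}^{S}(X)=\mathfrak{a}^{S}(X)$. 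Since by Proposition \ref{AlfaA} the quantity $\mathfrak{a}^{S}(X)$ is the $t$-derivative at $t=0$ of $\alpha^{S}_{\exp(tX)}=-h_{\exp(tX)}$, and $h$ is trivial on $\mathcal{G}_{0}\ni\exp(tX)$, we get $\mathfrak{a}^{S}(X)=0$; hence $\mu^{\Xi}=0$ and $\mathrm{curv}_{\mathcal{G}}(\Xi)=0$, so $\Xi$ is $\mathcal{G}$-flat.

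Finally, the holonomy computation is immediate from Proposition \ref{loghol}: for any $\phi\in\mathcal{G}$ and $\gamma\in\mathcal{C}_{x}^{\phi}$,
\[
\mathrm{hol}_{\phi}^{\Xi}(\gamma)=\int_{\gamma}\rho^{S}-\alpha_{\phi}^{S}(x)=0-(-h_{\phi})=h_{\phi},
\]
so that $\kappa^{\Xi}=h$, as desired.

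I expect the only real obstacle to be conceptual rather than computational: one must resist trying to produce the prescribed holonomy by modifying a connection on a \emph{fixed} bundle. For disconnected $\mathcal{G}$ the flat holonomy $\kappa^{\Xi}\in\mathrm{Hom}(\mathcal{G}/\mathcal{G}_{0},\mathbb{R}/\mathbb{Z})$ is rigid under such deformations: two $\mathcal{G}$-invariant flat connections differ by a $\mathcal{G}$-basic $1$-form, which by $H^{1}(N)=0$ is $df$ with $f$ $\mathcal{G}_{0}$-invariant, and this only changes $\kappa$ by homomorphisms that lift to $\mathbb{R}$, so a general $h$ with no real lift cannot be reached this way. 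The correct move is therefore to encode $h$ in the equivariant structure itself via the constant cocycle, after which the product connection does the job; the remaining care is purely the bookkeeping of the sign conventions relating $\alpha^{S}$, $\rho^{S}$ and $\mathrm{hol}_{\phi}^{\Xi}$.
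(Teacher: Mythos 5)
Your proposal is correct and follows essentially the same route as the paper: both realize $h$ by putting the constant cocycle on the trivial bundle $N\times U(1)$ via Remark \ref{RemarkCocycle} and Proposition \ref{converse}, and then take $\Xi=\vartheta$ (i.e.\ $\rho^{S}=0$). The only difference is that you carefully track the sign, choosing $\alpha_{\phi}=-h_{\phi}$ so that Proposition \ref{loghol} gives $\kappa^{\Xi}=h$ rather than $-h$, a detail the paper's one-line proof glosses over.
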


\begin{proof}
Using the trivialization associated to a section the problem can be reduced to
a trivial bundle. On $\mathcal{U}=N\times U(1)$ we define $\alpha_{\phi
}(x)=h(\phi)$ (it satisfies the cocycle condition by Remark
\ref{RemarkCocycle}). By Proposition \ref{converse} it defines a $\mathcal{G}%
$-equivariant $U(1)$-bundle and $\Xi=\vartheta$ is a $\mathcal{G}$-flat
connection with $\kappa^{\Xi}=h$.
\end{proof}

If $\kappa^{\Xi}=0$ then the bundle is a trivial $\mathcal{G}$-equivariant
bundle, but if $\kappa^{\Xi}\neq0$ we cannot assert that the bundle is
nontrivial. The reason is that\ a trivial $\mathcal{G}$-equivariant bundle can
admit flat connections with nontrivial holonomy. An easy example is the following:

\begin{example}
We consider the trivial bundle $\mathcal{U}=\mathbb{R}\times U(1)$ and
$\mathcal{G}=\mathbb{Z}$ acting trivially on $U(1)$. We define $\alpha
_{n}(x)=\frac{n}{2}\operatorname{mod}\mathbb{Z}$ for $n\in\mathbb{Z}\ $and the
connection $\Xi=\vartheta$. For the curve $\gamma_{1}(s)=s$ we have
$\mathrm{hol}_{\phi}^{\Xi}(\gamma_{1})=\alpha_{1}(x)=\frac{1}{2}%
\operatorname{mod}\mathbb{Z}\neq0$ but $\mathcal{U}$\ is a
trivial$\ \mathbb{Z}$-equivariant\ bundle as for any $\gamma\in\mathcal{C}%
^{n}$\ we have $\kappa_{n}^{\Xi}=\int_{\gamma}\beta$ with $\beta=\frac{1}%
{2}dt\in\Omega^{1}(\mathbb{R})^{\mathbb{Z}}$.
\end{example}

If $\Xi$ is $\mathcal{G}$-flat, then by Theorem \ref{ThGlobalTopological} the
bundle $\mathcal{U}\rightarrow N$ is trivial if and only if there exists
$\beta\in\Omega^{1}(N)^{\mathcal{G}}$ such that $D\beta=0$ and $\kappa_{\phi
}^{\Xi}=\int_{\gamma}\beta$ for any $\gamma\in\mathcal{C}^{\phi}$. We study
this condition\ in more detail

\begin{proposition}
\label{cancelacionGobal}If $\beta\in\Omega^{1}(N)^{\mathcal{G}}$ satisfies
$D\beta=0$\ then $k_{\phi}^{\beta}=\int_{\gamma}\beta$ does not depend on
$\gamma\in\mathcal{C}^{\phi}$ and $k_{\phi}^{\beta}=0$ if $\phi\in
\mathcal{G}_{0}$. Furthermore, if $\beta=d\rho$ for a form $\rho\in\Omega
^{0}(N)$ then $k_{\phi}^{\beta}=\phi^{\ast}\rho-\rho$ for any $\phi
\in\mathcal{G}$. If there exists $\rho$ satisfying $\beta=d\rho$ and
$\mathcal{G}$-invariant then $k_{\phi}^{\beta}=0$ for any $\phi\in\mathcal{G}$.
\end{proposition}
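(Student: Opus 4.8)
The plan is to reduce the whole statement to the existence of a primitive of $\beta$. Recall that, $\beta$ being $\mathcal{G}$-invariant, the condition $D\beta=0$ is equivalent to the two conditions $d\beta=0$ and $\iota_{X_{N}}\beta=0$ for every $X\in\mathrm{Lie}\,\mathcal{G}$; the first will give independence of the curve together with the explicit primitive formulas, while the second is exactly what is needed to kill the holonomy on $\mathcal{G}_{0}$.

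First I would treat curve-independence together with the primitive formula. Since $d\beta=0$ and $H^{1}(N)=0$ by assumption (a1), there is $\rho_{0}\in\Omega^{0}(N)$ with $\beta=d\rho_{0}$. For any $\gamma\in\mathcal{C}_{x}^{\phi}$ the integral of an exact form depends only on the endpoints, so $\int_{\gamma}\beta=\rho_{0}(\phi(x))-\rho_{0}(x)=(\phi^{\ast}\rho_{0}-\rho_{0})(x)$; this already shows the value is independent of the curve $\gamma$ and depends at most on its initial point $x$. Because $\beta$ is $\mathcal{G}$-invariant, $d(\phi^{\ast}\rho_{0}-\rho_{0})=\phi^{\ast}\beta-\beta=0$, so $\phi^{\ast}\rho_{0}-\rho_{0}$ is constant on the connected manifold $N$; hence $\int_{\gamma}\beta$ is also independent of $x$, giving a well-defined real number $k_{\phi}^{\beta}=\phi^{\ast}\rho_{0}-\rho_{0}$. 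Since two primitives differ by an additive constant, the same computation with any prescribed $\rho$ satisfying $\beta=d\rho$ yields $k_{\phi}^{\beta}=\phi^{\ast}\rho-\rho$, which is the primitive formula; the invariant case follows at once, for if $\rho$ is $\mathcal{G}$-invariant then $\phi^{\ast}\rho=\rho$ and $k_{\phi}^{\beta}=0$ for all $\phi\in\mathcal{G}$.

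The substantive step is to prove $k_{\phi}^{\beta}=0$ for $\phi\in\mathcal{G}_{0}$, and here the hypothesis $\iota_{X_{N}}\beta=0$ enters. I would connect $\phi$ to the identity by a smooth path $\phi_{t}$ in $\mathcal{G}$ with $\phi_{0}=1$, $\phi_{1}=\phi$, fix a base point $x_{0}$, and use the orbit curve $\gamma(t)=\phi_{t}(x_{0})$, which lies in $\mathcal{C}_{x_{0}}^{\phi}$. The crucial point is that its velocity is a fundamental vector field, $\dot{\gamma}(t)=(X_{t})_{N}(\gamma(t))$, where $X_{t}\in\mathrm{Lie}\,\mathcal{G}$ is the (right) logarithmic derivative of $\phi_{t}$; then
\[
k_{\phi}^{\beta}=\int_{\gamma}\beta=\int_{0}^{1}\beta_{\gamma(t)}\bigl((X_{t})_{N}(\gamma(t))\bigr)\,dt=\int_{0}^{1}(\iota_{(X_{t})_{N}}\beta)(\gamma(t))\,dt=0,
\]
using $\iota_{X_{N}}\beta=0$ for every $X$. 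By the previous step this value does not depend on the chosen curve, so $k_{\phi}^{\beta}=0$ for all $\phi\in\mathcal{G}_{0}$.

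I expect the velocity identity $\dot{\gamma}(t)=(X_{t})_{N}(\gamma(t))$ to be the one delicate point, since $\mathcal{G}$ is here an infinite-dimensional group of gauge transformations or diffeomorphisms and one must make sense of the logarithmic derivative $X_{t}$ and of differentiating the action along the path. A purely algebraic alternative is to observe, from the first step, that $k^{\beta}$ is additive, $k_{\phi^{\prime}\phi}^{\beta}=k_{\phi^{\prime}}^{\beta}+k_{\phi}^{\beta}$ (each $k_{\phi}^{\beta}$ being a constant, hence $\phi$-invariant); but additivity alone cannot force vanishing on $\mathcal{G}_{0}$, so the infinitesimal condition $\iota_{X_{N}}\beta=0$ is indispensable and the orbit-curve computation is the most economical way to invoke it.
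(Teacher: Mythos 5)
Your proof is correct and follows essentially the same route as the paper: both use $H^{1}(N)=0$ to produce a primitive $\rho$ of $\beta$, obtain $k_{\phi}^{\beta}=\phi^{\ast}\rho-\rho$ (whence curve-independence, independence of the base point, and the two final claims), and use $\iota_{X_{N}}\beta=0$ to kill $k_{\phi}^{\beta}$ on $\mathcal{G}_{0}$. The only real difference is in that last step: the paper observes $L_{X}\rho=\iota_{X_{N}}d\rho=\iota_{X_{N}}\beta=0$, so the primitive $\rho$ is $\mathcal{G}_{0}$-invariant and $\phi^{\ast}\rho-\rho=0$, whereas you integrate $\beta$ directly along an orbit curve $t\mapsto\phi_{t}(x_{0})$ --- the same computation in integrated rather than infinitesimal form, carrying the same implicit reliance on smooth paths in $\mathcal{G}$ that you rightly flag as the delicate point.
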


\begin{proof}
As $d\beta=0$ we have $\beta=d\rho$ for a function $\rho\in\Omega^{1}(N)$
(determined up to a constant). Moreover, we have $k_{\phi}^{\beta}%
(x)=\int_{\gamma}\beta=\int_{\gamma}d\rho=\rho(\phi x)-\rho(x)$ that does not
depend on $\gamma\in\mathcal{C}_{x}^{\phi}$. Furthermore, $k_{\phi}^{\beta}$
does not depend on $x$ because $dk_{\phi}^{\beta}=d(\phi^{\ast}\rho-\rho
)=\phi^{\ast}d\rho-d\rho=\phi^{\ast}\beta-\beta=0$. We have $L_{X}\rho
=\iota_{X}\beta=0$ and hence $\rho\in\Omega^{1}(N)^{\mathcal{G}_{0}}$ and
$k_{\phi}^{\beta}=0$ for $\phi\in\mathcal{G}_{0}$.\ If $\rho\in\Omega
^{1}(N)^{\mathcal{G}}$ satisfies $\beta=d\rho$ then $k_{\phi}^{\beta}%
=\phi^{\ast}\rho-\rho=0$ for any $\phi\in\mathcal{G}$.
\end{proof}

Hence we have a well defined map $k\colon H_{\mathcal{G}}^{1}(N)\rightarrow
\mathrm{Hom}(\mathcal{G}/\mathcal{G}_{0},\mathbb{R}/\mathbb{Z)}$. We define
$K^{\mathcal{G}}(N)=k(H_{\mathcal{G}}^{1}(N))\subset\mathrm{Hom}%
(\mathcal{G}/\mathcal{G}_{0},\mathbb{R}/\mathbb{Z)}$ and we have the following

\begin{proposition}
If $\Xi$ is a $\mathcal{G}$-flat connection\ then $\mathcal{U}\rightarrow
N$\ is a trivial $\mathcal{G}$-equivariant\ bundle if and only if $\kappa
^{\Xi}\in K^{\mathcal{G}}(N)$.
\end{proposition}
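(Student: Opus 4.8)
The plan is to read the result directly off Theorem \ref{ThGlobalTopological} together with the defining properties of the map $k$ and the set $K^{\mathcal{G}}(N)$, using $\mathcal{G}$-flatness to force the relevant invariant $1$-form to be $D$-closed. First I would record the two consequences of $\mathcal{G}$-flatness that drive the argument: $\mathrm{curv}_{\mathcal{G}}(\Xi)=0$, and (as established just above) $\mathrm{hol}_{\phi}^{\Xi}(\gamma)$ is independent of $\gamma\in\mathcal{C}^{\phi}$, so that $\kappa_{\phi}^{\Xi}=\mathrm{hol}_{\phi}^{\Xi}(\gamma)$ for every such $\gamma$.

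For the forward implication, assume $\mathcal{U}\rightarrow N$ is a trivial $\mathcal{G}$-equivariant bundle. Then Theorem \ref{ThGlobalTopological} (g$_{1}$)$\Rightarrow$g$_{3}$)) supplies $\beta\in\Omega^{1}(N)^{\mathcal{G}}$ with $\mathrm{hol}_{\phi}^{\Xi}(\gamma)=\int_{\gamma}\beta$ for all $\phi\in\mathcal{G}$ and $\gamma\in\mathcal{C}^{\phi}$, and moreover $D\beta=\mathrm{curv}_{\mathcal{G}}(\Xi)=0$. Thus $\beta$ is a $D$-closed invariant form, defining a class $[\beta]\in H_{\mathcal{G}}^{1}(N)$, and the definition of $k$ gives $k_{\phi}^{\beta}=\int_{\gamma}\beta=\mathrm{hol}_{\phi}^{\Xi}(\gamma)=\kappa_{\phi}^{\Xi}$. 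Hence $\kappa^{\Xi}=k([\beta])\in K^{\mathcal{G}}(N)$.

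For the converse, assume $\kappa^{\Xi}\in K^{\mathcal{G}}(N)$. By definition of $K^{\mathcal{G}}(N)$ there is a $D$-closed $\beta\in\Omega^{1}(N)^{\mathcal{G}}$ with $k([\beta])=\kappa^{\Xi}$. By Proposition \ref{cancelacionGobal}, $k_{\phi}^{\beta}=\int_{\gamma}\beta$ is independent of $\gamma\in\mathcal{C}^{\phi}$, so $\int_{\gamma}\beta=\kappa_{\phi}^{\Xi}=\mathrm{hol}_{\phi}^{\Xi}(\gamma)$ for every $\gamma\in\mathcal{C}^{\phi}$. This is precisely condition g$_{3}$) of Theorem \ref{ThGlobalTopological}, whose g$_{3}$)$\Rightarrow$g$_{1}$) direction then produces a $\mathcal{G}$-equivariant section, i.e.\ $\mathcal{U}\rightarrow N$ is trivial as a $\mathcal{G}$-equivariant bundle.

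The whole argument is bookkeeping once $\mathcal{G}$-flatness has been used to upgrade the invariant primitive $\beta$ of condition g$_{3}$) to a $D$-closed form. The only point I expect to require care is the matching of quantifiers: one must verify that $\int_{\gamma}\beta=\mathrm{hol}_{\phi}^{\Xi}(\gamma)$ holds for \emph{every} $\gamma\in\mathcal{C}^{\phi}$ rather than a single representative. This is exactly where both consequences of flatness enter — the right-hand side is $\gamma$-independent by flatness, and the left-hand side is $\gamma$-independent by Proposition \ref{cancelacionGobal} since $D\beta=0$ — so the equality of the two $\gamma$-independent quantities $\kappa_{\phi}^{\Xi}$ and $k_{\phi}^{\beta}$ is enough to conclude.
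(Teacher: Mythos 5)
Your proof is correct: the paper states this proposition without proof, as an immediate consequence of Theorem \ref{ThGlobalTopological} (equivalence of g$_{1}$ and g$_{3}$) together with Proposition \ref{cancelacionGobal} and the $\gamma$-independence of $\mathrm{hol}_{\phi}^{\Xi}$ for $\mathcal{G}$-flat connections, and your assembly of exactly these ingredients is the intended argument. Your closing remark on matching quantifiers — that both $\kappa_{\phi}^{\Xi}$ and $k_{\phi}^{\beta}$ are separately $\gamma$-independent, so equality for one representative gives the universally quantified identity required in g$_{3}$) — is precisely the point worth making explicit.
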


\section{Locality and physical anomalies}

In this section we introduce the concepts of local forms and local cohomology
needed for the study the locality problem in anomaly cancellation. The local
cohomology is defined when $N$ is a submanifold of the space of sections
$\Gamma(E)$ of a bundle $E\rightarrow M$. It generalizes the concept of a
local functional to higher order forms. Furthermore, it is easily extended to
local $\mathcal{G}$-equivariant cohomology. We introduce the notions of local
connection and local section, and we show that the existence of local
$\mathcal{G}$-equivariant section is equivalent to physical anomaly cancellation.

\subsection{Local forms and Jet bundles\label{localforms}}

Let $p\colon E\rightarrow M$ be a bundle over a compact, oriented $n$-manifold
$M$ without boundary. We denote by $J^{r}E$ its $r$-jet bundle, and by
$J^{\infty}E$ the infinite jet bundle (see e.g.\ \cite{saunders} for the
details on the geometry of $J^{\infty}E$). We recall that the points on
$J^{\infty}E$ can be identified with the Taylor series of sections of $E$. Let
$\Gamma(E)$ be the manifold of global sections of $E$, that we assume to be
not empty. If $s\in\Gamma(E)$ then we denote by $j_{x}^{r}s$ (resp.
$j_{x}^{\infty}s$) the $k$-jet (resp. the $\infty$-jet) of $s$ at $x$.

We consider $\Gamma(E)$ as an infinite dimensional Frechet manifold, that we
assume to be not empty. Let $\mathrm{j}^{\infty}\colon M\times\Gamma
(E)\rightarrow J^{\infty}E$, $\mathrm{j}^{\infty}(x,s)=j_{x}^{\infty}s$ be the
evaluation map. In \cite{equiconn} it is defined a map $\Im\colon\Omega
^{n+k}(J^{\infty}E)\longrightarrow\Omega^{k}(\Gamma(E))$, by $\Im\lbrack
\alpha]=\int_{M}\left(  \mathrm{j}^{\infty}\right)  ^{\ast}\alpha$ for
$\alpha\in\Omega^{n+k}(J^{\infty}E)$. If $\alpha\in\Omega^{k}(J^{\infty}E)$
with $k<n$, we set $\Im\lbrack\alpha]=0$. We define the space of local
$k$-forms on $\Gamma(E)$ by $\Omega_{\mathrm{loc}}^{k}(\Gamma(E))=\Im
(\Omega^{n+k}(J^{\infty}E))\subset\Omega^{k}(\Gamma(E))$. We have $\Im\lbrack
d\alpha]=d\Im\lbrack\alpha]$, and hence if $\theta\in\Omega_{\mathrm{loc}}%
^{k}(\Gamma(E))$ then $d\theta\in\Omega_{\mathrm{loc}}^{k+1}(\Gamma(E))$. The
local cohomology of $\Gamma(E)$, $H_{\mathrm{loc}}^{\bullet}(\Gamma(E))$, is
the cohomology of $(\Omega_{\mathrm{loc}}^{\bullet}(\Gamma(E)),d)$ and we have
$H_{\mathrm{loc}}^{k}(\Gamma(E))\simeq H_{\mathrm{loc}}^{n+k}(\Gamma(E))$ for
$k>0$ (see \cite{VB}). If $\theta\in\Omega_{\mathrm{loc}}^{\bullet}%
(\Gamma(E))$ is closed, we denote by $[\theta]$ its cohomology class in
$H^{k}(\Gamma(E))$ and by $\{\theta\}$ its cohomology class in
$H_{\mathrm{loc}}^{k}(\Gamma(E))$.

\begin{lemma}
\label{Lemma} The constant functions on $\Gamma(E)$ are local functions.
\end{lemma}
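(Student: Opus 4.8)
The plan is to exhibit, for a given constant $c$, an $n$-form $\alpha\in\Omega^{n}(J^{\infty}E)$ whose image under $\Im$ is the constant function $c$ on $\Gamma(E)$. Since by definition $\Omega_{\mathrm{loc}}^{0}(\Gamma(E))=\Im(\Omega^{n}(J^{\infty}E))$, producing such an $\alpha$ is exactly what it means for the constant to be local. Recall that $\Im[\alpha](s)=\int_{M}(\mathrm{j}^{\infty})^{\ast}\alpha$, which for a fixed $s\in\Gamma(E)$ amounts to integrating over $M$ the pullback of $\alpha$ along the map $x\mapsto j_{x}^{\infty}s$.

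First I would choose an $n$-form on $M$ whose integral is the prescribed constant. Since $M$ is compact and oriented it carries a volume form $\mathrm{vol}_{M}$ with $\int_{M}\mathrm{vol}_{M}\neq0$, so $\omega=\frac{c}{\int_{M}\mathrm{vol}_{M}}\,\mathrm{vol}_{M}$ satisfies $\int_{M}\omega=c$. I then promote $\omega$ to a form on the jet bundle by pulling it back along the source projection $\pi_{\infty}\colon J^{\infty}E\rightarrow M$, setting $\alpha=\pi_{\infty}^{\ast}\omega\in\Omega^{n}(J^{\infty}E)$.

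The verification rests on the single observation that composing the evaluation map with the source projection recovers the base point, i.e. $\pi_{\infty}\circ\mathrm{j}^{\infty}=\mathrm{pr}_{M}$ on $M\times\Gamma(E)$. Fixing $s$ and writing $\mathrm{j}_{s}^{\infty}\colon M\rightarrow J^{\infty}E$, $x\mapsto j_{x}^{\infty}s$, this gives $\pi_{\infty}\circ\mathrm{j}_{s}^{\infty}=\mathrm{id}_{M}$, hence $(\mathrm{j}_{s}^{\infty})^{\ast}\alpha=(\pi_{\infty}\circ\mathrm{j}_{s}^{\infty})^{\ast}\omega=\omega$. Therefore $\Im[\alpha](s)=\int_{M}\omega=c$ independently of $s$, so $\Im[\alpha]$ is precisely the constant function $c$, which is thus local.

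There is no deep obstacle here; the only points requiring attention are structural rather than computational. One must invoke compactness and orientability of $M$ to guarantee an $n$-form with prescribed integral (equivalently, to make fiber integration over $M$ well defined), and one must identify the correct projection $\pi_{\infty}\colon J^{\infty}E\rightarrow M$ so that its composition with the jet evaluation is the identity on the $M$-factor. Once these are in place the fiber integral collapses to $\int_{M}\omega$ with no dependence on the section, which is exactly what exhibits the constant as an element of $\Omega_{\mathrm{loc}}^{0}(\Gamma(E))$.
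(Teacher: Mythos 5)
Your proposal is correct and is essentially the paper's own proof: both pull back a volume form on $M$, rescaled so that its total integral is $c$, along the projection $J^{\infty}E\rightarrow M$ and observe that $\Im$ applied to it is the constant $c$. The only cosmetic difference is that you normalize the form before pulling back while the paper rescales inside $\Im$.
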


\begin{proof}
If $c\in\mathbb{R}$ and $\omega\in\Omega^{n}(M)$ is a volume form on $M$ with
$v=\int_{M}\omega>0$, then $c=\Im\lbrack\frac{c}{v}q_{\infty}^{\ast}\omega]$,
where $q_{\infty}\colon J^{\infty}E\rightarrow M$ is the projection.
\end{proof}

\bigskip

A diffeomorphism $\phi\in\mathrm{Diff}E$ is said to be projectable if there
exists $\underline{\phi}\in\mathrm{Diff}M$ satisfying $\phi\circ
p=p\circ\underline{\phi}$. We denote by $\mathrm{Proj}E$ the space of
projectable diffeomorphism of $E$, and by $\mathrm{Proj}^{+}E$ the subgroup of
elements such that $\underline{\phi}$ is orientation preserving. Let
$\mathcal{G}$ be a Lie group acting on $E$ by elements $\mathrm{Proj}^{+}E$.
The integration operator extends to a map on equivariant differential forms
(see \cite{equiconn}) $\Im\colon\Omega_{\mathcal{G}}^{n+k}(J^{\infty
}E)\rightarrow\Omega_{\mathcal{G}}^{k}(\Gamma(E))$, by setting $(\Im
(\alpha))(X)=\Im(\alpha(X))$ for every $\alpha\in\Omega_{\mathcal{G}}%
^{n+k}(J^{\infty}E)$, $X\in\mathrm{Lie}\mathcal{G}$. The map $\Im$ induces a
homomorphism in equivariant cohomology $\Im\colon H_{\mathcal{G}}%
^{n+k}(J^{\infty}E)\rightarrow H_{\mathcal{G}}^{k}(\Gamma(E))$. We define the
space of local $\mathcal{G}$-equivariant $q$-forms on $\Gamma(E)$ by
$\Omega_{\mathcal{G},\mathrm{loc}}^{q}(\Gamma(E))=\bigoplus_{2k+r=q}%
(\mathcal{P}^{k}(\mathrm{Lie}\mathcal{G},\Omega_{\mathrm{loc}}^{r}%
(\Gamma(E))))^{\mathcal{G}}\subset\Omega_{\mathcal{G}}^{q}(\Gamma(E))$, and
the local $\mathcal{G}$-equivariant cohomology of $\Gamma(E)$, $H_{\mathcal{G}%
,\mathrm{loc}}^{\bullet}(\Gamma(E))$, as the cohomology of $(\Omega
_{\mathcal{G},\mathrm{loc}}^{\bullet}(\Gamma(E)),D)$. The map $\Im$ induces a
cochain map $\Im\colon\Omega_{\mathcal{G}}^{n+k}(J^{\infty}E)\rightarrow
\Omega_{\mathcal{G},\mathrm{loc}}^{k}(\Gamma(E))$.

\subsection{Local connections and sections}

Let $\mathcal{U}\rightarrow\Gamma(E)$ be a $\mathcal{G}$-equivariant
$U(1)$-bundle. On $\Gamma(E)$ we have the space of local forms, but we do not
have an analogous notion on $\mathcal{U}$. As a connection is a form on
$\mathcal{U}$, we define the notion of local connection in terms of its
equivariant curvature

\begin{definition}
We say that a $\mathcal{G}$-invariant connection $\Xi$ on $\mathcal{U}%
\rightarrow\Gamma(E)$ is local if $\mathrm{curv}_{\mathcal{G}}(\Xi
)=\mathrm{curv}(\Xi)+\mu^{\Xi}\in\Omega_{\mathcal{G},\mathrm{loc}}^{2}%
(\Gamma(E))$.
\end{definition}

The motivation for this definition is the equivariant Atiyah-Singer index
theorem for families (see \cite{FreedEqui}) that express the equivariant
curvature of the Bismut-Freed connection in terms of characteristic forms.\ It
is shown in \cite{anomalies} that for gravitational and gauge anomalies, the
Bismut-Freed connection on the determinant line bundle is a local connection.
We make the following assumptions:

(A1) We assume that $\mathcal{U}\rightarrow\Gamma(E)$ is topologically
trivial, $\Gamma(E)$ is connected and that $H^{1}(\Gamma(E))=0$.

As $\mathcal{U}\rightarrow\Gamma(E)$ is topologically trivial, we know that
$\mathrm{curv}(\Xi)\in\Omega_{\mathrm{loc}}^{2}(\Gamma(E))$ is exact. But we
cannot assert that it is the differential of a local form. Hence we need to
make the following assumption

(A2) $\Xi$ is a $\mathcal{G}$-invariant local connection and $\{\mathrm{curv}%
(\Xi)\}=0$ on $H_{\mathrm{loc}}^{2}(\Gamma(E))$, i.e. there exists $\rho
\in\Omega_{\mathrm{loc}}^{1}(\Gamma(E))$ such that $d\rho=\mathrm{curv}(\Xi)$.

A sufficient condition for (A2) is $H_{\mathrm{loc}}^{2}(\Gamma(E))\simeq
H^{n+2}(E)=0$. We also make the following assumption

(A3) $H_{\mathrm{loc}}^{1}(\Gamma(E))\simeq H^{n+1}(E)=0$.

\begin{definition}
We say that a section $S\colon\Gamma(E)\rightarrow\mathcal{U}$ is $\Xi$-local
if $\rho^{S}=\frac{i}{2\pi}S^{\ast}(\Xi)\in\Omega_{\mathrm{loc}}^{1}%
(\Gamma(E))$.
\end{definition}

We show below that the existence of local $\mathcal{G}$-invariant sections
characterizes physical anomaly cancellation and gives an intrinsic
characterization of the condition \ref{Lambda} in the Introduction. By
Proposition \ref{triviality}\ local sections exists if assumption (A2) is satisfied.

\begin{proposition}
\label{localSection}If $S$ is a $\Xi$-local section, then any other section
$S^{\prime}$ is $\Xi$-local if and only if is of the form $S^{\prime}%
=S\cdot\exp(2\pi i\cdot\Lambda)$ for $\Lambda\in\Omega_{\mathrm{loc}}%
^{0}(\Gamma(E))$.
\end{proposition}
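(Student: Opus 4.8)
The plan is to reduce the statement to tracking how the local $1$-form $\rho^{S}$ changes when the section is rescaled, and then to use the cohomological hypotheses (A3) together with Lemma \ref{Lemma} to control locality at the level of $\Lambda$ itself. First I would record the standard description of sections of a topologically trivial bundle: since $\mathcal{U}\rightarrow\Gamma(E)$ is topologically trivial by (A1), any other section can be written $S^{\prime}=S\cdot u$ with $u\colon\Gamma(E)\rightarrow U(1)$; because $\Gamma(E)$ is connected and $H^{1}(\Gamma(E))=0$, Lemma \ref{modZ} lets me write $u=\exp(2\pi i\Lambda)$ for some real $\Lambda\in\Omega^{0}(\Gamma(E))$, unique up to an integer. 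The whole content of the proposition is then the question of when $\Lambda$ may be taken in $\Omega_{\mathrm{loc}}^{0}(\Gamma(E))$. The bridge between the local data of the two sections is Lemma \ref{curvatura} c), giving $\rho^{S^{\prime}}=\rho^{S}-d\Lambda$.

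The ``if'' direction is then immediate: if $\Lambda\in\Omega_{\mathrm{loc}}^{0}(\Gamma(E))$, then $d\Lambda\in\Omega_{\mathrm{loc}}^{1}(\Gamma(E))$ since $d$ preserves locality (as noted in Section \ref{localforms}), and because $\rho^{S}$ is local by the hypothesis that $S$ is $\Xi$-local, the relation $\rho^{S^{\prime}}=\rho^{S}-d\Lambda$ shows $\rho^{S^{\prime}}$ is local, i.e.\ $S^{\prime}$ is $\Xi$-local.

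The ``only if'' direction is where the work lies. Assuming $S^{\prime}$ is $\Xi$-local, both $\rho^{S}$ and $\rho^{S^{\prime}}$ are local, so $d\Lambda=\rho^{S}-\rho^{S^{\prime}}\in\Omega_{\mathrm{loc}}^{1}(\Gamma(E))$ is a closed local $1$-form. A priori $\Lambda$ need not itself be local, and this is precisely the obstruction. It is resolved by assumption (A3), which gives $H_{\mathrm{loc}}^{1}(\Gamma(E))=0$: the closed local form $d\Lambda$ is therefore the differential of some $\Lambda_{0}\in\Omega_{\mathrm{loc}}^{0}(\Gamma(E))$. Then $d(\Lambda-\Lambda_{0})=0$, and since $\Gamma(E)$ is connected, $\Lambda-\Lambda_{0}$ is a real constant; by Lemma \ref{Lemma} constants are local, so $\Lambda=\Lambda_{0}+(\Lambda-\Lambda_{0})\in\Omega_{\mathrm{loc}}^{0}(\Gamma(E))$, which is what we wanted.

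The step I would flag as the main obstacle is exactly the passage from ``$d\Lambda$ is local'' to ``$\Lambda$ is local.'' This is false for a general manifold of sections, and its validity here rests essentially on the vanishing of $H_{\mathrm{loc}}^{1}(\Gamma(E))$ assumed in (A3) (guaranteed concretely by $H^{n+1}(E)=0$), together with the easily overlooked but indispensable point from Lemma \ref{Lemma} that constant functions count as local. Everything else is a direct unwinding of the definitions and of Lemma \ref{curvatura} c).
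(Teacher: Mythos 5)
Your proof is correct and follows essentially the same route as the paper's: the ``if'' direction is the locality of $d\Lambda$ via Lemma \ref{curvatura} c), and the ``only if'' direction uses (A3) to produce a local primitive of the closed local $1$-form $\rho^{S}-\rho^{S^{\prime}}$ and Lemma \ref{Lemma} to absorb the leftover constant. The only cosmetic difference is that you parametrize $S^{\prime}$ by a real-valued $\Lambda$ first and then correct it, whereas the paper first builds the local $\Lambda^{\prime}$ and invokes Proposition \ref{triviality} to identify the sections up to a constant phase; the ingredients and the key step are identical.
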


\begin{proof}
If $S^{\prime}=S\cdot\exp(2\pi i\cdot\Lambda)$ for $\Lambda\in\Omega
_{\mathrm{loc}}^{0}(\Gamma(E))$ then by Proposition \ref{curvatura}\ \ we have
$\rho^{S^{\prime}}=\frac{i}{2\pi}(S^{\prime})^{\ast}(\Xi)=\rho^{S}-d\Lambda
\in\Omega_{\mathrm{loc}}^{1}(\Gamma(E))$.

Conversely, if $S^{\prime}$ is another local section then $\rho^{S^{\prime}%
}\in\Omega_{\mathrm{loc}}^{1}(\Gamma(E))$ and $d\rho^{S^{\prime}}=d\rho
^{S}=\mathrm{curv}(\Xi)$. By assumption (A3) $H_{\mathrm{loc}}^{1}%
(\Gamma(E))=0$ and hence there exists $\Lambda^{\prime}\in\Omega
_{\mathrm{loc}}^{0}(\Gamma(E))$ such that $\rho^{S^{\prime}}=\rho^{S}%
-d\Lambda^{\prime}$. By Propositions \ref{curvatura} and \ref{triviality}%
\ there exists $r\in\mathbb{R}$ such that $S^{\prime}=S\cdot\exp(2\pi
i\cdot\Lambda^{\prime})\exp(2\pi i\cdot r)=S\cdot\exp(2\pi i\cdot
(\Lambda^{\prime}+r))$ and we can take $\Lambda=\Lambda^{\prime}+r\in
\Omega_{\mathrm{loc}}^{0}(\Gamma(E))$ (it is local by Lemma \ref{Lemma}).
\end{proof}

\subsection{Local physical anomalies}

If $\Xi$ is a local connection and $S$ is $\Xi$-local section, then we have
$\mathfrak{a}^{S}(X)=\rho^{S}(X_{N})+\mu^{\Xi}(X)$ and hence $\mathfrak{a}%
^{S}\in\Omega^{1}(\mathrm{Lie}\mathcal{G},\Omega_{\mathrm{loc}}^{0}(N))$.

By Proposition \ref{delatA}\ we have $\partial\mathfrak{a}^{S}=0$%
.\ Furthermore, by Assumption (A3) and Propositions \ref{localSection}\ and
\ref{variacionSeccion} the cohomology class of $\mathfrak{a}^{S}$ on the local
BRST cohomology $H^{1}(\mathrm{Lie}\mathcal{G},\Omega_{\mathrm{loc}}%
^{0}(\Gamma(E)))$ does not depend on the $\Xi$-local section chosen. We denote
this class by $\{\mathfrak{a}^{\mathcal{U}}\}\in H^{1}(\mathrm{Lie}%
\mathcal{G},\Omega_{\mathrm{loc}}^{0}(\Gamma(E)))$.

The following Theorem is an extension of the results\ \cite{anomalies}\ that
shows that our definition of local section determines physical anomaly
cancellation. This is important in order to generalize this result to global anomalies.

\begin{theorem}
\label{ThLocalPhysical}If $\Xi$ is a $\mathcal{G}$-invariant local connection,
then the following conditions are equivalent

P$_{1}$) There exists a $\Xi$-local $\mathcal{G}_{0}$-invariant section of
$\mathcal{U}\rightarrow\Gamma(E)$.

P$_{2}$) $\{\mathfrak{a}^{\mathcal{U}}\}=0$ on the local BRST cohomology
$H^{1}(\mathrm{Lie}\mathcal{G}$, $\Omega_{\mathrm{loc}}^{0}(N))$

P$_{3}$) $\{\mathrm{curv}_{\mathcal{G}_{0}}(\Xi)\}=0$ on $H_{\mathcal{G}%
_{0},\mathrm{loc}}^{2}(\Gamma(E))$.
\end{theorem}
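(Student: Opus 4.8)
The plan is to establish the three-way equivalence by proving a cycle of implications $P_1 \Rightarrow P_3 \Rightarrow P_2 \Rightarrow P_1$, closely mirroring the structure of the proof of Proposition \ref{ThLocalTopological} but upgrading every object to its local counterpart. The essential point throughout is that under assumptions (A1)--(A3) the local forms $\Omega_{\mathrm{loc}}^{\bullet}(\Gamma(E))$ form a subcomplex with $H_{\mathrm{loc}}^{1}(\Gamma(E))=0$, so that the same cohomological arguments that worked for ordinary forms now go through verbatim inside the local complex. I would first recall that for a $\Xi$-local section $S$ we have $\rho^{S}\in\Omega_{\mathrm{loc}}^{1}(\Gamma(E))$ by definition, and that by Lemma \ref{curvatura} the associated data satisfy $\mathfrak{a}^{S}(X)=\mu^{\Xi}(X)+\rho^{S}(X_N)$ together with $\mathrm{curv}(\Xi)=d\rho^{S}$.

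For $P_1 \Rightarrow P_3$, I would take a $\Xi$-local $\mathcal{G}_0$-invariant section $S$. Then $\mathfrak{a}^{S}=0$ since $S$ is $\mathcal{G}_0$-invariant, and $\rho^{S}$ is $\mathcal{G}_0$-invariant and lies in $\Omega_{\mathrm{loc}}^{1}(\Gamma(E))$. By Lemma \ref{curvatura} (the Stora--Zumino descent equations) we get $D\rho^{S}=\mathrm{curv}_{\mathcal{G}_0}(\Xi)$, exhibiting $\mathrm{curv}_{\mathcal{G}_0}(\Xi)$ as $D$ of a local equivariant $1$-form, so its class vanishes in $H_{\mathcal{G}_0,\mathrm{loc}}^{2}(\Gamma(E))$. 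For $P_3 \Rightarrow P_2$, suppose $\mathrm{curv}_{\mathcal{G}_0}(\Xi)=D\beta$ for $\beta\in\Omega_{\mathcal{G}_0,\mathrm{loc}}^{1}(\Gamma(E))$; splitting the equivariant equation into its components gives $d\beta=\mathrm{curv}(\Xi)$ and $\iota_{X_N}\beta=-\mu^{\Xi}(X)$ with $\beta$ local. Here is where I must be careful: I need a section $S$ realizing $\rho^{S}=\beta$ that is genuinely $\Xi$-local. Proposition \ref{triviality} produces a section with $\rho^{S}=\beta$, and since $\beta$ is local this $S$ is $\Xi$-local; then $\mathfrak{a}^{S}(X)=\mu^{\Xi}(X)+\beta(X_N)=0$, so $\{\mathfrak{a}^{\mathcal{U}}\}=0$. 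Finally $P_2 \Rightarrow P_1$: if some $\Xi$-local section has $\mathfrak{a}^{S}(X)=L_X\Lambda$ for a \emph{local} $\Lambda\in\Omega_{\mathrm{loc}}^{0}(\Gamma(E))$, then $S'=S\cdot\exp(2\pi i\Lambda)$ is again $\Xi$-local by Proposition \ref{localSection}, and by Lemma \ref{variacionSeccion} satisfies $\mathfrak{a}^{S'}=0$, hence is $\mathcal{G}_0$-invariant.

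The main obstacle, and the place where assumption (A3) does the real work, is ensuring that the exactness witnesses stay inside the local complex rather than merely inside $\Omega^{\bullet}(\Gamma(E))$. In the non-local Proposition \ref{ThLocalTopological} one freely uses $H^{1}(\Gamma(E))=0$ to solve $\rho=\rho^{S_0}-d\Lambda$; here the analogous step requires $H_{\mathrm{loc}}^{1}(\Gamma(E))=0$ so that when two local $1$-forms are ordinarily cohomologous they are \emph{locally} cohomologous, i.e. differ by $d$ of a local $0$-form. Likewise, in $P_2 \Rightarrow P_1$ the cohomological triviality of $\{\mathfrak{a}^{\mathcal{U}}\}$ in $H^{1}(\mathrm{Lie}\mathcal{G},\Omega_{\mathrm{loc}}^{0}(\Gamma(E)))$ must yield a \emph{local} primitive $\Lambda$, which is exactly what the local BRST cohomology statement guarantees. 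I would emphasize that every appeal to Proposition \ref{triviality} must be accompanied by the remark that a local $\rho$ yields a $\Xi$-local section, so that locality propagates through the whole cycle; this is the single delicate bookkeeping point that distinguishes this theorem from its topological predecessor.
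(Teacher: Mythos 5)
Your proposal is correct and follows essentially the same route as the paper: the identical cycle P$_{1}\Rightarrow$P$_{3}\Rightarrow$P$_{2}\Rightarrow$P$_{1}$, using Lemma \ref{curvatura}, Proposition \ref{triviality} to realize a local $\beta$ as $\rho^{S}$ for a $\Xi$-local section, and Lemma \ref{variacionSeccion} together with Proposition \ref{localSection} in the last step. Your added remarks on where (A3) and locality bookkeeping enter are accurate and consistent with the paper's argument.
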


\begin{proof}
P$_{1}$)$\Rightarrow$P$_{3}$) If $S$ is a $\Xi$-local $\mathcal{G}_{0}%
$-invariant section $\rho^{S}\in\Omega_{\mathrm{loc}}^{1}(\Gamma
(E))^{\mathcal{G}_{0}}$ and $d\rho^{S}=\mathrm{curv}(\Xi)$ and $\mu^{\Xi
}(X)=\mathfrak{a}^{S}-\rho(X_{N})=-\rho(X_{N})$, as we have $\mathfrak{a}%
^{S}=0$ because $S$ is $\mathcal{G}_{0}$-invariant.

P$_{3}$)$\Rightarrow$P$_{2}$) Assume that there exists $\rho\in\Omega
^{1}(\Gamma(E))^{\mathcal{G}_{0}}$ such that $d\rho=\mathrm{curv}(\Xi)$ and
$\iota_{X_{N}}\rho=-\mu^{\Xi}(X)$. By Proposition \ref{triviality} there
exists\ a $\Xi$-local section $S$ such that $\rho^{S}=\rho$ and we have
$\mathfrak{a}^{S}(X)=\rho(X_{N})+\mu^{\Xi}(X)=0$.

P$_{2}$)$\Rightarrow$P$_{1}$) If $S\ $is a $\Xi$-local section and
$\mathfrak{a}^{S}(X)=L_{X}\Lambda$ for $\Lambda\in\Omega_{\mathrm{loc}}%
^{0}(\Gamma(E))$

then $S^{\prime}=S\cdot\exp(2\pi i\Lambda)$ is also local and by Proposition
\ref{variacionSeccion}\ we have\ $\mathfrak{a}^{S^{\prime}}(X)=\mathfrak{a}%
^{S}(X)-L_{X}\Lambda=0$ and $S^{\prime}$ is $\mathcal{G}_{0}$-invariant.
\end{proof}

\subsection{Global physical anomalies}

In this section we generalize the results of the previous section to obtain
necessary and sufficient conditions for global anomaly cancellation.

\begin{proposition}
If $S$ is a $\Xi$-local section then $\alpha^{S}\in\Omega^{1}(\mathcal{G}$,
$\Omega_{\mathrm{loc}}^{0}(\Gamma(E)\mathbb{)}/\mathbb{Z})$.
\end{proposition}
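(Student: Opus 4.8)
The plan is to show that for each fixed $\phi\in\mathcal{G}$ the function $\alpha^{S}_{\phi}\colon\Gamma(E)\to\mathbb{R}/\mathbb{Z}$ admits a \emph{local} real lift modulo $\mathbb{Z}$; since the cocycle condition is already supplied by Lemma \ref{cocyclo} a), this is precisely the assertion that $\alpha^{S}$ is a group $1$-cochain with values in $\Omega_{\mathrm{loc}}^{0}(\Gamma(E))/\mathbb{Z}$.

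First I would compute the differential of $\alpha^{S}_{\phi}$. By Proposition \ref{localityAlfa} d) we have $\delta\alpha^{S}_{\phi}=\phi^{\ast}\rho^{S}-\rho^{S}$. Since $S$ is $\Xi$-local, $\rho^{S}\in\Omega_{\mathrm{loc}}^{1}(\Gamma(E))$, and because $\mathcal{G}$ acts on $E$ through $\mathrm{Proj}^{+}E$, the induced action on $\Gamma(E)$ preserves local forms: this is exactly the naturality of the integration operator $\Im$ recalled in Section \ref{localforms}, under which $\Omega_{\mathrm{loc}}^{\bullet}(\Gamma(E))$ is a $\mathcal{G}$-invariant subspace. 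Hence $\phi^{\ast}\rho^{S}\in\Omega_{\mathrm{loc}}^{1}(\Gamma(E))$ and therefore $\delta\alpha^{S}_{\phi}=\phi^{\ast}\rho^{S}-\rho^{S}\in\Omega_{\mathrm{loc}}^{1}(\Gamma(E))$.

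Next, by Lemma \ref{modZ} (applicable since $\Gamma(E)$ is connected and $H^{1}(\Gamma(E))=0$ by (A1)) I can choose $A_{\phi}\in\Omega^{0}(\Gamma(E))$ with $\alpha^{S}_{\phi}=A_{\phi}\bmod\mathbb{Z}$ and $dA_{\phi}=\delta\alpha^{S}_{\phi}$. The task is then to upgrade $A_{\phi}$ to a local functional. The form $dA_{\phi}$ is a closed local $1$-form, so assumption (A3), namely $H_{\mathrm{loc}}^{1}(\Gamma(E))=0$, supplies a local $0$-form $B_{\phi}\in\Omega_{\mathrm{loc}}^{0}(\Gamma(E))$ with $dB_{\phi}=dA_{\phi}$. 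Then $A_{\phi}-B_{\phi}$ is closed, hence constant on the connected manifold $\Gamma(E)$; since constants are local by Lemma \ref{Lemma}, it follows that $A_{\phi}=B_{\phi}+(A_{\phi}-B_{\phi})\in\Omega_{\mathrm{loc}}^{0}(\Gamma(E))$. Consequently $\alpha^{S}_{\phi}=A_{\phi}\bmod\mathbb{Z}$ lies in $\Omega_{\mathrm{loc}}^{0}(\Gamma(E))/\mathbb{Z}$, which is the desired conclusion.

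I expect the only genuinely delicate point to be the claim that $\phi^{\ast}\rho^{S}$ is again local, i.e.\ that $\Omega_{\mathrm{loc}}^{\bullet}(\Gamma(E))$ is stable under the $\mathcal{G}$-action. This rests on the equivariance of $\Im$ with respect to the prolongation to $J^{\infty}E$ of projectable, orientation-preserving diffeomorphisms of $E$, the orientation hypothesis being exactly what makes the change of variables in $\int_{M}$ harmless. Everything else is a routine assembly of (A1), (A3), Lemma \ref{modZ} and Lemma \ref{Lemma}.
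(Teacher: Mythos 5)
Your proof is correct and follows essentially the same route as the paper's: both identify $\phi^{\ast}\rho^{S}-\rho^{S}$ as the differential of $\alpha_{\phi}^{S}$, invoke (A3) to produce a local primitive $\sigma_{\phi}$ (your $B_{\phi}$), and absorb the remaining constant via Lemma \ref{Lemma}. The only cosmetic difference is that you pass through Proposition \ref{localityAlfa} d) and Lemma \ref{modZ} where the paper uses the curve-integral identity of Proposition \ref{localityAlfa} c); your explicit remark that $\phi^{\ast}$ preserves $\Omega_{\mathrm{loc}}^{\bullet}(\Gamma(E))$ is a point the paper leaves tacit, and it is correctly justified.
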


\begin{proof}
Let $\phi\in\mathcal{G}$ and fix $s_{0}\in\Gamma(E)$. We have $d(\phi^{\ast
}\rho^{S}-\rho^{S})=\phi^{\ast}\mathrm{curv}(\Xi)-\mathrm{curv}(\Xi)=0$, and
as $H_{\mathrm{loc}}^{1}(\Gamma(E))=0$ there exists $\sigma_{\phi}\in
\Omega_{\mathrm{loc}}^{0}(\Gamma(E))$ such that $\phi^{\ast}\rho^{S}-\rho
^{S}=d\sigma_{\phi}$. By Proposition \ref{localityAlfa} if $\gamma$ is a curve
joining $s_{0}$ and $s$ we have $\alpha_{\phi}^{S}(s)=\alpha_{\phi}^{S}%
(s_{0})+\int_{\gamma}(\phi^{\ast}\rho^{S}-\rho^{S})=\alpha_{\phi}^{S}%
(s_{0})+\int_{\gamma}d\sigma_{\phi}=\alpha_{\phi}^{S}(s_{0})+\sigma_{\phi
}(s)-\sigma_{\phi}(s_{0})$. The result follows because the first and third
terms are local by Lemma \ref{Lemma}.
\end{proof}

We know that $\alpha^{S}$ satisfies the cocycle condition. And by Assumption
(A3) and Propositions \ref{localSection}\ and \ref{curvatura} c)\ the
cohomology class of $\alpha^{S}$ on $H^{1}(\mathcal{G},\Omega_{\mathrm{loc}%
}^{0}(\Gamma(E)\mathbb{)}/\mathbb{Z})$ does not depend on the $\Xi$-local
section chosen. We denote this class by $\{\alpha^{\mathcal{U}}\}\in
H^{1}(\mathcal{G},\Omega_{\mathrm{loc}}^{0}(\Gamma(E)\mathbb{)}/\mathbb{Z})$,
and we have the following

\begin{theorem}
\label{ThGlobalPhysical}If $\Xi$ is a $\mathcal{G}$-invariant local
connection, then the following conditions are equivalent

G$_{1}$) There exists a $\Xi$-local $\mathcal{G}$-equivariant section of
$\mathcal{U}\rightarrow N$.

G$_{2}$) $\{\alpha^{\mathcal{U}}\}=0$ on $H^{1}(\mathcal{G}$, $\Omega
_{\mathrm{loc}}^{0}(\Gamma(E)\mathbb{)}/\mathbb{Z})$.

G$_{3}$) There exists $\beta\in\Omega_{\mathcal{G},\mathrm{loc}}^{1}%
(\Gamma(E))$ such that $\mathrm{hol}_{\phi}^{\Xi}(\gamma)=\int_{\gamma}\beta$
for any $\phi\in\mathcal{G}$, and $\gamma\in\mathcal{C}^{\phi}$.
\end{theorem}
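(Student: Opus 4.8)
The plan is to mirror the proof of Theorem \ref{ThGlobalTopological}, carrying the locality of every object through each step by means of assumptions (A2), (A3) and Proposition \ref{localSection}. I would treat G$_1\Leftrightarrow$G$_2$ as a local refinement of Proposition \ref{SectionH1}. For G$_1\Rightarrow$G$_2$, a $\Xi$-local $\mathcal{G}$-equivariant section $S$ has $\alpha^S=0$, so the class $\{\alpha^{\mathcal{U}}\}$, which is represented by the cocycle $\alpha^S$ of any $\Xi$-local section, vanishes. For G$_2\Rightarrow$G$_1$, the vanishing of $\{\alpha^{\mathcal{U}}\}$ in $H^1(\mathcal{G},\Omega_{\mathrm{loc}}^0(\Gamma(E))/\mathbb{Z})$ says that for a chosen $\Xi$-local section $S$ we have $\alpha_\phi^S=\phi_N^\ast\theta-\theta$ with $\theta\in\Omega_{\mathrm{loc}}^0(\Gamma(E))$; setting $S'=S\cdot\exp(2\pi i\theta)$, Lemma \ref{cocyclo} b) gives $\alpha^{S'}=0$, and Proposition \ref{localSection} gives that $S'$ is $\Xi$-local since $\theta$ is local. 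Thus $S'$ is the required $\Xi$-local $\mathcal{G}$-equivariant section.

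For G$_1\Rightarrow$G$_3$ I would take $\beta=\rho^S$ for a $\Xi$-local $\mathcal{G}$-equivariant section $S$. Locality of $S$ gives $\beta\in\Omega_{\mathrm{loc}}^1(\Gamma(E))$, and $\mathcal{G}$-equivariance of $S$ together with $\mathcal{G}$-invariance of $\Xi$ gives that $\beta$ is $\mathcal{G}$-invariant; since $\Omega_{\mathcal{G},\mathrm{loc}}^1(\Gamma(E))=(\Omega_{\mathrm{loc}}^1(\Gamma(E)))^{\mathcal{G}}$, this places $\beta$ in $\Omega_{\mathcal{G},\mathrm{loc}}^1(\Gamma(E))$. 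Then $\alpha^S=0$ and Proposition \ref{loghol} yield $\mathrm{hol}_\phi^\Xi(\gamma)=\int_\gamma\rho^S=\int_\gamma\beta$ for every $\phi\in\mathcal{G}$ and $\gamma\in\mathcal{C}^\phi$.

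The substance of the argument is G$_3\Rightarrow$G$_1$, which I would carry out in two stages exactly as in Theorem \ref{ThGlobalTopological}. First, for an arbitrary section $S$ I would prove $d\beta=\mathrm{curv}(\Xi)$: for a loop $\gamma\in\mathcal{C}_x^{1_{\mathcal{G}}}$ bounding a disc $D$, Stokes' theorem combined with the hypothesis $\int_\gamma\beta=\mathrm{hol}_{1_{\mathcal{G}}}^\Xi(\gamma)$ and Proposition \ref{loghol} gives $\int_D(\mathrm{curv}(\Xi)-d\beta)=\int_\gamma\rho^S-\mathrm{hol}_{1_{\mathcal{G}}}^\Xi(\gamma)=\alpha_{1_{\mathcal{G}}}^S(x)=0$, so $\mathrm{curv}(\Xi)=d\beta$. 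Second, since $\beta\in\Omega_{\mathrm{loc}}^1(\Gamma(E))$ and $d\beta=\mathrm{curv}(\Xi)$, Proposition \ref{triviality} produces a section $S'$ with $\rho^{S'}=\beta$, and because $\rho^{S'}=\beta$ is local this $S'$ is automatically $\Xi$-local. Finally Proposition \ref{loghol} and the holonomy hypothesis give $\alpha_\phi^{S'}(x)=\int_\gamma\rho^{S'}-\mathrm{hol}_\phi^\Xi(\gamma)=\int_\gamma\beta-\int_\gamma\beta=0$ for all $\phi\in\mathcal{G}$, so $S'$ is $\mathcal{G}$-equivariant, completing the implication.

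The only point that is genuinely more delicate than in the topological Theorem \ref{ThGlobalTopological} is the bookkeeping that keeps each constructed object local, and I expect the proof to hinge on three facts that make this automatic: Proposition \ref{triviality} returns a section whose $\rho^{S'}$ equals $\beta$ \emph{exactly}, so locality of $\beta$ transfers verbatim to $\Xi$-locality of $S'$; the vanishing $H_{\mathrm{loc}}^1(\Gamma(E))=0$ of assumption (A3) forces the coboundary $\theta$ used in G$_2\Rightarrow$G$_1$ to be local; and constant functions are local by Lemma \ref{Lemma}, so the real constants appearing in Propositions \ref{triviality} and \ref{localSection} stay within $\Omega_{\mathrm{loc}}^0(\Gamma(E))$. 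Unlike the topological case, no separate verification of $D\beta=\mathrm{curv}_{\mathcal{G}}(\Xi)$ is required here, since the holonomy condition in G$_3$ already pins down the equivariant section through $\alpha^{S'}=0$.
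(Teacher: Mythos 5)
Your proposal is correct and uses essentially the same ingredients as the paper (Propositions \ref{loghol}, \ref{triviality}, \ref{localSection}, Lemmas \ref{cocyclo} and \ref{Lemma}, and the Stokes argument from Theorem \ref{ThGlobalTopological}); the only difference is organizational, since you prove G$_1\Leftrightarrow$G$_2$ and G$_1\Leftrightarrow$G$_3$ separately while the paper runs the cycle G$_1\Rightarrow$G$_2\Rightarrow$G$_3\Rightarrow$G$_1$, constructing $\beta=\rho^{S}+d\theta$ directly from a non-equivariant local section in the step G$_2\Rightarrow$G$_3$. Both routes carry locality through the same three checkpoints you identify, so the proofs are interchangeable.
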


\begin{proof}
G$_{1}$)$\Rightarrow$G$_{2}$) If $S\colon\Gamma(E)\rightarrow\mathcal{U}$ is a
$\Xi$-local $\mathcal{G}$-equivariant section then $\alpha^{S}=0$ and hence
$[\alpha^{\mathcal{U}}]=0$ on $H^{1}(\mathcal{G}$, $\Omega_{\mathrm{loc}}%
^{0}(\Gamma(E)\mathbb{)}/\mathbb{Z})$.

G$_{2}$)$\Rightarrow$G$_{3}$) If $\{\alpha^{\mathcal{U}}\}=0$ on
$H_{\mathrm{loc}}^{1}(\mathcal{G},\Omega_{\mathrm{loc}}^{0}(M,\mathbb{R}%
)/\mathbb{\mathbb{Z})}$ we chose a $\Xi$-local section $S\colon N\rightarrow
\mathcal{U}$ and we have $\alpha^{S}=\phi_{N}^{\ast}\theta-\theta$ for
$\theta\in\Omega_{\mathrm{loc}}^{0}(M,\mathbb{R)}$. Hence $\mathrm{hol}_{\phi
}^{\Xi}(\gamma)=\int_{\gamma}\rho^{S}+\phi_{N}^{\ast}\theta-\theta
=\int_{\gamma}(\rho^{S}+d\theta)$. We define $\beta=\rho^{S}+d\theta$ and
using Proposition \ref{localityAlfa}\ we have $\phi^{\ast}\beta=\phi^{\ast
}\rho^{S}+\phi^{\ast}d\theta=\phi^{\ast}\rho^{S}+d\alpha^{S}+d\theta=\rho
^{S}+d\theta=\beta$, and hence $\beta$ is $\mathcal{G}$-invariant.

G$_{3}$)$\Rightarrow$G$_{1}$) If $\beta\in\Omega_{\mathcal{G},\mathrm{loc}%
}^{1}(\Gamma(E))$ satisfies $\mathrm{hol}_{\phi}^{\Xi}(\gamma)=\int_{\gamma
}\beta$ we have $\mathrm{curv}(\Xi)=d\beta\in\Omega_{\mathrm{loc}}^{2}%
(\Gamma(E))$, and by Proposition \ref{triviality}\ there exists a section $S$
such that $\rho^{S}=\beta\in\Omega_{\mathcal{G},\mathrm{loc}}^{1}(\Gamma(E))$.
By Proposition \ref{loghol}\ we have $\alpha^{S}=0$, and hence $S$ is
$\mathcal{G}$-equivariant. Finally, $S$ is $\Xi$-local as $\mathrm{curv}%
(\Xi)=d\beta\in\Omega_{\mathrm{loc}}^{2}(\Gamma(E))$\ and we have $\mu^{\Xi
}(X)=\rho^{S}(X_{N})+\mathfrak{a}^{S}(X)=\rho^{S}(X_{N})\in\Omega
_{\mathrm{loc}}^{0}(\Gamma(E))$, where we have used that $\mathfrak{a}^{S}=0$
as $\alpha^{S}=0$. We conclude that $\mathrm{curv}_{\mathcal{G}}(\Xi)\in
\Omega_{\mathcal{G},\mathrm{loc}}^{2}(\Gamma(E))$.
\end{proof}

By definition any $\mathcal{G}$-flat connection $\Xi$ is local, and any
section $S$ such that $\rho^{S}=0$ is $\Xi$-local. If we define
$K_{\mathrm{loc}}^{\mathcal{G}}(\Gamma(E))=k(H_{\mathcal{G},\mathrm{loc}}%
^{1}(\Gamma(E)))\subset\mathrm{Hom}(\mathcal{G}/\mathcal{G}_{0},\mathbb{R}%
/\mathbb{Z})$ then we have the following

\begin{proposition}
\label{flatLocal}If $\Xi$ is a $\mathcal{G}$-flat connection on $\mathcal{U}%
\rightarrow\Gamma(E)$, then there exists a $\Xi$-local and $\mathcal{G}%
$-equivariant section if and only if $\kappa^{\Xi}\in K_{\mathrm{loc}%
}^{\mathcal{G}}(\Gamma(E))$.
\end{proposition}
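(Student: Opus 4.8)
The plan is to reduce the statement to Theorem \ref{ThGlobalPhysical}, whose condition G$_{3}$ already characterizes the existence of a $\Xi$-local $\mathcal{G}$-equivariant section. Since $\Xi$ is $\mathcal{G}$-flat it is in particular a local connection, and assumption (A2) holds trivially with $\rho=0$ because $\mathrm{curv}(\Xi)=0$; thus Theorem \ref{ThGlobalPhysical} applies. The remaining task is to show that G$_{3}$ --- the existence of \emph{some} $\beta\in\Omega_{\mathcal{G},\mathrm{loc}}^{1}(\Gamma(E))$ with $\mathrm{hol}_{\phi}^{\Xi}(\gamma)=\int_{\gamma}\beta$ --- is equivalent, when $\Xi$ is flat, to the membership $\kappa^{\Xi}\in K_{\mathrm{loc}}^{\mathcal{G}}(\Gamma(E))$. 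The latter means there is a $D$-closed local equivariant form $\beta\in\Omega_{\mathcal{G},\mathrm{loc}}^{1}(\Gamma(E))$ whose class in $H_{\mathcal{G},\mathrm{loc}}^{1}(\Gamma(E))$ satisfies $k_{\phi}^{\beta}=\int_{\gamma}\beta=\kappa_{\phi}^{\Xi}=\mathrm{hol}_{\phi}^{\Xi}(\gamma)$. So the two conditions differ only in whether $\beta$ is required to be $D$-closed.

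One direction is immediate: if $\kappa^{\Xi}\in K_{\mathrm{loc}}^{\mathcal{G}}(\Gamma(E))$, the representing $D$-closed $\beta$ is in particular an element of $\Omega_{\mathcal{G},\mathrm{loc}}^{1}(\Gamma(E))$ realizing $\mathrm{hol}_{\phi}^{\Xi}(\gamma)=\int_{\gamma}\beta$, so G$_{3}$ holds and Theorem \ref{ThGlobalPhysical} produces the desired $\Xi$-local $\mathcal{G}$-equivariant section.

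For the converse I would start from a $\Xi$-local $\mathcal{G}$-equivariant section $S$ (furnished by G$_{1}$) and set $\beta=\rho^{S}$. Equivariance gives $\alpha^{S}=0$, hence by Proposition \ref{localityAlfa}d we have $\phi^{\ast}\rho^{S}-\rho^{S}=\delta\alpha_{\phi}^{S}=0$, so $\beta$ is $\mathcal{G}$-invariant, and $\Xi$-locality gives $\beta\in\Omega_{\mathrm{loc}}^{1}(\Gamma(E))$; thus $\beta\in\Omega_{\mathcal{G},\mathrm{loc}}^{1}(\Gamma(E))$. Proposition \ref{loghol} together with $\alpha^{S}=0$ then yields $\int_{\gamma}\beta=\mathrm{hol}_{\phi}^{\Xi}(\gamma)=\kappa_{\phi}^{\Xi}$. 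The crux is to verify $D\beta=0$, i.e. $d\beta=0$ and $\iota_{X_{N}}\beta=0$, so that $\beta$ descends to a class in $H_{\mathcal{G},\mathrm{loc}}^{1}(\Gamma(E))$ and $\kappa^{\Xi}=k^{\beta}$ lies in the image $K_{\mathrm{loc}}^{\mathcal{G}}(\Gamma(E))$.

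This last verification is the main obstacle, and it is exactly where flatness enters. By Lemma \ref{curvatura}a, $d\beta=d\rho^{S}=\mathrm{curv}(\Xi)=0$ since $\Xi$ is flat. For the horizontal condition, equivariance of $S$ forces $\mathfrak{a}^{S}=0$, so Lemma \ref{curvatura}b gives $\mu^{\Xi}(X)=-\rho^{S}(X_{N})$; flatness means $\mathrm{curv}_{\mathcal{G}}(\Xi)=\mathrm{curv}(\Xi)+\mu^{\Xi}=0$, forcing $\mu^{\Xi}=0$ and hence $\iota_{X_{N}}\beta=\rho^{S}(X_{N})=0$. Together these give $D\beta=0$, completing the converse. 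I expect no technical difficulty beyond keeping track of these identities; the conceptual point worth stressing is that for a general local connection G$_{3}$ only supplies a local invariant $\beta$ with $d\beta=\mathrm{curv}(\Xi)$, whereas flatness is precisely what upgrades this $\beta$ to a $D$-closed representative, making the holonomy character land in the local image $K_{\mathrm{loc}}^{\mathcal{G}}(\Gamma(E))$ rather than merely in $\mathrm{Hom}(\mathcal{G}/\mathcal{G}_{0},\mathbb{R}/\mathbb{Z})$.
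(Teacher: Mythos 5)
Your argument is correct, and it follows exactly the route the paper intends (the paper states Proposition \ref{flatLocal} without proof, immediately after Theorem \ref{ThGlobalPhysical} and the remark that a $\mathcal{G}$-flat connection is local and a section with $\rho^{S}=0$ is $\Xi$-local): you reduce to condition G$_{3}$, use that flatness makes $\mathrm{hol}_{\phi}^{\Xi}$ constant on $\mathcal{C}^{\phi}$ so that G$_{3}$ becomes the statement $\kappa_{\phi}^{\Xi}=\int_{\gamma}\beta$, and correctly verify via Lemma \ref{curvatura} and $\mathrm{curv}_{\mathcal{G}}(\Xi)=0$ that the $\beta=\rho^{S}$ produced by an equivariant local section is $D$-closed, hence represents a class in $H_{\mathcal{G},\mathrm{loc}}^{1}(\Gamma(E))$ mapping to $\kappa^{\Xi}$ under $k$. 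No gaps.
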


\section{Concluding remarks\label{CR}}

We have obtained necessary and sufficient conditions for physical anomaly
cancellation in terms of the existence of a local $\mathcal{G}$-invariant$\ 1$%
-form $\beta$\ satisfying certain conditions. The advantage of this approach
with respect to the BRST cohomology is that $\Omega_{\mathrm{loc}}^{1}%
(\Gamma(E))^{\mathcal{G}}$ can be completely determined in terms of the
variational bicomplex of $J^{\infty}E$ and its cohomology can be related to
Gelfand-Fuks cohomology of formal vector fields (see \cite{VB}). This
technique is applied in \cite{LocUni} to study physical gravitational
anomalies. It is proved that if the equivariant holonomy does not vanish, it
is possible to cancel the physical gravitational anomaly only in dimensions
$n=3\operatorname{mod}4$ and by means of a Chern-Simons counterterm.

The problem of global anomaly cancellation in dimension $3$ has been recently
reanalyzed by\ Witten in \cite{Witten2016} and \cite{WittenFermionic}. In
these papers it is conjectured that to cancel the anomaly it is not sufficient
with the existence of a invariant partition function. To have an anomaly free
theory consistent with the principles of unitarity, locality and cutting and
pasting, the phases of the partition functions for different $3$-manifolds
should also be fixed. We hope\ that our geometric characterization of anomaly
cancellation can be used to study this problem.

\end{document}